\documentclass[cleveref,english,autoref,thm-restate,numberwithinsect,nameinlink]{lipics-v2021}

\hideLIPIcs

\nolinenumbers

\usepackage{hyperref}
\usepackage{amsmath}
\usepackage{amsthm}
\usepackage{amssymb}
\usepackage{enumerate}
\usepackage{thmtools}
\usepackage{mathtools}
\usepackage{graphicx}
\usepackage{multirow}
\usepackage{xcolor}
\usepackage{tikz}
\usetikzlibrary{decorations.pathmorphing, decorations.pathreplacing, decorations.shapes,shapes,positioning, fit,calligraphy}

\usepackage{framed}

\usepackage[]{todonotes}

\usepackage{nicefrac}
\usepackage{tabularx}
\usepackage{algorithm2e}
\usepackage{etoolbox}

\newtheorem{rrule}{Rule}

\definecolor{myred}{rgb}{1,0.6,0.6}
\definecolor{myblue}{rgb}{0,0,1}

\definecolor{mycolor1}{RGB}{100, 143, 255}
\definecolor{mycolor2}{RGB}{26, 255, 26}
\definecolor{mycolor3}{RGB}{220, 38, 127}
\definecolor{mycolor4}{RGB}{254, 97, 0}
\definecolor{mycolor5}{RGB}{255, 176, 0}

\newcommand{\kommentar}[1]{}
\newcommand{\Oh}{\ensuremath{\mathcal{O}}}

\newcommand{\probname}[1]{{\normalfont\textsc{#1}}}
\newcommand{\prob}[3]{
	\begin{nolinenumbers}
		\begin{framed}
			\noindent \probname{#1} \\
			\textbf{Input:} #2 \\
			\textbf{Question:} #3
		\end{framed}
	\end{nolinenumbers}
}

\newcommand{\PROB}[1]{{{\normalfont\textsc{#1}}}\xspace}

\newcommand{\CL}{\PROB{Clique}}
\newcommand{\CC}{\PROB{Correlation Clustering}}
\newcommand{\MC}{\PROB{Edge Multicut}}

\newcommand{\TSAT}{\PROB{3-SAT}}

\newcommand{\NP}{{\ensuremath{\mathrm{NP}}}\xspace}

\newcommand{\coNP}{{\ensuremath{\mathrm{coNP}}}\xspace}
\newcommand{\coNPpoly}{{\ensuremath{\mathrm{\coNP/poly}}}\xspace}

\newcommand{\bth}{$\NP\subseteq\coNPpoly$}

\newcommand{\N}{\ensuremath{\mathbb{N}}}
\newcommand{\Z}{\mathbb{Z}_{\ge 0}}

\newcommand{\mc}{\mathcal{C}}

\newcommand{\true}{\ensuremath{\mathtt{true}}}
\newcommand{\false}{\ensuremath{\mathtt{false}}}

\DeclareMathOperator{\cost}{cost}

\EventEditors{Jan Goedgebeur and Pawe{\l} Rz\k{a}\.{z}ewski}
\EventNoEds{2}
\EventLongTitle{52nd International Workshop on Graph-Theoretic Concepts in Computer Science (WG 2026)}
\EventShortTitle{WG 2026}
\EventAcronym{WG}
\EventYear{2026}
\EventDate{June 2--4, 2026}
\EventLocation{Kortrijk, Belgium}
\EventLogo{}
\SeriesVolume{376}
\ArticleNo{32}

\Copyright{Jaroslav Garvardt and Christian Komusiewicz}

\ccsdesc[500]{Theory of computation~Parameterized complexity and exact algorithms}
\ccsdesc[500]{Theory of computation~Graph algorithms analysis}
\keywords{Graph-based Data Clustering, Cluster Editing, Kernelization}

\author{Jaroslav Garvardt}{Institute of Computer Science, University of Jena, Germany}{jaroslav.garvardt@uni-jena.de}{https://orcid.org/0000-0002-8762-8567}{Supported by the Carl Zeiss Foundation, Germany, within the project ‘‘Interactive Inference’’.}

\author{Christian Komusiewicz}{Institute of Computer Science, University of Jena, Germany}{c.komusiewicz@uni-jena.de}{https://orcid.org/0000-0003-0829-7032}{}

\authorrunning{J.~Garvardt and C.~Komusiewicz} 

\title{Clustering with Locally Bounded Ignorance}

\begin{document}

\maketitle
\begin{abstract}
  In \textsc{Correlation Clustering}, the input is a
  graph~$G=(V,E)$ with weight function~$\omega: {V \choose 2}\to \Z$
  and the task is to partition the vertex set into clusters such that
  the total weight of edges between clusters and missing edges
   inside clusters is minimized. Due to close connections
  between~\textsc{Correlation Clustering} and \textsc{Edge Multicut},
  deciding whether there is a partition with total cost at most~$k$ is
  FPT with respect to~$k$ but a polynomial kernel is presumably
  impossible. We study the influence of the structure of the fuzzy
  edge graph, that is, the graph induced by the weight-0 edges, on the
  problem complexity. We show in particular that \textsc{Correlation
    Clustering} admits a polynomial problem kernel when parameterized
  by~$k+d$, where~$d$ is the degeneracy of the fuzzy edge graph, and when
  parameterized by~$k+c$, where~$c$ is the closure of the fuzzy edge
  graph. We complement these positive results by showing hardness for
  several settings where the graph induced by the edges and nonedges has very restricted structure.
\end{abstract}

\newpage

\section{Introduction}
Clustering is the task of grouping similar objects together. A crucial tool for formalizing this task are similarity graphs where the objects are vertices and we draw an edge between two vertices if and only if these objects are similar. Clustering now corresponds to computing a partition of the vertex sets into clusters such that there are few edges between different clusters and many edges inside the clusters. There are different concrete formulations of this problem, with \textsc{Cluster Editing} being one of the most prominent ones~\cite{BB13}. Here, we aim to minimize the total number of misclassified vertex pairs which are defined as edges between different clusters and missing edges inside clusters. This can also be viewed as minimizing the number~$k$ of edge insertions and deletions to transform the input graph into a cluster graph, that is, a graph whose connected components are all cliques. These cliques correspond to the output clusters and the number~$k$ of necessary modifications is exactly the number of misclassifications. \textsc{Cluster Editing} is NP-hard~\cite{KM86,SST04} and cannot be solved in subexponential time under the ETH~\cite{KU12}. This hardness motivates the study from the viewpoint of parameterized algorithms with most studies focusing on FPT-algorithms or kernelizations for the parameter~$k$ and related parameters~\cite{Boeck12,GGHN05,Guo09,KU11,KU12,BFK18}.

In many applications, the similarity data is more precise than merely a binary predicate that distinguishes between similar and dissimilar vertex pairs. In particular, the edges and nonedges may be assigned a strictly positive integer weight, with large weights corresponding to a strong evidence for similarity and dissimilarity, respectively. This problem, known as \textsc{Weighted Cluster Editing}, also has FPT-algorithms and kernelizations for the parameter~$k$~\cite{BBBT09,CC12} but it is harder than \textsc{Cluster Editing} when the input is restricted to special graph classes~\cite{IKP25,GKM26}.

While being more realistic, \textsc{Weighted Cluster Editing} still assumes full knowledge in the sense that each vertex pair is either deemed to be similar or dissimilar. To address that for some vertex pairs there may be no information, one may assign weight 0 to them, which means that they can be ignored when computing the number of misclassifications or alternatively, that they can be modified for free. This most general version of the problem is known as \textsc{Correlation Clustering}~\cite{BBC04,DemaineEFI06}.

The additional freedom of ignoring some vertex pairs comes at an algorithmic cost, however: \textsc{Correlation Clustering} is equivalent to \textsc{Edge Multicut} in the sense that there are easy reductions from \textsc{Correlation Clustering} to \textsc{Edge Multicut} and vice versa that preserve the cost~$k$ of the optimal solution~\cite{DemaineEFI06}. With the parameterized complexity status of \textsc{Edge Multicut} being a major open problem at the time, this prompted Bodlaender et al.~\cite{BodlaenderFHMPR10} to investigate a combined parameter that takes the structure of the weight-0 edges, called \emph{fuzzy} edges, into account. The idea was to consider the graph with vertex set~$V$ containing only the fuzzy edges and to use the size of a vertex cover~$r$ of this graph as an additional parameter. Intuitively, the size of the vertex cover is a measure of the amount of missing information. Bodlaender et al. showed that \textsc{Correlation Clustering} admits a kernel with~$\Oh(k^2+r)$~vertices~\cite{BodlaenderFHMPR10}. Subsequently, it was shown that \textsc{Edge Multicut} and with it \textsc{Correlation Clustering} are fixed-parameter tractable with respect to~$k$~\cite{BDT11,MR11}, which implied that the additional parameter~$r$ was not necessary to obtain an FPT-algorithm or a problem kernel. 
Maybe as a consequence, the parameterized complexity of \textsc{Correlation Clustering} with respect to~$k$ plus structural parameterizations of the fuzzy edge graph was not further considered, despite the importance of the problem. We revisit this line of research motivated by the following observation: While the relation to \textsc{Edge Multicut} implies FPT-algorithms with respect to~$k$ it does not give \emph{polynomial} kernels for \textsc{Correlation Clustering}, and in fact hardness results for the related \textsc{Multiway Cut} problem~\cite{CyganKPPW14} imply that such kernels presumably do not exist.\footnote{We will show nonexistence of a kernel in a restricted setting, so we refrain from providing a more detailed argument for this fact.} In other words, to obtain polynomial kernels for~$k$ it is necessary to take the structure of the fuzzy edge graph into account.

\subsection{Our Results}
We consider two parameterizations of the fuzzy edge graph that are both upper-bounded by the maximum degree of the fuzzy edge graph. Thus, in contrast to the global parameter vertex cover number~$r$, these parameters measure the amount of missing information only locally. The first parameter is the degeneracy~$d$ of the fuzzy edge graph, which is the smallest number~$d$ such that every subgraph of the fuzzy edge graph contains at least one vertex with degree at most~$d$. Notably,~$d$ is also upper-bounded by the vertex cover number (and even the treewidth) of the fuzzy edge graph, making the parameterization by~$k+d$ smaller than the one by~$k+r$. Our first main result is that we can design a polynomial problem kernel for this parameter.

\begin{restatable}{theorem}{degeneracykernel}
	\label{thm:degeneracy kernel}
	\CC parameterized by~$k+d$, where~$d$ is the fuzzy degeneracy of the input fuzzy graph, admits a kernel with~$ 30k^3 d$ vertices.
\end{restatable}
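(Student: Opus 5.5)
We plan to obtain the kernel by a sequence of polynomial-time reduction rules in the style of the kernels for \textsc{Weighted Cluster Editing} and the vertex-cover-based kernel of Bodlaender et al.~\cite{BodlaenderFHMPR10}, and then to bound the size of a reduced yes-instance by counting. Weights larger than~$k$ are first capped at~$k+1$: any pair of weight more than~$k$ must be satisfied by every solution of cost at most~$k$. Next we treat \emph{heavy} pairs. Call a positive pair~$uv$ (an edge of positive weight) \emph{forced together} if the number of common neighbours, counted with weights against non-neighbours, exceeds~$k$; call it \emph{forced apart} symmetrically. In either case we set the weight to~$k+1$, and we merge vertices joined by edges of weight~$k+1$, summing their weights to the other vertices. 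We must check that merging keeps fuzzy pairs correct. If~$u$ has a fuzzy pair to~$x$ and~$v$ has a positive pair to~$x$, the merged pair gets positive weight, which is correct since fuzzy pairs cost nothing. We must also check that merging does not increase the fuzzy degeneracy: the fuzzy neighbourhood of the merged vertex is contained in~$N_0(u)\cap N_0(v)$, so the fuzzy graph after merging is a subgraph of the old one. Connected components of the graph of nonzero edges that are already cliques with no positive outgoing weight are deleted.

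For the counting step we fix a solution of cost at most~$k$. At most~$2k$ vertices are incident to a misclassified positive-weight pair; call them \emph{affected}. Every other vertex~$v$ lies in a cluster~$C$ such that all nonzero pairs between~$v$ and~$C$ are edges and all nonzero pairs from~$v$ to outside~$C$ are non-edges. We aim to show that each cluster contains few unaffected vertices. Inside a cluster, two unaffected vertices~$u,w$ are either adjacent or fuzzy. Suppose they are adjacent. If their common neighbourhood is large while their weighted neighbourhood difference is small, the forced-together rule applies and they would already have been merged. So in a reduced instance, unaffected adjacent vertices in the same cluster can only be distinguished through affected vertices or through fuzzy pairs.

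The key counting uses degeneracy. Unaffected vertices of a cluster that have the same neighbourhood among affected vertices and are pairwise adjacent would be merged. Hence unaffected vertices of a cluster that survive pairwise merging must differ either in their affected neighbourhood, which we bound via the~$2k$ affected vertices and~$k$ modifications, or they must be joined by fuzzy pairs. A set of vertices that is pairwise fuzzy-linked has size at most~$d+1$, since the fuzzy graph is $d$-degenerate. More generally, a $d$-degenerate graph with~$n$ vertices has independence number at least~$n/(d+1)$, so a large unaffected set contains a large fuzzy-independent set, i.e.\ a set of pairwise adjacent vertices. We then need to argue that a set of more than roughly~$k$ pairwise-adjacent unaffected vertices triggers a merge.

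Putting this together, we expect the count to be~$\Oh(k)$ clusters touching affected vertices. Each such cluster should have~$\Oh(k^2 d)$ vertices, with~$\Oh(k)$ affected-neighbourhood classes, each containing~$\Oh(k d)$ vertices. This gives~$\Oh(k^3 d)$ vertices in total, and careful constants should yield~$30k^3d$. Clusters containing no affected vertex are isolated cliques and were already removed. The main obstacle is the merge argument. A large fuzzy-independent (hence pairwise adjacent) unaffected set must contain a pair satisfying the forced-together condition, even though the fuzzy pairs of these vertices to the rest of the cluster spoil the usual critical-clique argument. We would first try to handle this by a greedy degeneracy ordering that charges each vertex's at most~$d$ later fuzzy neighbours.
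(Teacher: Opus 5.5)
\textbf{Gap: the per-cluster bound is not proved.} Bounding the size of a single cluster is the decisive step, and you do not carry it out. You call it the main obstacle and only say you would try a greedy ordering. The scaffolding around it does not hold up either:
\begin{itemize}
\item Two adjacent unaffected vertices with the same neighbourhood among the affected vertices need not be merged. They can have large, different fuzzy neighbourhoods inside the cluster, hence at most $k$ common real neighbours, and they are not twins either.
\item You give no argument that there are only $\Oh(k)$ ``affected-neighbourhood classes''.
\end{itemize}

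\textbf{Gap: the merging step is misstated.} Suppose $\{u,x\}\in E^+$ and $\{v,x\}\in E^-$ with $\omega(u,x)=\omega(v,x)=p$. Once $u$ and $v$ share a cluster, these two pairs cost exactly $p$ wherever $x$ goes. An equivalent merge must therefore subtract $p$ from $k$ and give the new pair weight $0$, i.e.\ create a fuzzy edge to some $x\notin N^0(u)\cap N^0(v)$. So the merged fuzzy graph is not a subgraph of the old one. Each such cancellation uses up budget, so at most $k$ fuzzy edges are created in total and the fuzzy degeneracy stays at most $d+k$. But this, and the budget bookkeeping for opposite-sign pairs, is missing from your rule.

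\textbf{How to close it along your own lines.} The obstacle you describe is not actually there. Your independence-number remark already finishes the argument if the merge condition is just $|N^+(u)\cap N^+(v)|>k$ for a real edge $\{u,v\}$, which is sound by \Cref{obs:many common neighbors}. An unaffected vertex has no nonedge inside its cluster. So a fuzzy-independent set $I$ of unaffected vertices of a cluster $C$ is a real clique. If $|I|\ge k+3$, any two vertices of $I$ have at least $k+1$ common real neighbours inside $I$ itself, so they would have been merged. The fuzzy pairs to the rest of $C$ are irrelevant: the rule needs many common neighbours, not equal neighbourhoods. Hence each cluster has at most $(d+k+1)(k+2)$ unaffected vertices. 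With at most $2k$ affected vertices and at most $2k$ clusters, you get at most $2k+2k(d+k+1)(k+2)\le 20k^3d$ vertices for $k,d\ge1$, with no classes or charging needed.

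\textbf{Comparison with the paper.} The paper avoids merging altogether:
\begin{itemize}
\item \Cref{rrule:many common neighbors} only turns pairs into real edges.
\item \Cref{lem:large cluster contains large clique} uses the fuzzy degeneracy ordering to show that all but the last $2k+2d+1$ vertices of a large cluster form a real clique.
\item This clique part is split into vertices adjacent to the remainder (bounded by pigeonhole), at most $k$ vertices with outside neighbours, and a critical clique. \Cref{rrule:redirect fuzzy edges} keeps that critical clique at size at most $k+2$; it only deletes fuzzy edges, so it never increases the degeneracy.
\end{itemize}
Your repaired route gives a simpler cluster bound and a better $\Oh(k^2(d+k))$ count. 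The price is weighted merging and a degeneracy increase of up to $k$.
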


The second parameter is the closure~$c$ of the fuzzy edge graph. This graph parameter, introduced by Fox et al.~\cite{FRSWW20} is based on the observation that in many graphs, vertices with many common neighbors tend to be adjacent. Formally, a graph is~$c$-closed whenever any pair of vertices with at least~$c$ common neighbors is adjacent and the closure of a graph is the smallest number~$c$ such that the graph is~$c$-closed. This parameter proved useful as secondary parameter for many different problems~\cite{FRSWW20,KMRSS23,KKNS22,KKS20,KKS20a,KKS23,KN21}. The closure~$c$ is not related to the degeneracy of a graph; for example the closure of a complete graph is~$0$ while the degeneracy is~$n-1$. Hence, it provides an independent way of quantifying the local amount of uncertainty. For example if the uncertainty stems from the fact that there is one set of vertices~$S\subseteq V$ for which we could not empirically determine pairwise similarities inside the set, while all other relations are known, then the closure of the fuzzy edge graph would be 0 even if~$S$ is large. Our second main result is a polynomial problem kernel for~$k+c$. 

\begin{restatable}{theorem}{closurekernel}
	\label{thm:closure kernel}
	\CC parameterized by~$k+c$, where~$c$ is the fuzzy closure of the input fuzzy graph, admits a kernel with $30k^2c^2$ vertices.
\end{restatable}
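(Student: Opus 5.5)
We follow the same template as the degeneracy kernel of \Cref{thm:degeneracy kernel}: shrink the instance so that all ``interesting'' vertices lie near the few pairs a solution can pay for, and then bound that neighbourhood using the $c$-closure of the fuzzy graph instead of degeneracy. We use the usual terminology: an edge or nonedge is \emph{heavy}/\emph{costly} if its weight is positive, and a solution of cost at most~$k$ modifies at most~$k$ positive-weight pairs. Let~$G$ be the graph of positive-weight edges and~$F$ the fuzzy graph.

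\textbf{Step 1: safe reductions.} We apply rules in the style of Bodlaender et al.\ and of the degeneracy kernel.
\begin{enumerate}[(i)]
\item Any positive-weight edge of weight more than~$k$ is contracted (merged). Any positive-weight nonedge of weight more than~$k$ is made forbidden.
\item A connected component of the graph of positive edges that is already a valid cluster, i.e.\ contains no positive-weight nonedge, is removed.
\item A vertex all of whose positive edges lie inside a clique of positive edges with no violated pair is handled by a critical-clique/twin rule.
\end{enumerate}
After exhaustive application, every remaining vertex should be within distance one, in the positive-edge graph, of an endpoint of a ``conflict.'' A conflict is an induced path $u$--$v$--$w$ in which $uw$ is a positive-weight nonedge (a weight-0 $uw$ is not a conflict). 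Every conflict must be hit by a modified pair. Hence if there are more than roughly~$k$ vertex-disjoint conflicts, or more generally more than~$k$ pairs forced to be modified, we reject. Standard arguments then give a set~$S$ of at most~$2k$ vertices incident to modified pairs, as in the degeneracy proof (\Cref{thm:degeneracy kernel}).

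\textbf{Step 2: counting vertices around~$S$ via closure.} In the degeneracy kernel we bounded, for each vertex of~$S$, how many vertices can be attached to it by using low fuzzy degree. Here we instead argue as follows. Fix two vertices $x,y\in S$ that lie in the same final cluster but are nonadjacent in~$F$ (or a pair that must be separated). Every vertex~$z$ in the cluster that relates to both of them via fuzzy edges is a common fuzzy neighbour of~$x$ and~$y$. By $c$-closure, $x$ and~$y$ have fewer than~$c$ such common neighbours unless~$xy$ is itself fuzzy. Vertices related to~$x$ or~$y$ by positive edges are cheaper to bound: since at most~$k$ edits touch~$x$, a twin/large-neighbourhood rule bounds them by~$O(k)$ after reduction. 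For fuzzy-only attachments we use the closure bound. For each vertex~$s\in S$, its fuzzy neighbours are grouped by which of the other $O(k)$ vertices of~$S$ they are also fuzzy-adjacent to. A fuzzy neighbour that is ``free-floating'' (fuzzy to exactly one vertex of~$S$ and otherwise unaffected) can be removed by a rule: it could join~$s$'s cluster or be a singleton at no cost. This leaves vertices adjacent to at least two vertices of~$S$, and each pair of~$S$ contributes at most~$c$ of them unless it is a fuzzy pair. A second application of closure inside the fuzzy neighbourhood gives the extra factor~$c$. That yields $O(k^2)$ pairs times $O(c^2)$ vertices, matching the $30k^2c^2$ bound after constants are tracked.

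\textbf{Main obstacle.} The hard part is the case of fuzzy pairs inside~$S$ and, more generally, fuzzy neighbourhoods that are large but highly clustered; closure gives nothing for adjacent pairs. The first thing I would try is a rule that, within a fuzzy neighbourhood $N_F(s)$, keeps only $O(kc)$ representatives of each equivalence class of vertices with identical positive-weight neighbourhood into~$S$, and proves this rule safe by an exchange argument: any solution can reassign surplus twins at no extra cost. The closure bound is then applied to these representatives.
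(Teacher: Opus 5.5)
Your proposal has a genuine gap. Step~2 never actually bounds the size of a solution cluster, and the rules it relies on are either unsound or not computable.

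\textbf{The missing closure dichotomy and the real part.} The idea you are missing is the closure-based dichotomy of \Cref{lem:not many real and fuzzy edges}. Suppose $v$ has more than $2kc$ real neighbours and a set $X$ of $c$ fuzzy neighbours. Then every real neighbour of $v$ has a real edge or nonedge to some vertex of $X$, since otherwise it and $v$ would be a bad pair. By pigeonhole, some $u\in X$ satisfies $|N^+(v)\cap N^+(u)|>k$ or $|N^+(v)\cap N^-(u)|>k$, so the fuzzy edge $\{u,v\}$ can be fixed (\Cref{rrule:not many real and fuzzy edges}). Applied exhaustively, this keeps the graph fuzzy $c$-closed and splits $V$ into $V_+$ (fewer than $c$ fuzzy neighbours) and $V_0$ (at most $2kc$ real neighbours). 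Consequently every large cluster is a real clique on its $V_+$-part plus a fuzzy clique on its $V_0$-part (\Cref{lem:clique and fuzzy clique in pure cluster}).

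Your substitute for the real part fails. You claim a vertex $x\in S$ has only $O(k)$ positive neighbours ``since at most $k$ edits touch $x$''. But edits bound only the real edges leaving the cluster and the nonedges inside it, not the real neighbours inside it. An unspecified twin/critical-clique rule does not repair this. Vertices of one real critical clique may differ in whether their pairs to an outside vertex are fuzzy or nonedges, so deleting one of them is unsafe.

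The paper handles this with \Cref{rrule:bounded cliques c-closure}. All vertices of a real clique of size at least $3kc+2$ lie in $V_+$. This lets one transfer the real edges and nonedges of a clique vertex $v$ onto fuzzy edges of other clique vertices and then delete $v$. Real cliques are thus capped at $3kc+1$, not $O(k)$. Their fewer than $(3kc+1)c$ fuzzy neighbours give the $kc^2$ term per cluster, hence $O(k^2c^2)$ over at most $2k$ clusters. Your ``second application of closure gives the extra factor $c$'' does not replace this argument.

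\textbf{The fuzzy part and Step~1.} Your rules for the fuzzy part are stated in terms of $S$, the vertices touched by an unknown solution. These are removing ``free-floating'' fuzzy neighbours of $s\in S$, and keeping $O(kc)$ representatives per class of positive neighbourhood \emph{into $S$}. Because they depend on an unknown solution, they are not polynomial-time reduction rules. Removing a vertex that still has real edges and nonedges is also unjustified.

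The obstacle you flag is a large fuzzy clique inside a cluster, where closure says nothing. \Cref{rrule:bound fuzzy clique in pure clusters} resolves it without reference to a solution. If the vertices whose real neighbourhood is exactly a real clique $K$ form a fuzzy clique $X$ with $|X|>2k+c+1$, delete one of them. Safeness uses closure on each outside vertex $w$:
\begin{itemize}
\item either $w$ has more than $c$ fuzzy neighbours in $X$, hence at least $c$ common fuzzy neighbours with each $x\in X$, hence is fuzzy to all of $X$;
\item or $w$ has more than $2k$ nonneighbours in $X$, and so cannot lie in $K$'s cluster, which contains all but at most $k$ vertices of $X$.
\end{itemize}

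Step~1 is also unsound. With fuzzy edges, the obstructions are real paths joining the ends of a nonedge, not only induced $P_3$'s. Take a real path $p_1\cdots p_m$ whose distance-$2$ pairs are fuzzy and all other pairs are nonedges. It is fuzzy $2$-closed and has no conflict in your sense, yet needs roughly $m/3$ edits, so ``every vertex lies within distance one of a conflict'' fails. Moreover, merging heavy edges can create new weight-$0$ pairs, when a real edge and a nonedge cancel, and so increase the fuzzy closure. The paper only uses rules shown to preserve fuzzy $c$-closedness.
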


Of course, it is perfectly reasonable to consider similar parameterizations for the nonfuzzy vertex pairs, for example the degeneracy of the graph consisting of the edges or the structure of the nonedges. Here, we obtain the following negative result which says that we cannot hope for a polynomial kernel even if the degeneracy and the closure of the graph consisting of the real edges and the degree of the graph consisting of the nonedges are both constant.
This result holds even if all real edges and nonedges have unit weight.

\begin{restatable}{theorem}{kernelhardness}
	\label{thm:kernel hardness for real graph}
	\CC is \NP-hard and does not admit a polynomial kernel for the solution size~$k$, unless \bth{}, even if the nonedges form a matching and the graph consisting of the real edges has degeneracy 2 and closure 2.
\end{restatable}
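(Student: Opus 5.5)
Since the statement asks for both \NP-hardness and kernel lower bounds, I will give a single polynomial-parameter transformation (or an OR-cross-composition) from a problem without a polynomial kernel. That problem should be \NP-hard, so the same reduction also yields \NP-hardness. The natural source is \MC, or more specifically \textsc{Multiway Cut} parameterized by the cut size: by the results of Cygan et al.~\cite{CyganKPPW14}, it has no polynomial kernel in restricted settings (for example, with a bounded number of terminals or with weights), unless \bth{}. I would follow the classical equivalence of Demaine et al.~\cite{DemaineEFI06}, mapping \MC{} to \CC. The instance graph~$H$ becomes the real-edge graph, and each terminal pair~$\{s_i,t_i\}$ becomes a nonedge of weight larger than~$k$ (so it must be separated). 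All remaining pairs are fuzzy. A partition of cost at most~$k$ then corresponds to deleting at most~$k$ edges of~$H$ so that every pair is separated. Fuzzy pairs cost nothing, so the clusters need not be cliques.

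The restrictions are handled in four steps.
(i) \emph{Unit weights.} Replace a heavy nonedge~$\{s,t\}$ by~$k+1$ fresh paths~$s - x_j - t$ whose middle vertices~$x_j$ attach via real edges, with each~$\{s,x_j\}$ or~$\{x_j,t\}$ forced apart by a unit nonedge. Alternatively, and more simply, choose the source instances so that each terminal belongs to exactly one terminal pair; then the nonedges form a matching automatically.
(ii) \emph{Matching of nonedges.} Make the terminals vertex-disjoint by splitting each vertex occurring in several demands into copies joined by~$k+1$ parallel length-2 paths. These new vertices have degree~2, so the copies cannot be separated within the budget.
(iii) \emph{Degeneracy~2 and closure~2.} Subdivide every edge of~$H$. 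This makes the real graph bipartite with one side of degree~2, so its degeneracy is at most~2. In a subdivided graph, two original vertices share at most one common neighbor and two subdivision vertices share at most one common neighbor, so the closure is at most~2. Deleting a subdivided edge still costs~1, so the cut size is preserved. The gadgets from (ii) consist of degree-2 vertices and keep these bounds.
(iv) \emph{Bounded parameter.} Check that the new budget is~$k$, or polynomial in~$k$, so the reduction is a polynomial parameter transformation.

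For the kernel lower bound I need a source problem parameterized by~$k$ that has no polynomial kernel and whose demand structure survives these steps. If \textsc{Multiway Cut} with a large number of terminals (or its weighted version) is used, each terminal pair becomes a heavy nonedge. To turn these into unit nonedges forming a matching, I would attach to each terminal~$t_i$ a private pendant path and let the demand nonedge join the path endpoints. This must be done without blowing up~$k$ and without creating cheap alternative solutions, such as cutting the pendant path instead of the graph.

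I expect this to be the main obstacle. The fix I would try first is to build each pendant connection from~$k+1$ internally disjoint length-2 paths, which cannot be cut within budget. Then every solution must separate the original terminals, while degeneracy and closure stay at~$2$: each middle vertex has degree~2, and two endpoints of the bundle share~$k+1$ common neighbors. That last fact would break closure~$2$, so instead I would insert real edges between the two endpoints directly only when the closure condition allows it. Failing that, I would give a direct OR-cross-composition from \TSAT{}, with variable and clause gadgets built from subdivided paths and matched unit nonedges.
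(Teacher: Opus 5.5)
Your overall plan is right: a parameter-preserving reduction from \textsc{Edge Multicut} in the style of Demaine et al., finished by subdividing every real edge. The gap is that the gadget making a demand binding, using unit-weight nonedges that form a matching, is never built, and the pieces you propose fail.

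\begin{itemize}
\item A single unit nonedge does not force separation, whether it joins the terminals or the ends of pendant paths. A clustering can keep both ends together and pay $1$ instead of cutting many edges, so the \textsc{Correlation Clustering} cost no longer equals the multicut size.
\item Step (i) does not repair this. As written, the pairs $\{s,x_j\}$ and $\{x_j,t\}$ are declared both path edges and nonedges. Moreover, adding real $s$--$t$ paths only makes $s$ and $t$ harder to separate.
\item Bundles of $k+1$ parallel length-$2$ paths cannot be used. Two middle vertices of one bundle are nonadjacent and share both bundle endpoints as common neighbors. So the real-edge graph is not $2$-closed, contrary to your claim in step (iii) that these gadgets keep the bounds. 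Adding a real edge between the bundle endpoints does not change this, so your proposed repair fails too.
\item The fallback cross-composition from \textsc{3-SAT} is not specified.
\end{itemize}

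The missing idea is to turn each demand into many unit nonedges, arranged as a matching between two vertex sets that are too well connected to their terminals to be split. These sets should be built from cliques, i.e.\ simple graphs, so that subdividing afterwards gives closure and degeneracy $2$.

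Concretely, the paper starts from an \textsc{Edge Multicut} instance of maximum degree $\Delta$. To each terminal $v$ it attaches a chain $Q_v$ of $|T_v|$ cliques of size $\Delta+1$, one per demand containing $v$. The first clique is completely joined to $v$, and consecutive cliques are completely joined to each other. For a demand $(u,v)$, the associated cliques in the two chains are joined by a perfect matching of $\Delta+1$ unit nonedges, and every other new pair is fuzzy. Hence every vertex lies in at most one nonedge. Two exchange arguments finish the reduction:
\begin{itemize}
\item Splitting $Q_v\cup\{v\}$ costs at least $\Delta+1$, whereas cutting the whole gadget off costs at most $\deg_G(v)\le\Delta$.
\item Putting two paired gadgets in one cluster costs at least $\Delta+1$ nonedges, whereas splitting one off costs at most $\Delta$.
\end{itemize}
So an optimal clustering of cost at most $k$ yields a multicut of size at most $k$, and conversely; $k$ is unchanged. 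Using $k+1$ matched nonedges with $(k+1)$-edge-connected clique gadgets would work equally well.

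Only then are all real edges subdivided, including those inside the cliques. The subdivision of a simple graph is $2$-closed and $2$-degenerate, and every cut keeps its cost.

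Finally, use undirected \textsc{Edge Multicut} parameterized by the cut size as the source; Cygan et al.\ show it has no polynomial kernel. Their two-terminal incompressibility result concerns directed multiway cut, not the undirected problem.
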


We also show that \CC remains \NP-hard even if the graph consisting of the real edges is a binary tree or a star. 

\begin{theorem}
	\label{thm:hardness for binary trees and stars}
	\CC is \NP-hard even if the graph consisting of the real edges is a binary tree or a star.
\end{theorem}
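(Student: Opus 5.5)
We prove NP-hardness by polynomial-time reductions from \textsc{Edge Multicut} (or \textsc{Multiway Cut}) on general graphs. \textsc{Multiway Cut} with three terminals is NP-hard, and we use the standard correspondence of~\cite{DemaineEFI06}. In that correspondence, edges of the multicut instance become real edges of weight~1, terminal pairs become nonedges of large weight, and every other vertex pair is fuzzy. The difficulty is that the real edge graph is then an arbitrary graph. So the plan is to replace the real edges by fuzzy edges wherever possible while keeping the real edge graph a star or a binary tree.

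\textbf{Star case.} Start from an \textsc{Edge Multicut} instance~$(H,\mathcal{T},k)$. Subdivide every edge~$uv$ of~$H$ by a new vertex~$x_{uv}$. The cost of cutting~$uv$ is then carried by one of the two pairs $ux_{uv}$, $x_{uv}v$, or rather by a real edge to a common center. Concretely, add a center vertex~$z$ and make $zx_{uv}$ a real edge of weight~1 for every edge~$uv$. Make the pairs $ux_{uv}$ and $vx_{uv}$ fuzzy, as well as all pairs among original vertices that are not terminal pairs. Terminal pairs become nonedges of weight~$k+1$, and all remaining pairs among subdivision vertices are fuzzy too.

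Intended argument: in a solution, only one cluster can contain~$z$, and every~$x_{uv}$ outside that cluster costs~1. Cutting~$uv$ in~$H$ should correspond to placing~$x_{uv}$ with~$z$, so that it is separated from~$u$ or~$v$. Keeping~$uv$ uncut corresponds to putting~$x_{uv}$ with~$u$ and~$v$ at cost~1.

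This accounting is flawed, since every~$x_{uv}$ pays either way. So I will instead weight in the other direction. Make the~$x_{uv}$ real-edge-free, use the star's real edges as the edges of~$H$ incident to a hub, and encode the remaining edges of~$H$ by nonedges between the endpoints of the paths. Fixing these details is the main obstacle. The key point is that with nonedges allowed to be arbitrary, an edge~$uv$ of~$H$ can be simulated by fuzzy pairs together with a gadget whose only real edges go to the star center. I expect a cleaner source problem to be \textsc{Multiway Cut} in a graph where all edges are incident to one vertex plus fuzzy connectivity. Alternatively, reduce from a problem that is NP-hard on stars with weights, such as \textsc{Weighted Cluster Editing}-type hardness on stars~\cite{IKP25,GKM26}, adding weight-0 pairs to simulate the arbitrary structure.

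\textbf{Binary tree case.} Replace the star center by a binary tree whose internal nodes are joined by real edges of weight~$k+1$. These heavy edges can never be cut, so the whole tree behaves like a single vertex~$z$ in any solution of cost at most~$k$. The weight-1 star edges are then attached at the leaves, one per leaf, so that every degree is at most~3. Correctness transfers directly from the star case, because all heavy tree vertices must lie in a common cluster. The main work is the forward and backward solution correspondence in the star case, in particular showing that fuzzy pairs allow free merging and splitting without violating the heavy terminal nonedges.
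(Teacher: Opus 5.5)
Your proposal has a genuine gap: the star case, on which your binary-tree case also rests, is never actually proved. The only concrete construction you give is one you discard yourself, and it fails more badly than you say. As you specify it, the original vertices have no real edges at all, so $G^{(+)}$ is not even a star. Moreover, putting $z$ together with all $x_{uv}$ and making every original vertex a singleton costs $0$, so every instance becomes a yes-instance. What follows are only candidate directions (``I expect a cleaner source problem\ldots'', ``Alternatively, reduce from\ldots''). The appeal to \cite{IKP25,GKM26} names no concrete result and does not explain how the weight-$0$ pairs would be used, so it cannot serve as a proof step.

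The missing idea is that when $G^{(+)}$ is a star with center $v^*$, it suffices to choose the cluster $C^*$ containing $v^*$. Leaves outside $C^*$ may as well be singletons, so the cost is the number of leaves outside $C^*$ plus the weight of the nonedges inside $C^*$. This is a clique-type selection problem, so \textsc{Edge Multicut} is the wrong source and \textsc{Clique} fits directly. This is what the paper does: from $(G,k)$, make the adjacent pairs of $G$ fuzzy and the nonadjacent pairs nonedges, add $v^*$ with a real edge to every vertex, and set $k'=n-k$. A $k$-clique $S$ yields cost $n-k$ via $C^*=S\cup\{v^*\}$. Conversely, moving an endpoint $w$ of a nonedge inside $C^*$ to a singleton adds the cut edge $\{v^*,w\}$ but removes at least one nonedge. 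Hence one may assume $C^*\setminus\{v^*\}$ is a clique of $G$, and it has at least $k$ vertices because at most $n-k$ star edges are cut.

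Your binary-tree step is sound on its own. Replace the center by a binary tree whose edges have weight $k+1$, attach each former star edge as a pendant edge at its own tree vertex of degree at most~$2$, and make all other pairs involving tree vertices fuzzy. Then every solution of cost at most $k$ keeps the whole tree in one cluster, and costs correspond exactly.

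Combined with a working star reduction, this would be a shorter route than the paper's. The paper reduces \textsc{3-SAT} to \textsc{Edge Multicut} on binary trees with all terminals at leaves (adapting~\cite{CalinescuFR03}). It then attaches two pendant real neighbors $v_1,v_2$ to each terminal $v$ and turns each terminal pair $(u,v)$ into the two nonedges $\{u_1,v_1\}$ and $\{u_2,v_2\}$. This way, cutting the tree edge at a terminal leaf always dominates paying for nonedges.

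The price of your route is the weight-$(k+1)$ edges. In the paper, all real edges and nonedges in this section have unit weight. A tree offers no edge-disjoint parallel paths with which to simulate heavy edges, so your version only yields hardness for the weighted problem.
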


\section{Preliminaries}
\subparagraph{Graph Notation.}
We consider undirected graphs.
For a graph $G = (V, E)$ with vertex set $V$ and edge set $E$, we denote $n \coloneqq |V|$ and~$m\coloneqq |E|$. 
For a vertex $v$, we let~$N_G (v) \coloneqq \{u \mid \{u,v\} \in E \}$  and $N_G [v] \coloneqq N_G(v) \cup \{v\}$ denote the open and closed neighborhood of~$v$, respectively.
The degree of $v$ is $\deg(v) \coloneqq |N_G(v)|$. 
Analogously, for a set $X \subseteq V$, we define $N_G[X] = \bigcup_{x\in X} N_G[x]$ and $N_G(X) = N_G [X] \setminus X$. 
Moreover, we define $N^{\cap}_G(X) \coloneqq \bigcap_{x \in X} N_G(x)$ and $N^{\cap}_G[X] \coloneqq N^{\cap}_G(X) \cup X$.
We omit subscripts when there is no ambiguity.
The \emph{subgraph of $G$ induced by} $U \subseteq V$ is the graph $G[U] = (U, E_U)$, with $E_U = \{\{x,y\} \in E \mid x,y \in U\}$. 
Given a vertex~$x$, we denote the graph $G[V \setminus \{x\}]$ by $G-x$. 
A graph is \emph{complete} if every pair of vertices are adjacent. A complete subgraph is called a \emph{clique}. 
If a vertex set~$K$ induces a clique, we also say that~$K$ is a clique. 
A clique~$K$ is a \emph{critical clique} if~$N^{\cap}_G[K] = N_G[K]$ and~$K$ is inclusion-wise maximal under this property.
A \emph{vertex cover} is a vertex subset $S \subseteq V$ such that every edge has at least one endpoint in $S$. 

\subparagraph{Fuzzy Graphs.}
A fuzzy graph~$G=(V,E^+,E^0)$ consists of a vertex set~$V$, a set of (\emph{real}) edges~$E^+$ and a set of \emph{fuzzy} edges~$E^0$ with~$E^+\cap E^0=\emptyset$. 
We call a pair of vertices~$\{u,v\}\in {V\choose 2}$ a \emph{nonedge}, if it is neither a real nor a fuzzy edge, and denote the set of nonedges with~$E^- \coloneqq {V \choose 2} \setminus (E^+ \cup E^0)$.
Similar to standard graphs, we define for a vertex~$v\in V$ the real, fuzzy and nonneighborhood as~$N^+(v) \coloneqq \{u \mid \{u,v\} \in E^+ \} $, $N^0(v) \coloneqq \{u \mid \{u,v\} \in E^0 \} $ and $N^-(v) \coloneqq \{u \mid \{u,v\} \in E^- \} $, respectively, and also~$N^+[v] \coloneqq N^+(v) \cup \{v\}$ as well as $N^+[X] = \bigcup_{x\in X} N^+[x]$ for a set~$X\subseteq V$.
We define $G^{(+)}\coloneqq (V,E^+)$, $G^{(0)}\coloneqq (V,E^0)$ and~$G^{(-)}\coloneqq (V,E^-)$, that is, $G^{(+)}$ contains all vertices of~$G$ and its edges are the real edges of~$G$, and similarly~$G^{(0)}$ and~$G^{(-)}$ contain only the fuzzy/nonedges of~$G$ as edges. 
For vertex subsets~$A\subseteq V$ and~$B\subseteq V$ we define~$E^+(A,B)$ as the set of real edges with one endpoint in~$A$ and the other endpoint in~$B$ and analogously~$E^-(A,B)$ and~$E^0(A,B)$.
If one of the sets, for example~$A$, consists of a single vertex~$v$, we also write~$E^+(v,B)$, $E^-(v,B)$ and $E^0(v,B)$.
For a vertex subset~$C\subseteq V$ we write~$E^+(C) \coloneqq E^+(C,C)$, $E^0(C) \coloneqq E^0(C,C)$ and $E^-(C) \coloneqq E^-(C,C)$ for the edges, fuzzy edges and nonedges with both endpoints in~$C$, respectively.
We call a vertex set~$K$ a \emph{fuzzy clique}, if~$K$ is a clique in~$G^{(0)}$.
When speaking of (critical) cliques, neighbors or connected components in a fuzzy graph, we consider the graph~$G^{(+)}$ unless mentioned otherwise. 

\subparagraph{Problem Definition and Basic Observations.}
A clustering is a partition of the vertices of~$V$ and we call each set of the partition a cluster.
For a given clustering~$\mc$ we denote with~$\mc(v)$ the cluster that contains the vertex~$v$.
We call a vertex pair~$\{u,v\}$ a positive mistake if~$\{u,v\} \in E^+ $ and~$u \notin \mc(v)$ and a negative mistake if~$\{u,v\} \in E^-$ and~$u \in \mc(v)$.
For a weight function~$\omega: \binom{V}{2}\to \Z$ and set of vertex pairs~$F$, we let~$\omega(F):=\sum_{e\in F} \omega(e)$ denote the total weight of the vertex pairs in~$F$. 
Note that a vertex pair~$\{u,v\}$ has weight~$0$ if and only if~$\{u,v\}$ is a fuzzy edge.
The \emph{cost} of a clustering~$\mc$ for a weight function~$\omega$ is given by the total weights of the positive and negative mistakes, that is $\cost(\mc) = \sum_{C\in \mc} \omega(E^-(C)) + \sum_{C_1,C_2 \in \mc} \omega(E^+(C_1,C_2)).$

With this we can define our main problem.

	\prob{\CC}{A fuzzy graph~$(V,E^+,E^0)$, a weight function~$\omega:\binom{V}{2}\to \Z$ and a nonnegative integer $k$.}{Is there a clustering~$\mc$ of~$V$ such that~$\cost(\mc)\leq k$?}
	
We call a clustering~$\mc$ a \emph{solution} if~$\cost(\mc)\leq k$ and \emph{optimal} if~$\cost(\mc) \leq \cost(\mc')$ for all clusterings~$\mc'$. We will also take the alternative graph modification view that is made by \textsc{Cluster Editing}: obtain a clustering by deleting real edges~$\{u,v\}$ or inserting nonedges~$\{u,v\}$ at the cost of~$\omega(u,v)$. 
For some of our reduction rules we change a fuzzy edge into a real edge or a nonedge. 
In both cases, we set the new weight of the real edge/nonedge to 1 unless stated otherwise.
A fuzzy graph~$G$ has a clustering with cost~$0$ if and only if all vertices in the connected components of~$G^{(+)}$ are in~$G$ connected via a real or a fuzzy edge; otherwise, at least one real edge or nonedge of~$G$ must be modified. 

If two vertices~$u$ and $v$ have more than~$k$ common neighbors, then separating~$u$ and~$v$ needs more than~$k$ edge deletions.
This implies the following.

\begin{observation}
	\label{obs:many common neighbors}
	Let $u$ and $v$ be vertices with~$|N^+(u)\cap N^+(v)|>k$. Then,~$u$ and~$v$ are contained in the same cluster in any solution.
\end{observation}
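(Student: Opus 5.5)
The plan is a direct exchange argument. Suppose for contradiction that some solution $\mc$ with $\cost(\mc)\le k$ places $u$ and $v$ in different clusters, say $u\in C_1=\mc(u)$ and $v\in C_2=\mc(v)$ with $C_1\neq C_2$. I will exhibit more than $k$ positive mistakes of strictly positive weight, which contradicts $\cost(\mc)\le k$.

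Let $W\coloneqq N^+(u)\cap N^+(v)$; by assumption $|W|>k$. For each $w\in W$, the cluster $\mc(w)$ cannot equal both $C_1$ and $C_2$. Hence at least one of the pairs $\{u,w\}$ and $\{v,w\}$ crosses between two distinct clusters. Both pairs lie in $E^+$ by the choice of $W$, so this pair is a positive mistake.

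The mistake pairs chosen for different $w\in W$ are distinct, since each contains its own $w$ and $w\notin\{u,v\}$. This gives at least $|W|\ge k+1$ distinct positive mistakes. Every real edge has weight $\omega\ge 1$, because weight $0$ occurs exactly on fuzzy edges and weights are nonnegative integers. Therefore these mistakes contribute total weight at least $k+1>k$ to $\cost(\mc)$, a contradiction. So $u$ and $v$ share a cluster in every solution.

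There is no real obstacle here. The only points to state explicitly are that real edges have weight at least $1$, and that the mistakes counted for different common neighbours are pairwise distinct. Both follow immediately from the definitions in the preliminaries.
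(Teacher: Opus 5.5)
Your proof is correct and is the same argument as the paper's one-line justification: separating $u$ and $v$ forces at least one cut real edge per common real neighbour, giving more than $k$ positive mistakes. You additionally make explicit two points the paper leaves implicit, namely that these mistakes are pairwise distinct and that every real edge has weight at least $1$.
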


\begin{corollary}
	\label{corr:critical cliques}
	Let~$K$ be a clique of size~$k+2$. If~$\mc$ is a clustering with~$\cost(\mc) \le k$, then all vertices of~$N_{G^{(+)}}^{\cap}[K]$ are in the same cluster of~$\mc$.
\end{corollary}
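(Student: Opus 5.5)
The statement to prove is Corollary~\ref{corr:critical cliques}: if $\cost(\mc)\le k$ and $K$ is a clique of size $k+2$, then $N^{\cap}_{G^{(+)}}[K]$ lies in one cluster of~$\mc$. The plan is to derive it directly from Observation~\ref{obs:many common neighbors} in two steps: first show that all of $K$ is in one cluster, then attach each common neighbor to that cluster.

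\textbf{Step 1: $K$ lies in one cluster.} Take any two distinct vertices $u,v\in K$. Since $K$ is a clique in $G^{(+)}$, every vertex of $K\setminus\{u,v\}$ is a real neighbor of both $u$ and~$v$. Hence
\[
|N^+(u)\cap N^+(v)|\ge |K|-2=k.
\]
This is one short of the $>k$ that Observation~\ref{obs:many common neighbors} requires. I would close the gap with a direct counting argument instead of invoking the observation. Suppose $u$ and $v$ are in different clusters. Every $w\in K\setminus\{u,v\}$ is then separated from $u$ or from~$v$, so at least one of $\{u,w\},\{v,w\}$ is a positive mistake. These $k$ pairs are pairwise disjoint across different~$w$. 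In addition, the real edge $\{u,v\}$ itself is a positive mistake. This gives at least $k+1$ positive mistakes.

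To turn mistakes into cost I need each positive mistake to cost at least~$1$. Real edges are not fuzzy, so their weight is nonzero, and since weights lie in~$\Z$ each real edge has weight at least~$1$. The cost is therefore at least $k+1>k$, contradicting $\cost(\mc)\le k$. So all of $K$ is in a single cluster~$C$.

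\textbf{Step 2: common neighbors join~$C$.} Let $x\in N^{\cap}_{G^{(+)}}(K)$, so $x$ is a real neighbor of every vertex of~$K$. If $x\notin C$, then all $|K|=k+2>k$ real edges between $x$ and $K$ are positive mistakes. By the same weight argument this costs more than~$k$, a contradiction. Hence $x\in C$, and together with Step~1 all of $N^{\cap}_{G^{(+)}}[K]$ lies in~$C$.

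\textbf{Main obstacle.} The only delicate point is the off-by-one in Step~1: the common-neighbor count is exactly~$k$, so Observation~\ref{obs:many common neighbors} cannot be applied as stated. The fix is to also count the real edge $\{u,v\}$ itself as a mistake, which brings the total to $k+1$.
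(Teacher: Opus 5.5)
Your proof is correct and follows essentially the paper's intended reasoning. The paper gives no separate proof and treats the corollary as an immediate consequence of \Cref{obs:many common neighbors}, i.e., of counting the real edges that must be deleted to separate two vertices. Your Step~1 correctly notices that two vertices of $K$ have only $k$ common real neighbors inside $K$, so the observation does not literally apply to them. Additionally counting the real edge $\{u,v\}$ itself (every real edge has weight at least~$1$) is precisely the detail needed to close that off-by-one gap.
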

The next observation considers the opposite case, when neighborhoods are too different.
\begin{observation}
	\label{obs:many uncommon neighbors}
	Let $u$ and $v$ be vertices with~$|N^+(u)\cap N^-(v)|>k$. Then,~$u$ and~$v$ are not contained in the same cluster in any solution.
\end{observation}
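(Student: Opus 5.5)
The plan is a direct counting argument in the style of Observation~\ref{obs:many common neighbors}, with the roles of the two neighborhoods swapped. Let $W \coloneqq N^+(u)\cap N^-(v)$, so that $|W|>k$. Suppose, for contradiction, that some clustering~$\mc$ with $\cost(\mc)\le k$ places $u$ and $v$ in the same cluster $C \coloneqq \mc(u)=\mc(v)$. We will exhibit $|W|>k$ distinct mistakes, each of weight at least~$1$. This gives $\cost(\mc)>k$, a contradiction.

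First, check which vertices lie in~$W$. Since $u\notin N^+(u)$, $v\notin N^-(v)$ and $E^+\cap E^-=\emptyset$, the set~$W$ contains neither~$u$ nor~$v$. Now fix $w\in W$; there are two cases.
\begin{itemize}
\item If $w\in C$, then $\{v,w\}\in E^-$ lies inside the cluster~$C$, so it is a negative mistake.
\item If $w\notin C$, then $\{u,w\}\in E^+$ joins two different clusters, so it is a positive mistake.
\end{itemize}
In either case we charge to~$w$ one vertex pair that contains~$w$ together with~$u$ or~$v$. Pairs charged to distinct vertices of~$W$ are distinct, because each such pair is determined by~$w$ and by which case applies. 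Hence $\mc$ makes at least $|W|\ge k+1$ mistakes.

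It remains to confirm that every mistake has weight at least~$1$. The weight function maps into~$\Z$, and a pair has weight~$0$ exactly when it is a fuzzy edge. Every charged pair is a real edge or a nonedge, so its weight is a positive integer. Summing over~$W$ gives $\cost(\mc)\ge k+1$, which contradicts $\cost(\mc)\le k$. Therefore no solution places $u$ and~$v$ in the same cluster.

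The only point that needs care is the bookkeeping: showing that the charged pairs are pairwise distinct and that their weights are strictly positive. Once that is settled, the argument is routine.
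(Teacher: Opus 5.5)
Your proof is correct and is the counting argument the paper leaves implicit: the paper states this observation without proof, as the counterpart of \Cref{obs:many common neighbors}. Each $w\in N^+(u)\cap N^-(v)$ forces either the deletion of $\{u,w\}$ or the insertion of $\{v,w\}$, each of weight at least~$1$. What actually makes the charged pairs distinct is your remark that $u,v\notin W$, since then each charged pair meets $W$ exactly in~$w$.
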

If there is a cluster that is not connected by real edges, then we could split it up into several clusters without increasing the cost of the clustering, leading to the following observation.

\begin{observation}
	\label{obs:cluster connected by real edges}
	There is an optimal clustering~$\mc$ such that for each cluster~$C \in \mc$ all vertices in~$C$ are connected in~$G^{(+)}$.
\end{observation}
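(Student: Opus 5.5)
Take an optimal clustering~$\mc$ and refine it: replace every cluster~$C\in\mc$ by the vertex sets of the connected components of~$G^{(+)}[C]$, obtaining a new clustering~$\mc'$. Then verify that~$\cost(\mc')\le\cost(\mc)$. By optimality of~$\mc$ this gives equality, so~$\mc'$ is optimal, and by construction each of its clusters induces a connected subgraph of~$G^{(+)}$. This is exactly the claim of Observation~\ref{obs:cluster connected by real edges}.

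The cost comparison is done separately for the two kinds of mistakes.

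\emph{Negative mistakes.} For each cluster~$C$ split into components~$C_1,\dots,C_t$, we have
\[
\textstyle\sum_i \omega(E^-(C_i)) \le \omega(E^-(C)),
\]
because~$E^-(C_i)\subseteq E^-(C)$, the sets~$E^-(C_i)$ are pairwise disjoint, and all weights are nonnegative (they lie in~$\Z$).

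\emph{Positive mistakes.} Every real edge with both endpoints in~$C$ lies inside a single component~$C_i$ of~$G^{(+)}[C]$, so no new real edge is cut. Real edges between distinct clusters of~$\mc$ stay between distinct clusters of~$\mc'$, hence they contribute the same weight as before.

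Summing the two contributions over all clusters yields~$\cost(\mc')\le\cost(\mc)$. One point needs care: read literally, the formula for~$\cost$ sums over pairs~$C_1,C_2$, so we interpret it as summing over unordered pairs of distinct clusters, or equivalently as the total weight of the set of cut real edges. Under this reading the positive-mistake term is unchanged.

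I expect no real obstacle here. The only substantive point is that the negative-mistake term cannot increase, and this depends solely on the nonnegativity of~$\omega$.
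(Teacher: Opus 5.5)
Your proposal is correct and follows the same approach as the paper, which justifies the observation in a single sentence: a cluster that is not connected by real edges can be split into several clusters without increasing the cost. You fill in the details that sentence leaves implicit, namely that the set of cut real edges is unchanged and that the nonedge term cannot grow because $\omega$ is nonnegative.
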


\subparagraph{Parameterized Complexity.}
A parameterized problem is a language $L \subseteq \Sigma^* \times \N$, where $\Sigma$ is a finite alphabet. 
For an instance $(I, k) \in  \Sigma^* \times \N$, we refer to $k$ as the parameter.
An algorithm for a parameterized problem~$ L $ is an~FPT-algorithm if there is a computable function $ f $ such that for every instance $ (I,k) $ the algorithm decides in~$ f(k) \cdot |I|^{\Oh(1)}$ time whether $ (I,k) $ is a yes-instance of $ L $.
A \emph{kernel} for~$L$ is an algorithm that computes for each instance~$(I,k)$ in polynomial time an equivalent instance~$(I',k')$ with~$(|I'|+ k') \leq g(k)$ for a computable function $g$; we call~$g$ the \emph{size} of the kernel.
A kernel of size~$g$ is a \emph{polynomial kernel} if~$g(k) \in k^{\Oh(1)}$.
A parameterized problem has an FPT-algorithm if and only if it admits a (not necessarily polynomial) problem kernel.

A \emph{polynomial parameter transformation} is a reduction from a parameterized problem~$A$ to a parameterized problem~$B$ that transforms each instance~$(I,k)$ of~$A$ in polynomial time into an equivalent instance~$(I',k')$ of~$B$ with~$k' \in k^{\Oh(1)}$.
If~$k' = k$, we say the polynomial parameter transformation preserves the parameter.

Moreover, if there is a polynomial parameter transformation from~$A$ to~$B$, where the unpara\-meterized versions of~$A$ and~$B$ are~\NP-hard and in~\NP, respectively, then~$A$ admits a polynomial kernel if~$B$ admits a polynomial kernel.

For more details about the relevant concepts of parameterized complexity, such as kernelization and kernelization lower bounds, we refer the reader to the standard literature~\cite{DF13,CFK+15}.

\section{Fuzzy Degeneracy Kernel}
In the following, we provide a kernel for parameterization by $k$ plus the degeneracy of~$G^{(0)}$, the fuzzy edge graph of~$G=(V,E^+,E^0)$. We will make use of the following definition of degeneracy via vertex orderings.
For an ordering $\sigma=(v_1,\dots,v_t)$ of a vertex set~$C\subseteq V$ with~$|C|=t$, let $D_{\sigma}[i,j]$ with $i\le j$ denote the set of vertices $\{v_i,\dots,v_j\}$.
Moreover, let $\deg_{\sigma}(v_i) = |N(v_i) \cap D_{\sigma}[i,n]|$ denote the \emph{ordering degree} of~$v_i$ with regards to the ordering~$\sigma$.

A graph~$G=(V,E)$ has \emph{degeneracy}~$d$, if~$d$ is the smallest number for which there is an ordering~$\sigma = (v_1,\dots,v_n)$ of~$V$, such that each vertex $v_i$ has at most~$d$ neighbors~$v_j$ with~$j>i$, that is,~$\deg_{\sigma}(v_i)\le d$.
We call such an ordering a \emph{degeneracy ordering} for~$G$. For a fuzzy graph~$G$, we say that a degeneracy ordering of~$G^{(0)}$ is a \emph{fuzzy degeneracy ordering} of~$G$ and let~$d$ denote the \emph{fuzzy degeneracy} of~$G$. 
For a vertex subset~$C\subseteq V$, let~$\sigma(C)$ be the subsequence of a fuzzy degeneracy ordering~$\sigma$ that contains exactly the vertices of~$C$.
We call such an ordering a fuzzy degeneracy ordering for~$C$.

With these definitions at hand, we may now proceed to describe the kernelization. The first rule is the known rule that removes isolated clusters that do not cause any costs.

\begin{rrule}
	\label{rrule:remove isolated cliques}
	If~$G$ contains a connected component with vertices $C$ such that each pair of vertices~$u,v \in C$ forms a real edge or a fuzzy edge, then remove~$C$ from~$G$. 
\end{rrule}

The following reduction rule makes vertices with many common neighbors adjacent. It corresponds to a classic reduction rule for \textsc{Cluster Editing}~\cite{GGHN05} and its correctness is directly implied by \Cref{obs:many common neighbors}.

\begin{rrule}
	\label{rrule:many common neighbors}
	Let $\{u,v\}$ be a nonedge or fuzzy edge in $G$. If $|N^+(u)\cap N^+(v)| > k$, make $\{u,v\}$ a real edge and set~$k\coloneq k-
        \omega(u,v) $.
\end{rrule}
This rule is already sufficient to show that large clusters must contain large real cliques.
\begin{lemma}
	\label{lem:large cluster contains large clique}
	Let~$G$ be reduced with respect to \Cref{rrule:many common neighbors} and let~$\mc$ be an optimal clustering.
	Let $C\in \mc$ be a cluster of size~$t >2k + 2d$ and let~$\sigma(C) = (v_1,v_2,\dots,v_t)$ be a fuzzy degeneracy ordering of $C$. Then $D_{\sigma(C)}[1,t-2d-2k-1]$ induces a real clique.
\end{lemma}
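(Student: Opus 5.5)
The plan is to take two vertices $v_i,v_j$ with $i<j\le t-2d-2k-1$ and show that $\{v_i,v_j\}$ is a real edge. Since $G$ is reduced with respect to \Cref{rrule:many common neighbors}, it suffices to show $|N^+(v_i)\cap N^+(v_j)|>k$. If this holds, the pair cannot be a nonedge or fuzzy edge, since the rule would still apply.

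To bound the common neighbourhood, we count inside $C$. Consider the set $D\coloneqq D_{\sigma(C)}[j+1,t]$ of vertices coming after both $v_i$ and $v_j$ in $\sigma(C)$. We have
\[
|D| = t-j \ge 2d+2k+1 .
\]
Since $\sigma(C)$ is a subsequence of a fuzzy degeneracy ordering, $v_i$ has at most $d$ fuzzy neighbours among later vertices, and so does $v_j$. Hence at least $|D|-2d\ge 2k+1$ vertices of $D$ are not fuzzy neighbours of either $v_i$ or $v_j$. Each such vertex $w$ lies in $C$ together with $v_i$ and $v_j$, so each of the pairs $\{v_i,w\}$ and $\{v_j,w\}$ is either a real edge or a nonedge inside the cluster. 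A nonedge inside the cluster is a negative mistake of weight at least $1$, because only fuzzy edges have weight $0$.

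Because $\cost(\mc)\le k$, there are at most $k$ negative mistakes inside $C$. Each negative mistake removes at most one vertex $w$ from being a common real neighbour. So at least $2k+1-k=k+1>k$ vertices $w$ are real neighbours of both $v_i$ and $v_j$. This gives the desired contradiction with reducedness.

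The main subtlety is the assumption that the cost is at most $k$. For an optimal clustering this holds in the relevant setting, namely a yes-instance; otherwise the statement would be read for solutions of cost at most $k$. I would state explicitly that we assume $\cost(\mc)\le k$.

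The bound above leaves slack: it only needs $t-j\ge 2d+k+1$. This suggests the authors perhaps also counted positive mistakes or used a slightly different argument, but the argument given suffices.
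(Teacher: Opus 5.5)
Your argument is correct and is essentially the paper's proof. The paper counts inside the last $2k+2d+1$ vertices of $\sigma(C)$ rather than all vertices after $v_j$, discards at most $2d$ fuzzy neighbours and at most $k$ nonneighbours, and likewise tacitly uses $\cost(\mc)\le k$ (the lemma is only ever applied to solutions). One correction to your closing aside: there is no slack, since with only $2d+k+1$ later vertices the count leaves just $(2d+k+1)-2d-k=1$ common real neighbour, so $2k+2d+1$ is exactly what is needed to reach $k+1$.
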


\begin{proof}
	Each vertex in $X \coloneqq D_{\sigma(C)}[1,t-2d-2k-1]$ has at most $d$ fuzzy neighbors in the last $2k+2d+1$ vertices of the fuzzy degeneracy ordering, that is $Y\coloneq D_{\sigma(C)}[t-2d-2k,t]$.
	
	Let $u$ and~$v$ be arbitrary vertices of~$X$. 
	Then~$u$ and~$v$ have in total at most $2d$ fuzzy neighbors in $Y$ and at most $k$ nonneighbors in $Y$. 	
	Since~$|Y|=2k+2d+1$, $Y$ contains at least $k+1$ common real neighbors of $u$ and $v$. Thus, $u$ and $v$ are neighbors in~$G$ by \Cref{rrule:many common neighbors}. 
	\end{proof}
Finding such large cliques with the aim of shrinking them seems hard, but as we will see, it is sufficient to shrink critical cliques. This is done by the next rule.
\begin{rrule}
	\label{rrule:redirect fuzzy edges}
	Let~$K$ be a critical real clique of size at least $k+3$ and let $v\in K$ be the vertex in~$K$ that appears first in some fuzzy degeneracy ordering~$\sigma$ of~$G$.
	Let $F$ be the set of fuzzy neighbors of $K$ and  
	for each $u\in F$, let $v_u$ be the first fuzzy neighbor of $u$ in $K$ according to~$\sigma$. For each $u\in F$ with $v_u \neq v$ make $\{u,v_u\}$ a nonedge with~$\omega(u,v_u) \coloneqq \omega(u,v)$ and make $\{u,v\}$ a fuzzy edge; then delete $v$ from $G$.
\end{rrule}

\begin{figure}[t]
	\begin{flushleft}
		\begin{tikzpicture}[xscale=0.6,yscale=0.4]
			\tikzstyle{real} = [-]
			\tikzstyle{fuzzy} = [dashed]
			\tikzstyle{non} = [dotted]
			\node[label=below:$v\text{=}v_1$](v1) at (0, 0) [shape = circle, draw, fill=black, scale=0.07ex]{};
			\node[label=below:$v_2$](v2) at (2, 0) [shape = circle, draw, fill=black, scale=0.07ex]{};
			\node[label=below:$v_3$](v3) at (4, 0) [shape = circle, draw, fill=black, scale=0.07ex]{};
			\node[label=below:$v_4$](v4) at (6, 0) [shape = circle, draw, fill=black, scale=0.07ex]{};
			\node[label=below:$v_5$](v5) at (8, 0) [shape = circle, draw, fill=black, scale=0.07ex]{};

			\node[](u1) at (1, 4) [shape = circle, draw, fill=black, scale=0.07ex]{};
			\node[](u2) at (3, 4) [shape = circle, draw, fill=black, scale=0.07ex]{};
			\node[](u3) at (5, 4) [shape = circle, draw, fill=black, scale=0.07ex]{};
			\node[](u4) at (7, 4) [shape = circle, draw, fill=black, scale=0.07ex]{};

			\path [non] (u1) edge (v1);
			\path [fuzzy,line width=0.2mm] (u1) edge (v2);
			\path [non] (u1) edge (v3);
			\path [non] (u1) edge (v4);
			\path [non] (u1) edge (v5);
			\path [fuzzy,line width=0.2mm] (u2) edge (v1);
			\path [non] (u2) edge (v2);
			\path [fuzzy,line width=0.2mm] (u2) edge (v3);
			\path [fuzzy,line width=0.2mm] (u2) edge (v4);
			\path [non] (u2) edge (v5);
			\path [fuzzy,line width=0.2mm] (u3) edge (v1);
			\path [non] (u3) edge (v2);
			\path [non] (u3) edge (v3);
			\path [non] (u3) edge (v4);
			\path [fuzzy,line width=0.2mm] (u3) edge (v5);
			\path [non] (u4) edge (v1);
			\path [non] (u4) edge (v2);
			\path [fuzzy,line width=0.2mm] (u4) edge (v3);
			\path [non] (u4) edge (v4);
			\path [fuzzy,line width=0.2mm] (u4) edge (v5);

			\path[draw,decorate,decoration={snake,amplitude=0.4mm},-stealth] (9,2) -- (11,2);

			\node[label=below:$v\text{=}v_1$](v12) at (12, 0) [shape = cross out, draw, color=red, scale=0.25ex]{};
			\node[label=below:$v_2$](v22) at (14, 0) [shape = circle, draw, fill=black, scale=0.07ex]{};
			\node[label=below:$v_3$](v32) at (16, 0) [shape = circle, draw, fill=black, scale=0.07ex]{};
			\node[label=below:$v_4$](v42) at (18, 0) [shape = circle, draw, fill=black, scale=0.07ex]{};
			\node[label=below:$v_5$](v52) at (20, 0) [shape = circle, draw, fill=black, scale=0.07ex]{};
			
			\node[](u12) at (13, 4) [shape = circle, draw, fill=black, scale=0.07ex]{};
			\node[](u22) at (15, 4) [shape = circle, draw, fill=black, scale=0.07ex]{};
			\node[](u32) at (17, 4) [shape = circle, draw, fill=black, scale=0.07ex]{};
			\node[](u42) at (19, 4) [shape = circle, draw, fill=black, scale=0.07ex]{};
			
			\path [fuzzy,line width=0.2mm] (u12) edge (v12);
			\path [non] (u12) edge (v22);
			\path [non] (u12) edge (v32);
			\path [non] (u12) edge (v42);
			\path [non] (u12) edge (v52);
			\path [fuzzy,line width=0.2mm] (u22) edge (v12);
			\path [non] (u22) edge (v22);
			\path [fuzzy,line width=0.2mm] (u22) edge (v32);
			\path [fuzzy,line width=0.2mm] (u22) edge (v42);
			\path [non] (u22) edge (v52);
			\path [fuzzy,line width=0.2mm] (u32) edge (v12);
			\path [non] (u32) edge (v22);
			\path [non] (u32) edge (v32);
			\path [non] (u32) edge (v42);
			\path [fuzzy,line width=0.2mm] (u32) edge (v52);
			\path [fuzzy,line width=0.2mm] (u42) edge (v12);
			\path [non] (u42) edge (v22);
			\path [non] (u42) edge (v32);
			\path [non] (u42) edge (v42);
			\path [fuzzy,line width=0.2mm] (u42) edge (v52);
			
			\node[draw,shape=rectangle,
			 minimum height=20,
			 minimum width=180,
			 line width=0.8,
			 color=blue,
			 ](e) at (4,-0.3){};
			 
			 \node[draw,shape=rectangle,
			 minimum height=20,
			 minimum width=120,
			 line width=0.8,
			 color=blue,
			 ](e) at (17,-0.3){};
			
			\node[](K) at (-2,0) {$K$};
			\node[](F) at (-2,4) {$F$};
			
			\node[](K') at (21.5,0) {$K'$};
			
		\end{tikzpicture}

	\end{flushleft}
	\caption{An application of \Cref{rrule:redirect fuzzy edges}.
		Fuzzy edges are dashed, nonedges are dotted. After swapping some fuzzy edges and nonedges,~$v$ has only fuzzy neighbors in~$F$ and can thus be safely deleted.}
	\label{fig:redirect fuzzy edges}
\end{figure}

\begin{lemma}
	\Cref{rrule:redirect fuzzy edges} is correct and can be exhaustively applied in~$\Oh(n^3)$ time.
\end{lemma}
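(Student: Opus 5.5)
The plan is to show that an instance of cost at most $k$ exists before the application of \Cref{rrule:redirect fuzzy edges} if and only if one exists afterwards, with $k$ unchanged. I would split the rule into two steps, the \emph{swap} and the \emph{deletion of~$v$}. Let~$G_1$ be the fuzzy graph after the swap and~$G'=G_1-v$ the final graph.

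\textbf{Structural facts.} Since~$K$ is a critical clique, all vertices of~$K$ have the same closed real neighborhood $N^+[K]$. Hence every vertex~$u\notin K$ is either a real neighbor of all of~$K$ or of none of it. In particular each~$u\in F$ has only fuzzy edges and nonedges to~$K$. Moreover, if~$v_u\neq v$, then~$v$ is not a fuzzy neighbor of~$u$: $v$ precedes every other vertex of~$K$ in~$\sigma$, so if $\{u,v\}$ were fuzzy we would have $v_u=v$. Thus $\{u,v\}$ is a nonedge, and the swap only exchanges the weight~$0$ on $\{u,v_u\}$ with the weight $\omega(u,v)$ on $\{u,v\}$. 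Consequently, for every~$u\in F$ the multiset of weights on pairs between~$u$ and~$K$ is the same in~$G$ and~$G_1$. Since $|K|\ge k+3\ge k+2$, \Cref{corr:critical cliques} gives that every clustering of cost at most~$k$, in~$G$ or in~$G_1$, puts~$K$ into a single cluster. For any clustering with~$K$ in one cluster, each~$u$ is either with all of~$K$ or with none of it. So its cost is identical in~$G$ and~$G_1$, and the swap step is safe.

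\textbf{Deletion of~$v$.} Deleting a vertex never increases the cost of the restricted clustering, so a solution for~$G_1$ yields one for~$G'$. For the converse, take a solution~$\mc'$ of~$G'$. The set $K'=K\setminus\{v\}$ still has size at least $k+2$ and is still a clique whose members all have the closed real neighborhood $N^+[K]\setminus\{v\}$. By \Cref{corr:critical cliques}, $K'$ lies in one cluster~$C'$. I then pin down~$C'$:
\begin{itemize}
\item every real neighbor~$x$ of~$K$ lies in~$C'$, since otherwise at least $|K'|>k$ real edges are cut;
\item every vertex $y\in C'$ outside $N^+[K]\cup F$ has only nonedges to~$K'$, each of weight at least~$1$, so it would cost more than~$k$.
\end{itemize}
Hence $N^+[K]\setminus\{v\}\subseteq C'\subseteq N^+[K]\cup F$. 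Now insert~$v$ into~$C'$. In~$G_1$, $v$ is real-adjacent to all of $N^+[K]$, fuzzy-adjacent to all of~$F$ after the swap, and every remaining vertex is outside~$C'$. So adding~$v$ creates no new mistake, and the resulting clustering of~$G_1$ has cost at most~$k$.

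I expect this containment argument for~$C'$ to be the main obstacle. It is exactly where the swap is needed: before the swap, a vertex of~$F$ could lie in~$C'$ while having a nonedge to~$v$, and then adding~$v$ would cost something.

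\textbf{Running time.} One application consists of computing a fuzzy degeneracy ordering in $\Oh(n^2)$ time and a critical clique of size at least $k+3$ in $\Oh(n^2)$ time, by grouping vertices with equal closed real neighborhoods. The swaps are done by scanning $F\times K$ in $\Oh(n^2)$ time. Each application deletes a vertex, so there are at most~$n$ applications, giving $\Oh(n^3)$ time overall.
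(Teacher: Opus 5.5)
Your proof is correct and uses the same ingredients as the paper's. By \Cref{corr:critical cliques}, the cluster of~$K$ (respectively of~$K\setminus\{v\}$) contains~$N^+[K]$ and otherwise only vertices of~$F$, and the swap preserves, for each~$u\in F$, the total nonedge weight between~$u$ and~$K$. The difference is organizational: you pass through the intermediate graph~$G_1$ and handle the deletion by re-inserting~$v$ at zero cost, since after the swap it is real-adjacent to~$N^+[K]$ and fuzzy-adjacent to all of~$F$. This states both directions of the equivalence more explicitly than the paper's direct comparison of~$G$ and~$G'$, and your running-time argument gives the same~$\Oh(n^3)$ bound.
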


\begin{proof}
	Let~$\mc$ be an optimal clustering for~$G$.
	By \Cref{corr:critical cliques}, since $K$ is a critical clique of size at least $k+3$,~$\mc$ contains a cluster~$C_K$ with all vertices of~$N_{G^{(+)}}^{\cap}[K] = N_G^+[K]$.
	Since~$|K|\ge k+3$, the only vertices of~$V \setminus N_G^+[K]$ in~$C_K$ are from~$F$.
	Let~$G'$ be the fuzzy graph obtained after applying \Cref{rrule:redirect fuzzy edges}.
	Note that~$K' \coloneqq K\setminus \{v\}$ is a critical clique of size at least~$k+2$ in~$G'$, so in an optimal clustering for~$G'$ the cluster~$C_{K'}$ containing~$N_{G'}^+[K']$ also additionally contains only vertices of~$F$.
	Thus, it suffices to show that for each~$u\in F$, the cost of adding~$u$ to the cluster containing~$N_{G'}^+[K']$ in~$G'$ is the same as the cost of adding~$u$ to the cluster containing~$N_{G}^+[K]$ in~$G$.
	Let~$u$ be an arbitrary vertex of~$F$. 
	If~$v_u = v$, then~$N^-_{G'}(u) \cap N^+_{G'}[K'] = N^-_G(u) \cap N^+_G[K]$, so the cost of adding~$u$ is the same for~$G$ and~$G'$.
	Otherwise, if~$v_u \neq v$, then $ N^-_{G'}(u) \cap N^+_{G'}[K'] = \left(N^-_G(u) \cap N^+_G[K]\right) \setminus \{v\} \cup \{v_u\} $, so the cost of adding~$u$ is also the same for~$G$ and~$G'$.

	One can compute the critical cliques from the modular decomposition in linear time~\cite{CH94,McCS94}. We then consider the critical cliques one by one.
	For each critical clique,~$F$ and~$v_u$ for all vertices~$u$ can be computed in~$\Oh(n)$ time.
	Since each application of \Cref{rrule:redirect fuzzy edges} removes a vertex, the rule can be applied at most~$n$ times.
	Thus, exhaustive application takes $\Oh(n^3)$~time.
\end{proof}
	
	\begin{lemma}
		\label{lem:kernel size degeneracy}
		Let~$(G=(V,E^+,E^0),k)$ be an instance  of \CC that is reduced with respect to Rules~\ref{rrule:remove isolated cliques}, \ref{rrule:many common neighbors}, and \ref{rrule:redirect fuzzy edges}. If~$|V| > 30k^3 d$, then~$(G,k)$ is a no-instance.
	\end{lemma}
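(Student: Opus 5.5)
We argue by contraposition: assume $(G,k)$ is a yes-instance that is reduced with respect to Rules~\ref{rrule:remove isolated cliques}, \ref{rrule:many common neighbors}, and~\ref{rrule:redirect fuzzy edges}, and fix an optimal clustering~$\mc$ with $\cost(\mc)\le k$. By \Cref{obs:cluster connected by real edges} we may assume every cluster is connected in~$G^{(+)}$. We show $|V|\le 30k^3d$.

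\textbf{Step 1: few clusters.} Every cluster~$C$ must be involved in at least one mistake. Indeed, if $C$ had no negative mistake inside and no positive mistake leaving it, then $C$ would be a union of connected components of~$G^{(+)}$. Since $C$ is connected in~$G^{(+)}$, it would be a single component whose pairs are all real or fuzzy edges, and \Cref{rrule:remove isolated cliques} would have removed it. Each mistake touches at most two clusters and has weight at least~$1$, so there are at most $2k$ clusters.

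\textbf{Step 2: few affected vertices.} Call a vertex \emph{affected} if it is an endpoint of a mistake. There are at most $2k$ affected vertices.

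\textbf{Step 3: bounding a single cluster.} This is the heart of the proof. Fix a cluster~$C$ of size $t>2k+2d$. By \Cref{lem:large cluster contains large clique}, the prefix $X=D_{\sigma(C)}[1,t-2d-2k-1]$ is a real clique, so $|C\setminus X|\le 2k+2d+1$. The unaffected vertices of~$X$ have real neighborhood exactly $N^+[C]$ restricted to real edges inside~$C$. We then try to partition the vertices of~$X$ into critical cliques. Two unaffected vertices $x,y\in X$ both have closed real neighborhood $C\setminus(\text{their nonneighbors and fuzzy neighbors in } C)$. They lie in the same critical clique exactly when they see $C\setminus X$ and the rest of~$C$ identically with respect to real edges.

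Here is the difficulty: fuzzy neighbors can split $X$ into many critical cliques. I would bound the number of distinct real-neighborhood types among unaffected vertices of~$X$ via degeneracy. Each vertex of~$X$ has at most $d$ fuzzy neighbors later in~$\sigma$. It has at most $k$ nonneighbors within~$C$, and these are all affected. So the closed real neighborhood of an unaffected $x\in X$ is $C$ minus a set of size at most $k+d$ contained in $A\cup F_x$, where $A$ is the set of affected vertices and $F_x$ is the set of fuzzy neighbors of~$x$.

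I would charge vertices to the fuzzy edges they are incident to. Vertices of~$X$ with no fuzzy neighbor in~$C$ and no nonneighbor in~$C$ form a single critical clique, so by \Cref{rrule:redirect fuzzy edges} it has size at most $k+2$. The remaining unaffected vertices of~$X$ all have a fuzzy neighbor inside~$C$. Counting these via the degeneracy ordering, using that $X$ itself is a real clique so fuzzy edges inside~$C$ must go to $C\setminus X$ or to affected parts, gives at most $d\cdot|C\setminus X|$ plus lower-order terms. Grouping them by their type gives critical cliques, each of size at most $k+2$ by \Cref{rrule:redirect fuzzy edges}, and at most $\Oh(k+d)$ many relevant types. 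The target bound per cluster is about $15k^2d$ (for $k,d\ge1$).

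\textbf{Step 4: summing up.} Multiplying the per-cluster bound by the at most $2k$ clusters from Step~1 gives at most $30k^3d$ vertices in total.

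The main obstacle is Step~3: controlling how many distinct critical cliques the fuzzy edges can split the big real clique~$X$ into. I expect to handle this by the degeneracy charging above, or alternatively via \Cref{rrule:many common neighbors}, which forces vertices with many common real neighbors to be adjacent. Arithmetic slack should absorb the constants.
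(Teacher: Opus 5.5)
Your Steps~1 and~4 match the paper. Step~3, which carries the whole argument, has a genuine gap, and the mechanism you propose for it does not work.

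A fuzzy degeneracy ordering bounds only the number of \emph{later} fuzzy neighbors of each vertex. Since $C\setminus X$ consists of the last $2k+2d+1$ vertices of $\sigma(C)$, this bounds the fuzzy degree of each $x\in X$ into $C\setminus X$ by~$d$. It says nothing about the fuzzy degree of a vertex $y\in C\setminus X$ into~$X$: a single such $y$ may be fuzzy-adjacent to all of~$X$ (a fuzzy star with its center placed last is $1$-degenerate). In that case every vertex of $X$ has a fuzzy neighbor in~$C$, and your ``clean'' class is empty. Degeneracy then yields only at most $d|X|$ fuzzy edges between $X$ and $C\setminus X$, not a bound of $d\cdot|C\setminus X|$ on the number of vertices involved. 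Your subsequent claim that only $\Oh(k+d)$ neighborhood types occur is likewise unjustified.

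The missing idea is that critical cliques are defined by real edges only. You should therefore control the \emph{real} edges from the big clique into the small remainder, using Rule~\ref{rrule:many common neighbors} and maximality, and ignore fuzzy edges altogether. This is what the paper does:
\begin{enumerate}
\item Extend $X$ to a maximal real clique $C_K\subseteq C$ and let $C_F\coloneqq C\setminus C_K$, so $|C_F|\le 2k+2d+1$.
\item Suppose some $v\in C_F$ had at least $k+2$ real neighbors in $C_K$. Then it would share more than $k$ real neighbors with every vertex of $C_K$, so by Rule~\ref{rrule:many common neighbors} it would already be real-adjacent to all of $C_K$, contradicting maximality. Hence at most $(k+1)(2k+2d+1)$ vertices of $C_K$ have a real neighbor in $C_F$.
\item At most $k$ further vertices of $C_K$ have a real neighbor outside~$C$, since each such edge is a positive mistake.
\item Every other vertex of $C_K$ has closed real neighborhood exactly $C_K$, however it is fuzzy-adjacent to $C_F$. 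These vertices therefore form a single critical clique, of size at most $k+2$ by Rule~\ref{rrule:redirect fuzzy edges}.
\end{enumerate}
Summing gives $|C|\le 2k^2+2kd+7k+4d+4\le 15k^2d$ directly, with no need to count types. In particular, a vertex $y$ that is fuzzy-adjacent to most of $X$ does not split $X$ into many twin classes; only its at most $k+1$ real neighbors in $C_K$ are split off.
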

	
	\begin{proof}
		Let~$(G,k)$ be a yes-instance and let~$\mc$ be a solution for~$(G,k)$.
		Let $C$ be a cluster of size at least $3k+2d+3$ and let~$\sigma(C) = (v_1,v_2,\dots,v_t)$ be a fuzzy degeneracy ordering of $C$.
		We show that~$|C| \le 14k^2 d$. 		
		Let~$X \coloneqq D_{\sigma(C)}[1,t-2d-2k-1]$ be the vertices of~$C$ except for the last $2k+2d+1$ vertices according to~$\sigma(C)$. 
		According to \Cref{lem:large cluster contains large clique} the vertices of~$X$ form a clique.
		Let $C_K \subseteq C$ be a maximal real clique in $G$ that contains $X$ and potentially parts of $Y \coloneqq C\setminus X$ and let $C_F \coloneqq C\setminus C_K$.
		Note that~$C_F \subseteq Y$ and thus~$|C_F| \leq 2k+2d+1$. 
		We can partition $C_K$ as follows (see also \Cref{fig:cluster structure degeneracy}).
		\begin{itemize}
			\item Let~$C_{KF} \subseteq C_K$ be the vertices in $C_K$ that have neighbors in~$C_F$,
			\item let $\widetilde{C}_K \subseteq C_K \setminus C_{KF}$ be the vertices in $C_K \setminus C_{KF}$ that have neighbors outside of~$C$, and
			\item let~$C^*_K = C_K \setminus (C_{KF} \cup \widetilde{C}_K)$ be the remaining vertices in~$C_K$.
		\end{itemize}
		
		Per definition, vertices in~$C^*_K$ only have fuzzy neighbors and nonneighbors outside of~$C$ and thus $C^*_K$ is a critical clique with~$N^+[C^*_K] = C_K$.
		Observe that $|\widetilde{C}_K| \leq k$ since every real edge with an endpoint outside of $C$ has to be deleted to obtain the cluster~$C$. 		
		
		Moreover, we have $|C_{KF}| \leq (k+1)\cdot (2k+2d+1)$:
		Otherwise, since $|C_F| \leq 2k+2d+1$, per pigeonhole principle there must be some vertex $v\in C_F$ that has at least $k+1$ neighbors in $C_{KF}$.
		Then, since $C_K$ is a clique of size at least~$k+2$, for each $u\in C_K$ there are $k+1$ common neighbors of $u$ and $v$ in $C_{KF}$.
		Thus, according to \Cref{rrule:many common neighbors}, $\{u,v\}\in E^+$ and therefore $C_K \cup \{v\}$ is a clique.
		This is a contradiction to the maximality of~$C_K$.
		
		Finally, we have~$|C^*_K| \leq k+2$ , since~$C^*_K$ is a critical clique and~$G$ is reduced with respect to \Cref{rrule:redirect fuzzy edges}.
		Thus, we can bound the size of~$C$:
		\begin{align*}
			|C| & = |C_F| + |C_{KF}| + |\widetilde{C}_K| + |C^*_K| \\
				& \leq (2k+2d+1) + (k+1) \cdot (2k+2d+1) + k + (k+2) \\
				& = 2k^2 + 2kd + 7k + 4d + 4 \\
				& \leq 15k^2 d. 				
		\end{align*}

		Since~$G$ is reduced with respect to \Cref{rrule:remove isolated cliques}, each cluster of~$\mc$ contains a vertex that is incident to a modified edge in~$G$, so~$\mc$ contains at most~$2k$ clusters, each of which contains at most $15k^2 d$ vertices.
		Thus,~$G$ contains at most~$ 30k^3 d$ vertices.	
	\end{proof}
	
\begin{figure}[t]
	\label{fig:cluster structure degeneracy}

\begin{tikzpicture}[xscale=0.6,yscale=0.4]
	\tikzstyle{real} = [-]
	\tikzstyle{fuzzy} = [dashed]
	\tikzstyle{non} = [dotted]
	\node[](v1) at (0, 0) [shape = circle, draw, fill=black, scale=0.07ex]{};
	\node[](v2) at (1.5, 0) [shape = circle, draw, fill=black, scale=0.07ex]{};
	\node[](v3) at (3, 0) [shape = circle, draw, fill=black, scale=0.07ex]{};
	\node[](v4) at (5, 0) [shape = circle, draw, fill=black, scale=0.07ex]{};
	\node[](v5) at (6.5, 0) [shape = circle, draw, fill=black, scale=0.07ex]{};
	\node[](v6) at (8, 0) [shape = circle, draw, fill=black, scale=0.07ex]{};
	\node[](v7) at (10, 0) [shape = circle, draw, fill=black, scale=0.07ex]{};
	\node[](v8) at (11.5, 0) [shape = circle, draw, fill=black, scale=0.07ex]{};
	\node[](v9) at (13, 0) [shape = circle, draw, fill=black, scale=0.07ex]{};
	\node[](v10) at (16, 0) [shape = circle, draw, fill=black, scale=0.07ex]{};
	\node[](v11) at (17.5, 0) [shape = circle, draw, fill=black, scale=0.07ex]{};
	\node[](v12) at (19, 0) [shape = circle, draw, fill=black, scale=0.07ex]{};
	
	\node[](u1) at (8, 5) [shape = circle, draw, fill=black, scale=0.07ex]{};
	\node[](u2) at (11, 5) [shape = circle, draw, fill=black, scale=0.07ex]{};
	\node[](C) at (1, 4) [shape = circle, scale=0.3ex]{$C$};

	\path [real,line width=0.3mm] (v4) edge (u1);
	\path [real,line width=0.3mm] (v5) edge (u1);
	\path [real,line width=0.3mm] (v6) edge (u2);
	\path [real,line width=0.3mm] (v8) edge (u2);
	
	\path [bend left=20,real,line width=0.3mm] (v7) edge (v10);
	\path [bend left=30,real,line width=0.3mm] (v8) edge (v12);
	\path [bend right=20,real,line width=0.3mm] (v9) edge (v11);

	\node[draw,shape=rectangle,
	minimum height=20,
	line width=1.4,
	color=mycolor5,
	fit= (v1) (v2) (v3) 
	](e) at (1.5,0){};
	
	\node[draw,shape=rectangle,
	minimum height=20,
	line width=1.4,
	color=mycolor4,
	fit= (v4) (v5) (v6) 
	](e) at (6.5,0){};

	\node[draw,shape=rectangle,
	minimum height=20,
	line width=1.4,
	color=mycolor2,
	fit= (v4) (v5) (v6) 
	](e) at (11.5,0){};
	
	\node[draw,shape=rectangle,
	minimum height=30,
	line width=1.4,
	color=mycolor1,
	fit= (v7) (v8) (v9) 
	](e) at (17.5,0){};
	
	\node[draw,shape=ellipse,
	minimum height=90,
	minimum width=420,
	line width=1,
	](e) at (10,0){};

	\node[draw,shape=rectangle,
	minimum height=30,
	minimum width=260,
	line width=1.4,
	color=mycolor3,
	fit= (v1) (v2) (v3) (v4) (v5) (v6) (v7) (v8) (v9) 
	](e) at (6.5,0){};

	\node[](CK) at (4,1.85) [shape = circle, scale=0.2ex]{\colorbox{mycolor3}{$C_{K}$}};
	
	\node[](CF) at (18,1.85) [shape = circle, scale=0.2ex]{\colorbox{mycolor1}{$C_{F}$}};
	
	\node[](C3) at (1.5,-1.5) [shape = circle, scale=0.2ex]{\colorbox{mycolor5}{$C^*_K$}};
	
	\node[](C2) at (6.5,-1.5) [shape = circle, scale=0.2ex]{\colorbox{mycolor4}{$\widetilde{C}_K$}};
	
	\node[](C1) at (11.5,-1.4) [shape = circle, scale=0.2ex]{\colorbox{mycolor2}{$C_{KF}$}};

	\draw [decorate,decoration={calligraphic brace,amplitude=10pt},xshift=0pt,yshift=0pt]
	(19,-0.5) -- (16,-0.5) node [black,midway,xshift=0cm,yshift=-0.67cm] 
	{$\leq 2k +2d+1$};
\end{tikzpicture}

	\caption{Structure of a cluster~$C$ after application of all reduction rules. $C_K$ is a clique and $C_F$ contains a subset of the last~$2k+2d+1$ vertices according to a given fuzzy degeneracy ordering.}

\end{figure}
	
\Cref{lem:kernel size degeneracy} directly implies \Cref{thm:degeneracy kernel}.
	
	\degeneracykernel*

\section{Fuzzy Closure Kernel}
We now exploit the closure of the graph~$G^{(0)}$ induced by the fuzzy edges. Recall that a graph~$H$ is \emph{$c$-closed} for some integer~$c$, if each pair of nonneighbors~$u$ and $v$ has less than~$c$ common neighbors. The~\emph{closure} of a graph~$H$ is the smallest number~$c$ for which~$H$ is~$c$-closed.
We call the closure of~$G^{(0)}$  the \emph{fuzzy closure} of~$G$ and say a fuzzy graph~$G$ is \emph{fuzzy $c$-closed}, if~$G^{(0)}$ is~$c$-closed.
For a given value~$c$, we call a pair of vertices~$u$ and $v$ \emph{bad}, if they have at least~$c$ common fuzzy neighbors, but~$\{u,v\}$ is not a fuzzy edge.
Thus a fuzzy graph~$G$ has fuzzy closure at most~$c$ if and only if~$G$ does not have any bad pair for~$c$.

We present a polynomial kernel for the parameter~$k+c$, where~$c$ is the fuzzy closure of the input fuzzy graph~$G$.
Note that if~$\{u,v\}$ is a fuzzy edge, then making~$\{u,v\}$ a real or nonedge may change the fuzzy closure of the graph.
Hence, the first rule is a special case of~\Cref{rrule:many common neighbors} that is only applied to nonedges.
 \begin{rrule}
 	\label{rrule:many common neighbors closure}
 	Let $\{u,v\}$ be a nonedge in $G$. If $|N^+(u)\cap N^+(v)| > k$, make $\{u,v\}$ a real edge and set~$k\coloneq k-\omega(u,v)$.
 \end{rrule}
 
 We can also state a similar rule for real edges that is directly implied by~\Cref{obs:many uncommon neighbors}.
 
  \begin{rrule}
 	\label{rrule:many one sided neighbors}
 	Let $\{u,v\}$ be a real edge in $G$. If $|N^+(u)\cap N^-(v)| > k$, make $\{u,v\}$ a nonedge and set~$k\coloneq k-\omega(u,v)$.
 \end{rrule}
 
 If~$u$ and $v$ have less than~$c$ common fuzzy neighbors, then they can not be a bad pair and we can apply a similar reduction rule for a fuzzy edge~$\{u,v\}$.
  
  \begin{rrule}
 	\label{rrule:many common neighbors fuzzy}
 	Let $\{u,v\}$ be a fuzzy edge in $G$ such that~$|N^0(u) \cap N^0(v)| < c$.
        \begin{itemize}
        \item If $|N^+(u)\cap N^+(v)| > k$, make $\{u,v\}$ a real edge.
        \item If $|N^+(u)\cap N^-(v)| > k$, make $\{u,v\}$ a nonedge.
        \end{itemize}
 \end{rrule}
 \begin{lemma}
 	\Cref{rrule:many common neighbors fuzzy} is correct and can be exhaustively applied in~$\Oh(n^3)$ time.
 	If~$G$ is fuzzy~$c$-closed, then the resulting graph is also fuzzy~$c$-closed.
 \end{lemma}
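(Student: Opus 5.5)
Three things need to be shown for \Cref{rrule:many common neighbors fuzzy}: it is correct, it can be applied exhaustively in $\Oh(n^3)$ time, and it keeps fuzzy $c$-closedness.

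\emph{Correctness.} The two branches are handled separately. Since $\{u,v\}$ is fuzzy, its weight is $0$. The cost of any clustering therefore does not depend on $\{u,v\}$ as long as the pair keeps weight~$0$. The new real edge or nonedge gets weight~$1$, however, so the rule must not create a new mistake in the relevant solutions.

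In the first branch we have $|N^+(u)\cap N^+(v)|>k$. By \Cref{obs:many common neighbors}, every clustering of cost at most $k$ puts $u$ and $v$ in the same cluster. So turning $\{u,v\}$ into a real edge does not change the cost of any clustering of cost at most~$k$. Conversely, a solution of the new instance has cost at most $k$ in the old instance, because removing a weight from the cost can only decrease it.

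The second branch is symmetric. By \Cref{obs:many uncommon neighbors}, in every solution $u$ and $v$ lie in different clusters, so a nonedge between them is never a mistake.

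These observations only use the real-edge and nonedge neighborhoods, and those neighborhoods are unaffected by the status of $\{u,v\}$ itself. Hence the arguments are valid in both directions, and the two instances have the same set of solutions.

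\emph{Running time.} Each application removes one fuzzy edge, so there are at most $\Oh(n^2)$ applications. A naive implementation would test each pair in $\Oh(n)$ time, giving $\Oh(n^3)$ work for a single pass. The point I would argue is that one pass suffices. Making a fuzzy edge real or a nonedge only enlarges the sets $N^+$ and $N^-$, so the counts $|N^+(u)\cap N^+(v)|$ and $|N^+(u)\cap N^-(v)|$ never decrease. The count $|N^0(u)\cap N^0(v)|$ can only decrease. So a pair that fails the test earlier might pass it later, and a single pass is not obviously exhaustive. To handle this, I would maintain the counts incrementally. When $\{x,y\}$ changes type, only the counts for pairs involving $x$ or $y$ change, and this costs $\Oh(n)$ updates. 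Since there are $\Oh(n^2)$ changes, this would give $\Oh(n^3)$ time overall. I expect this bookkeeping to be the only delicate step.

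\emph{Preservation of $c$-closedness.} This is the conceptual step. Removing the fuzzy edge $\{u,v\}$ from $G^{(0)}$ can create a bad pair in only two ways. First, $\{u,v\}$ itself becomes a fuzzy nonadjacent pair. But by the precondition it has fewer than $c$ common fuzzy neighbors, so it is not bad. Second, a pair $\{x,y\}$ that is not a fuzzy edge could gain common fuzzy neighbors. This is impossible, because deleting edges from $G^{(0)}$ only shrinks common fuzzy neighborhoods. Hence $G^{(0)}$ remains $c$-closed.
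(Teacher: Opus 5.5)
Your proposal is correct and follows the paper's proof. Correctness comes from \Cref{obs:many common neighbors} and \Cref{obs:many uncommon neighbors}, and fuzzy $c$-closedness is preserved because $\{u,v\}$ keeps fewer than $c$ common fuzzy neighbors while every other pair's common fuzzy neighborhood can only shrink. Your running-time argument is more careful than the paper's, which only asserts that checking all pairs and updating takes $\Oh(n^3)$ time: you note that the tested counts are monotone, so a single pass need not be exhaustive, and your incremental maintenance with $\Oh(n)$ count updates per changed pair is what actually justifies the $\Oh(n^3)$ bound.
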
 
 \begin{proof}
 	According to \Cref{obs:many common neighbors}, if $|N^+(u)\cap N^+(v)| > k$, then  any solution puts~$u$ and~$v$ in the same cluster, so we can safely make~$\{u,v\}$ a real edge.
 	Let~$G'$ be the fuzzy graph obtained by the application of the rule.
 	Since~$|N^0(u) \cap N^0(v)| < c$, the vertices $\{u,v\}$ are not a bad pair in~$G'$.
 	For every other pair of vertices~$\{x,y\} \neq \{u,v\}$, the number of common fuzzy neighbors does not increase by making~$\{u,v\}$ a real edge, so~$G'$ is still fuzzy $c$-closed.
 	We can exhaustively apply the rule by checking the common fuzzy/real neighborhood for each pair of vertices and updating the edges accordingly in~$\Oh(n^3)$ time. The correctness of the case $|N^+(u)\cap N^-(v)| > k$ can be shown analogously. 
 \end{proof}
 
 We will now show that we can modify the instance so that afterwards each vertex has a bounded number of real neighbors or a bounded number of fuzzy neighbors, which we then use to bound the size of clusters in a solution.
 
 \begin{lemma}
 	 \label{lem:not many real and fuzzy edges}
 	Let~$v$ be a vertex with more than~$2kc$ real neighbors and at least~$c$ fuzzy neighbors. 
 	Then~$v$ has a fuzzy neighbor~$u$ with $|N^+(v) \cap N^+(u)| > k$ or $|N^+(v) \cap N^-(u)| > k$.
 \end{lemma}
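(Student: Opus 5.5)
The plan is a counting argument by contradiction, combined with the assumption (implicit in this section) that $G$ is fuzzy $c$-closed, i.e., $G$ has no bad pair for~$c$. Suppose, contrary to the claim, that every fuzzy neighbor~$u$ of~$v$ satisfies both $|N^+(v)\cap N^+(u)|\le k$ and $|N^+(v)\cap N^-(u)|\le k$. We will produce a bad pair involving~$v$.

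First I fix $c$ distinct fuzzy neighbors $u_1,\dots,u_c$ of~$v$; they exist by assumption. Every pair $\{u_i,w\}$ with $u_i\neq w$ is exactly one of a real edge, a fuzzy edge or a nonedge. So for each~$i$, every vertex $w\in N^+(v)$ with $w\neq u_i$ is a real neighbor, a fuzzy neighbor or a nonneighbor of~$u_i$. Moreover, $u_i\notin N^+(v)$, because $\{u_i,v\}$ is fuzzy and $E^+\cap E^0=\emptyset$. Hence $u_i$ is distinct from every $w\in N^+(v)$.

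By the contrary assumption, at most $2k$ vertices of $N^+(v)$ are \emph{not} fuzzy neighbors of~$u_i$. Summing over $i=1,\dots,c$, at most $2kc$ vertices of $N^+(v)$ fail to be a fuzzy neighbor of some~$u_i$. Since $|N^+(v)|>2kc$, there is a vertex $w\in N^+(v)$ that is a fuzzy neighbor of all of $u_1,\dots,u_c$.

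Then $v$ and $w$ have at least $c$ common fuzzy neighbors, namely $u_1,\dots,u_c$. But $\{v,w\}$ is a real edge, so it is not a fuzzy edge. Thus $\{v,w\}$ is a bad pair, contradicting that $G$ is fuzzy $c$-closed.

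There is no substantive obstacle here. The only points to check are that the $u_i$ are distinct from the vertices of $N^+(v)$, so the trichotomy applies to every pair counted, and that the lemma relies on the standing assumption that the input (and, by the previous lemma, the reduced instance) is fuzzy $c$-closed.
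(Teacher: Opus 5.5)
Your proposal is correct and takes essentially the same route as the paper: both fix $c$ fuzzy neighbors of $v$, use fuzzy $c$-closedness to show that no real neighbor of $v$ can be a fuzzy neighbor of all of them, and conclude by pigeonhole over the $2c$ classes (real or nonneighbor of each $u_i$). You phrase it as a contradiction with a union bound rather than a direct pigeonhole, and your explicit check that $u_i\notin N^+(v)$ is a small detail the paper leaves implicit.
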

 
 \begin{proof}
 	Let~$X\subseteq N^0(v)$ be a set of $c$ fuzzy neighbors of~$v$.
 	Each real neighbor~$w \in N^+(v)$ of~$v$ has at least one real or nonneighbor in~$X$, since otherwise~$G$ would not be fuzzy $c$-closed. 
 	Since~$|N^+(v)| > 2kc$ and~$|X|=c $, per pigeonhole principle there must be a vertex~$u\in X$ with $|N^+(v) \cap N^+(u)| > k$ or $|N^+(v) \cap N^-(u)| > k$.
 \end{proof}
 
 \begin{rrule}
 	\label{rrule:not many real and fuzzy edges}
	Let~$v$ be a vertex with more than~$2kc$ real neighbors and at least~$c$ fuzzy neighbors. 
	Then, find a vertex~$u\in N^0(v)$ such that~$|N^+(v) \cap N^+(u)| > k$ and make~$\{u,v\}$ a real edge or find a vertex~$u\in N^0(v)$ such that~$|N^+(v) \cap N^-(u)| > k$ and make~$\{u,v\}$ a nonedge.
 \end{rrule}
 
 \begin{lemma}
 	\Cref{rrule:not many real and fuzzy edges} is correct and can be exhaustively applied in $\Oh(n^3)$ time.
 \end{lemma}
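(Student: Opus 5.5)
Correctness splits into two parts: that the rule can always be carried out when its precondition holds, and that each possible modification preserves the answer (and fuzzy $c$-closedness). For the first part I invoke \Cref{lem:not many real and fuzzy edges}. If $v$ has more than $2kc$ real neighbors and at least $c$ fuzzy neighbors, some $u \in N^0(v)$ satisfies $|N^+(v)\cap N^+(u)|>k$ or $|N^+(v)\cap N^-(u)|>k$. So at least one of the two alternatives of the rule is available.

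For the second part I treat the two cases separately.

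In the first case $|N^+(u)\cap N^+(v)|>k$. By \Cref{obs:many common neighbors}, every solution puts $u$ and $v$ into the same cluster. The fuzzy pair $\{u,v\}$ therefore contributes cost $0$ in every solution whether it stays fuzzy or becomes a real edge of weight $1$. Hence the instances are equivalent with the same $k$; no budget decrease is needed because the pair was fuzzy.

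In the second case $|N^+(v)\cap N^-(u)|>k$. By \Cref{obs:many uncommon neighbors}, applied with the roles of $u$ and $v$ as stated there, $u$ and $v$ lie in different clusters in every solution. Turning $\{u,v\}$ into a nonedge of weight $1$ thus again costs nothing in any solution. One point needs care: the observation must be true for every solution of cost at most $k$, not only for optimal ones. Separating $u$ from $v$ is forced because otherwise each of the more than $k$ vertices in $N^+(v)\cap N^-(u)$ yields a paid mistake: it is either cut from $v$, costing a real edge, or kept with $u$, costing a nonedge. Both weights are at least $1$ since these pairs are not fuzzy.

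I also check that fuzzy $c$-closedness is preserved, as in the proof of \Cref{rrule:many common neighbors fuzzy}. Removing a fuzzy edge never increases the number of common fuzzy neighbors of any pair. The only pair whose status changes is $\{u,v\}$ itself, which stops being a fuzzy edge and could now become a bad pair. I expect this to be the main obstacle. If $u$ and $v$ have at least $c$ common fuzzy neighbors, the new graph has a bad pair at $\{u,v\}$.

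My first attempt is to argue that this is harmless: the lemma only claims correctness, so closedness may be needed only for the kernel bound. Alternatively, one can apply the rule only when $|N^0(u)\cap N^0(v)|<c$. If that fails, I would instead choose $X$ in \Cref{lem:not many real and fuzzy edges} so that the $u$ found has fewer than $c$ common fuzzy neighbors with $v$. I would first simply claim correctness of the instance equivalence and defer closedness, mirroring the structure of the previous lemma.

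For the running time, each application converts one fuzzy edge into a nonfuzzy pair, so there are at most $\Oh(n^2)$ applications. Finding a suitable $v$ and $u$ requires, for each candidate pair, intersecting neighborhoods in $\Oh(n)$ time. To obtain $\Oh(n^3)$ overall, I would precompute for all pairs the counts $|N^+(x)\cap N^+(y)|$ and $|N^+(x)\cap N^-(y)|$. After changing a single pair $\{u,v\}$, these counts change only for pairs involving $u$ or $v$, and each such count changes by at most one, so an update costs $\Oh(n)$. This makes the amortized total $\Oh(n^3)$.
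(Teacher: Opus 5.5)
Your correctness argument is the paper's: \Cref{obs:many common neighbors} and \Cref{obs:many uncommon neighbors} force $u$ and $v$ together, respectively apart, in every solution of cost at most $k$. Converting the fuzzy pair therefore changes the cost of no solution, and $k$ stays unchanged. Deferring fuzzy $c$-closedness also matches the paper, which handles it in the subsequent lemma.

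Where you genuinely differ is the running time. The paper only states that checking the degrees of each vertex and the real/nonneighborhood of each fuzzy neighbor takes $\Oh(n^3)$ time. That bounds a single scan and does not discuss that an application can create new applicable pairs, since real neighborhoods grow and the relevant counts change. Your amortized argument handles this cascade: there are at most $|E^0|=\Oh(n^2)$ applications because the rule only destroys fuzzy edges, and each application costs $\Oh(n)$ to update the pairwise counts. This makes your argument the more rigorous of the two.

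It still lacks one piece: how the next applicable pair is found. Rescanning all pairs costs $\Oh(n^2)$ per application and thus $\Oh(n^4)$ overall. Instead, keep a worklist of applicable pairs $(v,u)$. After converting $\{u,v\}$, re-evaluate only the $\Oh(n)$ pairs containing $u$ or $v$, since only their counts, the degrees of $u$ and $v$, and the type of $\{u,v\}$ have changed.

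Regarding closedness, drop the proposed fixes. Restricting the rule to pairs with fewer than $c$ common fuzzy neighbors would change the rule. It would also lose the property the kernel needs afterwards, namely that every vertex has fewer than $c$ fuzzy or at most $2kc$ real neighbors. The paper instead notes that the only possible new bad pair is the converted $\{u,v\}$, and that the rule keeps applying at $v$ until $v$ has fewer than $c$ fuzzy neighbors.

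Similarly, your use of \Cref{lem:not many real and fuzzy edges} to guarantee that the rule can be carried out is valid only while the current graph is fuzzy $c$-closed. It is also not needed for this lemma, since correctness concerns only the applications actually performed.
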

 
 \begin{proof}
 	In the first case,~$v$ and~$u$ have to be in the same cluster, since otherwise we would have to delete more than~$k$ edges to separate them, thus we can safely make the fuzzy edge~$\{u,v\}$ a real edge. Analogously, in the second case~$v$ and~$u$ cannot be in the same cluster with at most~$k$ edge modifications, so we can safely make~$\{u,v\}$ a nonedge.
 	
 	Checking whether a vertex has sufficiently many real and fuzzy neighbors as well as checking for each fuzzy neighbor the real/nonneighborhood can be done in~$\Oh(n^3)$ time.
 \end{proof}
 
  We now show that after exhaustively applying~\Cref{rrule:not many real and fuzzy edges} we can assume that each vertex in~$G$ has at most~$2kc$ real neighbors or less than~$c$ fuzzy neighbors, and the resulting fuzzy graph is still fuzzy $c$-closed.
 \begin{lemma}
 	Let~$G'$ be the resulting fuzzy graph after exhaustively applying~\Cref{rrule:not many real and fuzzy edges}.
 	Then, for each vertex~$v\in V$ we have~$|N_{G'}^0(v)| < c$ or~$|N^+_{G'}(v)| \leq 2kc$ and~$G'$ is fuzzy $c$-closed.
 \end{lemma}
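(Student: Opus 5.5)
The proof has two parts: the degree dichotomy at the end of the process, and preservation of fuzzy $c$-closedness after every single application of \Cref{rrule:not many real and fuzzy edges}.

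\emph{Degree dichotomy.} I would first argue termination. Each application turns one fuzzy edge into a real edge or a nonedge, so there are at most $|E^0|$ applications and the procedure stops. Suppose that in $G'$ some vertex $v$ still has $|N^0_{G'}(v)| \ge c$ and $|N^+_{G'}(v)| > 2kc$. Provided $G'$ is fuzzy $c$-closed, \Cref{lem:not many real and fuzzy edges} applies to $G'$. It yields a fuzzy neighbor $u$ of $v$ with $|N^+(v)\cap N^+(u)|>k$ or $|N^+(v)\cap N^-(u)|>k$. Then the rule would still be applicable, contradicting exhaustiveness. So the first claim reduces to the closedness claim.

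\emph{Closedness.} I would prove by induction over the applications that every intermediate graph is fuzzy $c$-closed, starting from the fuzzy $c$-closed input. Fix one application, which changes a fuzzy edge $\{u,v\}$ into a real edge or a nonedge. Other pairs $\{x,y\}\ne\{u,v\}$ keep their fuzzy or nonfuzzy status, and their common fuzzy neighborhoods can only shrink, since one fuzzy edge is removed. So the only pair that could become bad is $\{u,v\}$ itself. This requires $|N^0(u)\cap N^0(v)|\ge c$ in the graph before the step.

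\emph{The main obstacle} is excluding this case. If $u$ and $v$ had fewer than $c$ common fuzzy neighbors, closedness would be immediate, exactly as in the proof for \Cref{rrule:many common neighbors fuzzy}. In general, however, $u$ and $v$ may have many common fuzzy neighbors, and then the new graph has a bad pair. I see two ways to proceed.
\begin{itemize}
\item \textbf{Use the earlier rules.} I would first try to choose $u$ with $|N^0(u)\cap N^0(v)|<c$. Such a $u$ should already have been handled by \Cref{rrule:many common neighbors fuzzy}, which suggests the lemma relies on that rule having been applied exhaustively beforehand. For the remaining case, where all candidate $u$ share at least $c$ common fuzzy neighbors with $v$, I would run a pigeonhole argument in the spirit of \Cref{lem:not many real and fuzzy edges}. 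The set $X$ of $c$ fuzzy neighbors would be taken inside $N^0(u)\cap N^0(v)$, aiming to derive a contradiction or to find another valid witness.
\item \textbf{Observe the degree drop.} Alternatively, the application reduces $|N^0(v)|$, so $v$ eventually falls below $c$ fuzzy neighbors. Closedness might then need to be restored via the bad-pair notion. Concretely, I would check whether the selection step can always pick $u$ so that $\{u,v\}$ is not bad afterwards.
\end{itemize}

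I expect this step, guaranteeing that the modified pair $\{u,v\}$ does not become bad, to be the crux. If neither route works, I would conclude that the lemma needs exhaustive application of \Cref{rrule:many common neighbors fuzzy} as a precondition, or a more careful choice of the witness $u$.
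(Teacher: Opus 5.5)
\textbf{There is a genuine gap.} Your reduction of the degree dichotomy to the closedness of $G'$ matches the paper. So does your observation that a single application can only make the converted pair $\{u,v\}$ bad. But the closedness claim, which you correctly call the crux, is never proved.

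Your inductive invariant, that every intermediate graph is fuzzy $c$-closed, is false, and your first route cannot rescue it. After exhaustive application of \Cref{rrule:many common neighbors fuzzy}, every admissible witness $u$ for \Cref{rrule:not many real and fuzzy edges} satisfies $|N^0(u)\cap N^0(v)|\ge c$; otherwise that rule would already have converted $\{u,v\}$. So the witness can never be chosen to avoid creating the bad pair $\{u,v\}$. This situation genuinely occurs, for example when $u$, $v$ and two further vertices form a fuzzy $K_4$ with $c=2$. Hence no pigeonhole argument inside $N^0(u)\cap N^0(v)$ will yield a contradiction.

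\textbf{The missing idea.} This is your second route carried out, and it is the paper's argument: give up the per-application invariant and restore closedness per center vertex.
\begin{enumerate}
\item Suppose $\{u,v\}$ becomes bad at an application centered at $v$. Then $|N^0(v)|\ge |N^0(u)\cap N^0(v)|\ge c$.
\item Also $|N^+(v)|>2kc$ still holds, because real degrees never decrease. So the rule is again applicable at $v$.
\item Each application at $v$ removes one fuzzy neighbor of $v$, so eventually $|N^0(v)|<c$. Then no pair containing $v$ can have $c$ common fuzzy neighbors.
\item All bad pairs created while processing $v$ contain $v$, so the graph is fuzzy $c$-closed again.
\end{enumerate}

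\textbf{Making step 2 rigorous.} The paper states without comment that the rule applies at $v$ again, although \Cref{lem:not many real and fuzzy edges} assumes closedness and the intermediate graph is not closed. Apply the rule at $v$ repeatedly until it stops before turning to another vertex. The pigeonhole argument of that lemma only needs more than $2kc$ real neighbors $w$ of $v$ with $\{v,w\}$ not bad. The more than $2kc$ real neighbors $v$ had when its processing began supply these. Those pairs were not bad at that point, since the graph was closed. They are never modified, and their common fuzzy neighborhoods with $v$ only shrink.

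Induction over processed centers, rather than over single applications, then gives that $G'$ is fuzzy $c$-closed. Your first paragraph completes the proof from there.
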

 
 \begin{proof}
 	Assume towards a contradiction that there is a vertex~$v\in V$ in~$G'$ with more than~$2kc$ real neighbors and at least~$c$ fuzzy neighbors.
 	Then, according to \Cref{lem:not many real and fuzzy edges} there is a vertex~$u\in N^0(v)$ with $|N^+(v) \cap N^+(u)| > k$ or $|N^+(v) \cap N^-(u)| > k$ and thus \Cref{rrule:not many real and fuzzy edges} can be applied, leading to a contradiction.

 	Thus, it remains to show that~$G'$ is also fuzzy~$c$-closed. 	
 	Let~$v$ be a vertex in~$G$ with more than~$2kc$ real neighbors and at least~$c$ fuzzy neighbors to which~\Cref{rrule:not many real and fuzzy edges} is applied. 
 	Note that applying~\Cref{rrule:not many real and fuzzy edges} once can only create a bad pair~$\{u,v\}$ that is formed by a fuzzy edge~$\{u,v\}$ that was made into a real edge or nonedge, since~$G$ is~$c$-closed and for each other pair of vertices the common fuzzy neighborhood either stays the same or gets smaller.
 	As long as~$v$ still has at least~$c$ fuzzy neighbors,~\Cref{rrule:not many real and fuzzy edges} can be applied again, where after each application a fuzzy neighbor of~$v$ is then made a real or a nonneighbor, so the number of fuzzy neighbors of~$v$ decreases with each application of~\Cref{rrule:not many real and fuzzy edges}. 
 	Hence, after exhaustively applying~\Cref{rrule:not many real and fuzzy edges},~$v$ has less than~$c$ fuzzy neighbors and can thus not be in a bad pair anymore.  	 	
 \end{proof} 

 In the following we assume~$G$ is reduced with regards to~\Cref{rrule:not many real and fuzzy edges} and define~$V_+$ to be the set of vertices with less than~$c$ fuzzy neighbors and~$V_0 \coloneqq V \setminus V_+$.
 Note that each vertex in~$V_0$ has at most~$2kc$ real neighbors.
 Next, we show that if a cluster is large, then we can partition it into a real clique and a fuzzy clique.

 \begin{lemma}
	\label{lem:clique and fuzzy clique in pure cluster}
	Let~$\mc$ be a solution for~$(G,k)$ and let~$C\in \mc$ be a cluster. 
	If~$|C| > 4kc + k + c +2$, then~$C\cap V_0$ is a clique in~$G^{(0)}$ and~$C\cap V_+$ is a clique in~$G^{(+)}$.
\end{lemma}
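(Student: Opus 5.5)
The plan is to bound, inside $C$, the real, fuzzy and nonneighbors of any two vertices of $C$, and then apply either fuzzy $c$-closedness or one of the reduction rules to the pair. The starting point is that $\mc$ is a solution. Every nonedge has strictly positive integer weight, so $C$ contains at most $k$ nonedges in total: $\sum_{x\in C}|N^-(x)\cap C|\le 2k$, and for any two vertices $u,w$ we have $|N^-(u)\cap C|+|N^-(w)\cap C|\le k+1$, since the only nonedge that could be counted twice is $\{u,w\}$ itself. Write $s \coloneqq |C|$, so $s \ge 4kc+k+c+3$. We may assume $G$ is reduced with respect to \Cref{rrule:many common neighbors closure}, \Cref{rrule:many common neighbors fuzzy} and \Cref{rrule:not many real and fuzzy edges}, and is fuzzy $c$-closed by the preceding lemma.

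\emph{Vertices of $V_0$.} Let $u,w\in C\cap V_0$ be distinct. Each has at most $2kc$ real neighbors in all of $G$, hence at most $2kc$ in $C$. Together they have at most $k+1$ nonneighbor slots in $C$. So the number of vertices of $C\setminus\{u,w\}$ that fail to be a common fuzzy neighbor of $u$ and $w$ is at most $4kc+k+1$. This gives
\[ |N^0(u)\cap N^0(w)| \ge s-2-4kc-k-1 \ge c. \]
Since $G$ is fuzzy $c$-closed, a pair with at least $c$ common fuzzy neighbors must be a fuzzy edge. Hence $\{u,w\}\in E^0$, and $C\cap V_0$ is a clique in $G^{(0)}$.

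\emph{Vertices of $V_+$.} Let $u,w\in C\cap V_+$ be distinct. Each has fewer than $c$ fuzzy neighbors, and together they have at most $k+1$ nonneighbor slots in $C$. So
\[ |N^+(u)\cap N^+(w)| \ge s-2-2(c-1)-(k+1) > k, \]
which is easily implied by the size bound because $4kc \ge c$ whenever $k\ge1$; the case $k=0$ is handled the same way. We then finish by cases on the type of $\{u,w\}$.
\begin{itemize}
\item If $\{u,w\}$ is a nonedge, \Cref{rrule:many common neighbors closure} would apply, contradicting reducedness.
\item If $\{u,w\}$ is a fuzzy edge, then $|N^0(u)\cap N^0(w)|\le|N^0(u)|<c$ because $u\in V_+$. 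So the first case of \Cref{rrule:many common neighbors fuzzy} would apply, again a contradiction.
\end{itemize}
Hence $\{u,w\}\in E^+$, and $C\cap V_+$ is a clique in $G^{(+)}$.

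The main obstacle is the counting in the $V_0$ case. There we need at least $c$ common fuzzy neighbors, and this requires using both the global degree bound $2kc$ for each of the two vertices and the fact that the at most $k$ nonedges of $C$ are shared between $u$ and $w$. Charging $k$ nonneighbors to each vertex separately would still fit comfortably inside the slack of the bound $4kc+k+c+2$, so the argument is robust. We also need to be careful that reducedness is applied to the current instance, in which Rules~\ref{rrule:many common neighbors closure}--\ref{rrule:not many real and fuzzy edges} are all exhausted.
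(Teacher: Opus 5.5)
Your argument follows the paper's proof. In the $V_0$ case you combine the global bound of $2kc$ real neighbors with the at most $k$ nonedges inside $C$, obtain at least $c$ common fuzzy neighbors, and invoke fuzzy $c$-closedness. In the $V_+$ case you combine the bound of fewer than $c$ fuzzy neighbors with the same nonedge bound, obtain more than $k$ common real neighbors, and invoke Rules~\ref{rrule:many common neighbors closure} and~\ref{rrule:many common neighbors fuzzy}. You also check explicitly that the precondition $|N^0(u)\cap N^0(w)|<c$ of Rule~\ref{rrule:many common neighbors fuzzy} holds because $u\in V_+$, which the paper leaves implicit. For $k\ge 1$ the proof is correct.

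The aside that ``the case $k=0$ is handled the same way'' is false. Your $V_+$ count gives at least $(4k-1)c+2$ common real neighbors, and you need this to exceed $k$. For $k,c\ge 1$ it does, since $(4k-1)c+2-k\ge 3k+1$; note that $4kc\ge c$ on its own does not give this, because you must also absorb the $k$. For $c=0$ the set $V_+$ is empty. For $k=0$, however, the bound reads $2-c$, which is not positive once $c\ge 2$.

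The statement itself fails for $k=0$. Take $c=5$, two disjoint real cliques $A$ and $B$ on four vertices each, all pairs between $A$ and $B$ fuzzy, and no nonedges. Then:
\begin{itemize}
\item $G^{(0)}=K_{4,4}$ is $5$-closed, since nonadjacent same-side pairs have exactly $4$ common neighbors.
\item Every vertex has $4<c$ fuzzy neighbors, so $V_+=V$.
\item None of Rules~\ref{rrule:many common neighbors closure}--\ref{rrule:not many real and fuzzy edges} applies, since there are no nonedges and the real neighborhoods of the endpoints of any fuzzy pair are disjoint.
\item The clustering $\{A\cup B\}$ has cost $0$, and its only cluster has $8>4kc+k+c+2=7$ vertices but is not a real clique.
\end{itemize}

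The paper's proof has the same blind spot: its inequality $4kc+k+c+2>2c+2k+2$ needs $k\ge 1$. The right fix is to assume $k\ge 1$ explicitly rather than assert that $k=0$ goes through. This is harmless for the kernel, because for $k=0$ Rule~\ref{rrule:remove isolated cliques} already empties every yes-instance.
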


 \begin{proof}
 	Let~$u$ and $v$ be vertices of $C \cap V_0$. 
 	By definition of~$V_0$, each of~$u$ and~$v$ has at most~$2kc$ real neighbors in~$C$.
 	Moreover, since~$C$ is a cluster of a solution,~$u$ and~$v$ have in total at most~$k$ nonneighbors in~$C$.
 	Thus, more than~$c$ vertices of~$C$ are common fuzzy neighbors of~$u$ and~$v$. 
 	Since~$G$ has fuzzy closure~$c$, the vertices~$u$ and~$v$ must form a fuzzy edge.

	Now let~$u$ and~$v$ be arbitrary vertices of $C \cap V_+$.
 	Per definition of~$V_+$, each of~$u$ and~$v$ has less than~$c$ fuzzy neighbors in~$C$.
 	Moreover, since~$C$ is a cluster,~$u$ and~$v$ have in total at most~$k$ nonneighbors in~$C$.
 	Since~$|C| > 4kc + k + c +2 > 2c+2k+2$, more than~$k$ vertices of~$C$ are common real neighbors of~$u$ and~$v$.
 	Thus, according to \Cref{rrule:many common neighbors closure} and  \Cref{rrule:many common neighbors fuzzy}, the vertices~$u$ and~$v$ must form a real edge.
 \end{proof}
 
 We can bound the size of a clique in $G$ with the following reduction rule.
 \begin{rrule}
 	\label{rrule:bounded cliques c-closure}
 	Let~$K$ be a real clique of size at least~$3kc+2$ and let~$v$ be an arbitrary fixed vertex of~$K$. Then for each vertex~$u \in V \setminus K$ with~$\{u,v\} \notin E^0$ do the following:
 	\begin{itemize}
 		\item Let~$w \in K$ be a vertex such that~$\{u,w\}\in E^0$, if such a vertex exists.
 		\item If $\{u,v\}$ is a real edge, make $\{u,w\}$ a real edge with~$\omega(u,w) \coloneqq \omega(u,v)$.
 		\item If $\{u,v\}$ is a nonedge, make $\{u,w\}$ a nonedge with~$\omega(u,w) \coloneqq \omega(u,v)$.
 	\end{itemize} 
 	Afterwards, delete~$v$.
 \end{rrule}
 
 \begin{figure}[t]
 	\begin{flushleft}
 		\begin{tikzpicture}[xscale=0.6,yscale=0.4]
 			\tikzstyle{real} = [-]
 			\tikzstyle{fuzzy} = [dashed]
 			\tikzstyle{non} = [dotted]
 			\node[label=below:$v$](v1) at (0, 0) [shape = circle, draw, fill=black, scale=0.07ex]{};
 			\node[](v2) at (2, 0) [shape = circle, draw, fill=black, scale=0.07ex]{};
 			\node[](v3) at (4, 0) [shape = circle, draw, fill=black, scale=0.07ex]{};
 			\node[](v4) at (6, 0) [shape = circle, draw, fill=black, scale=0.07ex]{};
 			\node[](v5) at (8, 0) [shape = circle, draw, fill=black, scale=0.07ex]{};
 			
 			\node[](u1) at (0, 4) [shape = circle, draw, fill=black, scale=0.07ex]{};
 			\node[](u2) at (2, 4) [shape = circle, draw, fill=black, scale=0.07ex]{};
 			\node[](u3) at (4, 4) [shape = circle, draw, fill=black, scale=0.07ex]{};
 			\node[](u4) at (6, 4) [shape = circle, draw, fill=black, scale=0.07ex]{};
 			\node[](u5) at (8, 4) [shape = circle, draw, fill=black, scale=0.07ex]{};
 			
 			\path [non] (u1) edge (v1);
 			\path [fuzzy,line width=0.2mm] (u1) edge (v2);
 			\path [non] (u1) edge (v3);
 			\path [non] (u1) edge (v4);
 			\path [non] (u1) edge (v5);
 			\path [fuzzy,line width=0.2mm] (u2) edge (v1);
 			\path [non] (u2) edge (v2);
 			\path [fuzzy,line width=0.2mm] (u2) edge (v3);
 			\path [fuzzy,line width=0.2mm] (u2) edge (v4);
 			\path [non] (u2) edge (v5);
 			\path [real] (u3) edge (v1);
 			\path [real] (u3) edge (v2);
 			\path [real] (u3) edge (v3);
 			\path [real] (u3) edge (v4);
 			\path [real] (u3) edge (v5);
 			\path [real] (u4) edge (v1);
 			\path [non] (u4) edge (v2);
 			\path [fuzzy,line width=0.2mm] (u4) edge (v3);
 			\path [non] (u4) edge (v4);
 			\path [fuzzy,line width=0.2mm] (u4) edge (v5);
 			\path [non] (u5) edge (v1);
 			\path [non] (u5) edge (v2);
 			\path [non] (u5) edge (v3);
 			\path [non] (u5) edge (v4);
 			\path [non] (u5) edge (v5);

 			\path[draw,decorate,decoration={snake,amplitude=0.4mm},-stealth] (9,2) -- (11,2);

 			\node[label=below:$v$](v12) at (12, 0) [shape = cross out, draw, color=red, scale=0.25ex]{};
 			\node[](v22) at (14, 0) [shape = circle, draw, fill=black, scale=0.07ex]{};
 			\node[](v32) at (16, 0) [shape = circle, draw, fill=black, scale=0.07ex]{};
 			\node[](v42) at (18, 0) [shape = circle, draw, fill=black, scale=0.07ex]{};
 			\node[](v52) at (20, 0) [shape = circle, draw, fill=black, scale=0.07ex]{};
 			
 			\node[](u12) at (12, 4) [shape = circle, draw, fill=black, scale=0.07ex]{};
 			\node[](u22) at (14, 4) [shape = circle, draw, fill=black, scale=0.07ex]{};
 			\node[](u32) at (16, 4) [shape = circle, draw, fill=black, scale=0.07ex]{};
 			\node[](u42) at (18, 4) [shape = circle, draw, fill=black, scale=0.07ex]{};
 			\node[](u52) at (20, 4) [shape = circle, draw, fill=black, scale=0.07ex]{};
 			
 			\path [fuzzy,line width=0.2mm] (u12) edge (v12);
 			\path [non] (u12) edge (v22);
 			\path [non] (u12) edge (v32);
 			\path [non] (u12) edge (v42);
 			\path [non] (u12) edge (v52);
 			\path [fuzzy,line width=0.2mm] (u22) edge (v12);
 			\path [non] (u22) edge (v22);
 			\path [fuzzy,line width=0.2mm] (u22) edge (v32);
 			\path [fuzzy,line width=0.2mm] (u22) edge (v42);
 			\path [non] (u22) edge (v52);
 			\path [real] (u32) edge (v12);
 			\path [real] (u32) edge (v22);
 			\path [real] (u32) edge (v32);
 			\path [real] (u32) edge (v42);
 			\path [real] (u32) edge (v52);
 			\path [fuzzy,line width=0.2mm] (u42) edge (v12);
 			\path [non] (u42) edge (v22);
 			\path [real] (u42) edge (v32);
 			\path [non] (u42) edge (v42);
 			\path [fuzzy,line width=0.2mm] (u42) edge (v52);
 			\path [non] (u52) edge (v12);
 			\path [non] (u52) edge (v22);
 			\path [non] (u52) edge (v32);
 			\path [non] (u52) edge (v42);
 			\path [non] (u52) edge (v52);
 				
 			\node[draw,shape=rectangle,
 			minimum height=20,
 			minimum width=180,
 			line width=0.8,
 			color=blue,
 			](e) at (4,-0.3){};
 			
 			\node[draw,shape=rectangle,
 			minimum height=20,
 			minimum width=120,
 			line width=0.8,
 			color=blue,
 			](e) at (17,-0.3){};
 			
 			\node[](K) at (-2,0) {$K$};
 			\node[](F) at (-2,4) {$V \setminus K$};
 		\end{tikzpicture}

 	\end{flushleft}
 	\caption{An application of \Cref{rrule:bounded cliques c-closure}.
 		Fuzzy edges are dashed, nonedges are dotted. After swapping some real edges and nonedges with fuzzy edges, each real neighbor/nonneighbor of~$v$ in~$V\setminus K$ is also a real neighbor/nonneighbor of every other vertex in~$K$ and thus~$v$ can be safely deleted.}
 	\label{fig:redirect fuzzy edges closure}
 \end{figure} 
 
 \begin{lemma}
 	\Cref{rrule:bounded cliques c-closure} is correct and can be exhaustively applied in~$\Oh(n^4)$ time.
 	 If~$G$ is fuzzy~$c$-closed, then the resulting graph is also fuzzy~$c$-closed. 
 \end{lemma}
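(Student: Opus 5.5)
The plan is to prove correctness by showing that $G$ and the graph $G'$ produced by \Cref{rrule:bounded cliques c-closure} have the same optimal cost among clusterings of cost at most $k$. I assume throughout that $G$ is reduced with respect to Rules~\ref{rrule:many common neighbors closure}--\ref{rrule:not many real and fuzzy edges}.

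\textbf{Step 1: the clique stays together.} Since $|K|\ge 3kc+2\ge k+2$, splitting $K$ deletes more than $k$ real edges. Hence $K$ lies in one cluster of every solution of $G$. Likewise $K'\coloneqq K\setminus\{v\}$, which still has size at least $k+2$, lies in one cluster of every solution of $G'$. So it suffices to compare clusterings of $G$ with $K$ together against clusterings of $G'$ with $K'$ together. These correspond bijectively by deleting or reinserting $v$ into the cluster of $K'$. Pairs inside $K$ are real edges and cost nothing. Pairs not touching $K$ are unchanged by the rule.

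\textbf{Step 2: a fixed vertex $u\notin K$.} For each such $u$ I compare the total cost of the pairs between $u$ and $K$ in $G$ with the cost of the pairs between $u$ and $K'$ in $G'$.
\begin{itemize}
\item If $\{u,v\}\in E^0$, nothing changes for $u$ and the pair $\{u,v\}$ costs $0$.
\item If $\{u,v\}$ is not fuzzy and a fuzzy partner $w\in K'$ exists, the rule moves the type and weight of $\{u,v\}$ onto $\{u,w\}$. Since $v$ and $w$ always share a cluster, $\{u,v\}$ is a mistake in $G$ if and only if $\{u,w\}$ is one in $G'$, with the same weight.
\item If no fuzzy partner exists, $u$ is joined to every vertex of $K'$ by a real edge or a nonedge, and $|K'|\ge 2k+1$. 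If at least $k+1$ of these are real edges, $u$ must join the cluster of $K$. In that case $\{u,v\}$ is real, since a nonedge would have been turned into a real edge by \Cref{rrule:many common neighbors closure}, so deleting $v$ loses no cost. Otherwise at least $k+1$ are nonedges and $u$ cannot join that cluster. Then $\{u,v\}$ is a nonedge, since a real edge would have been turned into a nonedge by \Cref{rrule:many one sided neighbors} (because $K'\subseteq N^+(v)$), so again no cost is lost.
\end{itemize}
This gives equal costs in both directions, which proves correctness.

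\textbf{Step 3: closure preservation.} Turning fuzzy edges into real edges or nonedges and deleting $v$ can only shrink common fuzzy neighborhoods. Therefore a new bad pair must be one of the former fuzzy edges $\{u,w\}$ with $w\in K$. But $w$ has at least $3kc+1>2kc$ real neighbors. Because $G$ is reduced with respect to \Cref{rrule:not many real and fuzzy edges}, $w\in V_+$, so $w$ has fewer than $c$ fuzzy neighbors and cannot belong to a bad pair. This is where the bound $3kc+2$ is used.

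\textbf{Step 4: running time.} Each application deletes a vertex, and the edge redirection costs $\Oh(n^2)$. The main obstacle is locating a real clique of size $3kc+2$ efficiently. I would first try working inside $V_+$: by the argument of \Cref{lem:clique and fuzzy clique in pure cluster}, two vertices with many common real neighbors are already adjacent there, so a large clique can be grown greedily from a vertex and its real neighborhood after checking pairwise adjacency. This gives $\Oh(n^2)$ per candidate and $\Oh(n^4)$ overall. If this does not work, it is enough to apply the rule only to cliques that certify a large cluster, such as critical cliques, which are computable in linear time. Whichever cliques the later size bound needs, those are the ones the rule must be able to find.
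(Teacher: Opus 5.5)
Your Steps~1--3 are correct and follow the paper's proof. $K$ and $K\setminus\{v\}$ stay together by \Cref{corr:critical cliques}, and your per-vertex case split is the paper's: fuzzy $\{u,v\}$, a fuzzy partner $w$, or no fuzzy partner, where the last case is settled by \Cref{rrule:many common neighbors closure} or \Cref{rrule:many one sided neighbors}. Closure is preserved because every $w\in K$ has more than $2kc$ real neighbors and so lies in $V_+$.

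The gap is Step~4. The rule applies to \emph{every} real clique of size at least $3kc+2$. Exhaustive application in $\Oh(n^4)$ time therefore needs a polynomial-time procedure that finds such a clique whenever one exists. Since \textsc{Clique} is NP-hard in general, this has to come from reducedness, and you leave it at ``I would first try''. The procedure as you describe it does not work. Take a real neighbor $x$ of $u\in K$ whose only real neighbor is $u$ and whose pairs to $K\setminus\{u\}$ are all fuzzy. For $c\ge 2$ such an $x$ can survive Rules~\ref{rrule:many common neighbors closure}--\ref{rrule:not many real and fuzzy edges}. Then $N^+[u]$ is not a clique, and a greedy extension that picks $x$ first gets stuck at $\{u,x\}$. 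Your fallback to critical cliques proves a different statement. It would also not suffice later: \Cref{lem:size of a cluster bound} uses reducedness for the real clique $C\cap V_+$, which need not be critical.

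What is missing is a filter together with a structural claim.
\begin{itemize}
\item For each $u$, the candidates are the $w\in N^+(u)$ with $|N^+(u)\cap N^+(w)|\ge 3kc$. Every vertex of a real clique $K\ni u$ of size $3kc+2$ is a candidate.
\item Each candidate has more than $2kc$ real neighbors. Hence it lies in $V_+$ and has fewer than $c$ fuzzy neighbors.
\item Each candidate has at most $k$ nonneighbors in $K$. Otherwise \Cref{rrule:many one sided neighbors} would have made $\{u,w\}$ a nonedge, since $K\setminus\{u\}\subseteq N^+(u)$.
\item So each candidate has at least $k+2$ real neighbors in $K$. By \Cref{rrule:many common neighbors closure} for nonedges and \Cref{rrule:many common neighbors fuzzy} for fuzzy pairs, it is then real-adjacent to all of $K$.
\item Any two candidates have at least $|K|-2>k$ common real neighbors, so they are adjacent by the same two rules.
\end{itemize}
Hence $u$ together with its candidates is a clique containing $K$. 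Computing and checking this set takes $\Oh(n^2)$ time per $u$, so one search costs $\Oh(n^3)$. Since each application deletes a vertex, exhaustive application costs $\Oh(n^4)$.
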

  \begin{proof}
  	Let~$G'$ be the fuzzy graph after applying \Cref{rrule:bounded cliques c-closure} with vertex set~$V' \coloneqq V \setminus \{v\}$.
  	We show that~$(G,k)$ is a yes-instance if and only if~$(G',k)$ is a yes-instance.
  	First, suppose that~$(G,k)$ is a yes-instance and $\mc$ is a clustering for~$G$ with~$\cost(\mc) \leq k$.
 	According to \Cref{corr:critical cliques} all vertices of~$K$ are contained in the same cluster, which we denote by~$C_K$.
 	Now consider the clustering~$\mc' \coloneqq (\mc \setminus \{C_K\}) \cup \{C_K \setminus \{v\}\} $ for~$G'$ that is identical to~$\mc$ except that~$C_K$ is replaced by~$C'_K \coloneqq C_K \setminus\{v\}$.
 	Note that if for each~$u\in V\setminus K$ we have~$\{u,v\} \in E_G^0$, then~$G'$ is an induced subgraph of~$G$ and thus~$\cost(\mc') \leq \cost(\mc) \leq k$.
 	Otherwise, let~$U \subseteq V \setminus K$ be the vertices that are not fuzzy neighbors of~$v$.
 	Consider some fuzzy edge~$\{u,w\}\in E_G^0$ with~$u\in U$, $w\in K$ that becomes a real/nonedge in~$G'$ and contributes to~$\cost(\mc')$. 
 	Then,~$\{u,v\}$ must also be a real/nonedge in~$G$ and equally contribute to~$\cost(\mc)$, so we get~$\cost(\mc') \leq \cost(\mc) \leq k$ again.
 
	Now, suppose that~$(G',k)$ is a yes-instance and $\mc'$ is a clustering for~$G'$ with $\cost(\mc') \leq k$.
	Note that, since~$|K\setminus \{v\}| \geq 3kc+1$, again according to \Cref{corr:critical cliques} all vertices of~$K\setminus \{v\}$ are contained in the same cluster, which we denote by~$C'_K$.
	Consider the clustering~$\mc \coloneqq (\mc' \setminus \{C'_K\}) \cup \{C'_K \cup \{v\}\} $ for~$G'$ that is identical to~$\mc'$ except that~$C'_K$ is replaced by~$C_K \coloneqq C'_K \cup \{v\}$.
 	We have
 	\begin{align*}
 		\cost(\mc) - \cost(\mc') & = \omega(E_G^+(C_K,V\setminus C_K))+\omega(E_G^-(C_K)) - \omega(E_{G'}^+(C'_K,V'\setminus C'_K))-\omega(E_{G'}^-(C'_K)) 		
 	\end{align*}
 	and it remains to show that~$\cost(\mc) - \cost(\mc') \leq 0$.
 	Clearly, the contribution of edges between two vertices of~$K \setminus \{v\}$ is the same for both~$\cost(\mc)$ and~$\cost(\mc')$.
 	The same is true for edges between two vertices of~$V\setminus K$.
 	Now consider an arbitrary vertex~$u \in V\setminus K$.
 	If~$\{u,v\}\in E_G^0$, then $\{u,v\}$ does not contribute to neither~$\cost(\mc)$ nor~$\cost(\mc')$ and for each~$w\in K$ the contribution of~$\{u,w\}$ to~$\cost(\mc)$ and~$\cost(\mc')$ is the same.
 	
 	Thus, now assume that~$\{u,v\}\notin E_G^0$.
 	If $|N_G^0(u) \cap K| \geq 1$, then there is a vertex~$w\in K$ with~$\{u,w\}\in E_G^0$ as stated in \Cref{rrule:bounded cliques c-closure}.
 	If $\{u,v\}$ is a real edge in~$G$, then $\{u,w\}$ becomes a real edge in~$G'$.
 	Thus, if~$\{u,v\}$ contributes to~$\omega(E_G^+(C_K,V\setminus C_K))$, then~$\{u,w\}$ contributes to~$\omega(E_{G'}^+(C'_K,V'\setminus C'_K))$.
 	Similarly, if $\{u,v\}$ is a nonedge in~$G$, then $\{u,w\}$ becomes a nonedge in~$G'$ and if~$\{u,v\}$ contributes to~$\omega(E_G^-(C_K))$, then~$\{u,w\}$ contributes to~$\omega(E_{G'}^-(C'_K))$.
 	For each~$z\in K \setminus\{w\}$ the contribution of~$\{u,z\}$ to~$\cost(\mc)$ and~$\cost(\mc')$ is the same.
 	
 	Otherwise, we either have $|N_G^+(u) \cap K| > k+1$ or~$|N_G^-(u) \cap K| > k+1$. 	
 	If~$|N_G^+(u) \cap K| > k+1$, then~$u$ and each vertex in~$K$ have more than~$k$ common neighbors in~$K$, so according to \Cref{rrule:many common neighbors closure} and \Cref{rrule:many common neighbors fuzzy} for each~$w\in K$ we have that~$\{u,w\}$ is a real edge, which implies $u\in C_K$ and thus the contribution of edges in~$E_G^+(u,K)$ to $\omega(E_G^+(C_K,V\setminus C_K))$ is zero.
 	Similarly, if~$|N_G^-(u) \cap K| > k+1$, then according to \Cref{rrule:many one sided neighbors} and \Cref{rrule:many common neighbors fuzzy} for each~$w\in K$ we have that~$\{u,w\}$ is a nonedge, which implies~$u\notin C_K$ and the contribution of edges in~$E_G^-(u,K)$ to $\omega(E_G^-(C_K))$ is zero.
	All in all we get~$\cost(\mc) - \cost(\mc') \leq 0$.
	 	
	Note that, since~$G$ is fuzzy $c$-closed, applying \Cref{rrule:bounded cliques c-closure} once could potentially only create a bad pair~$\{u,w\}$ that was made a real edge or a nonedge for a vertex~$w\in K$.
	However, since~$K$ is a real clique of size~$3kc+2$, each vertex~$w\in K$ is in~$V_+$, has less than~$c$ fuzzy neighbors and thus cannot be part of a bad pair.
 	
 	In~$\Oh(n^3$) time we can find a clique of size~$3kc+2$ or verify that such a clique does not exist:
 	Let~$K$ be a clique of size~$3kc+2$ and let~$u$ be a vertex in~$K$. 
 	Clearly, only real neighbors of~$u$ with at least~$3kc$ common neighbors with~$u$ can possibly be part of~$K$. 
 	Now, let~$w\in N^+(u)$ with~$|N^+(u) \cap N^+(w)| \geq 3kc$ be such a neighbor of~$u$
 	and assume towards a contradiction that~$w\notin K$ and~$K \cup \{w\}$ is not a clique.
 	Since~$w\in V_+$, we have~$|N^0(w) \cap K| < c$.
 	Moreover, $|N^-(w) \cap K| < k+1$, since otherwise~$\{u,w\}$ would be a nonedge according to \Cref{rrule:many one sided neighbors}.
 	This implies~$|N^+(w) \cap K| > k+1$ and thus~$w$ has more than~$k$ common neighbors with each vertex in~$K$, so~$K\cup \{w\}$ must also be a clique according to \Cref{rrule:many common neighbors closure}.
 	Thus, we can iterate over all vertices and for a fixed~$u\in V$ we check whether~$u$ together with each neighbor~$v$ with at least~$3kc$ common neighbors forms a clique, which can be done in $\Oh(n^3)$ time. 
 	
 	Given a clique of size at least~$3kc+2$, we can then apply \Cref{rrule:bounded cliques c-closure} in~$\Oh(n^2)$ time by iterating through~$V\setminus K$ and~$K$.
 	Since each application of~\Cref{rrule:bounded cliques c-closure} removes a vertex, it can be applied at most~$n$ times and thus also exhaustively applied in~$\Oh(n^4)$ time. 
 \end{proof}

The final reduction rule will help bounding the size of  fuzzy cliques inside  clusters.

 \begin{rrule}
 	\label{rrule:bound fuzzy clique in pure clusters}
 	Let~$K$ be a real clique in~$G$ such that~$X \coloneqq N_{G^{(+)}}^{\cap}(K)\cap \{v\in V \mid N^+(v) = K \}$ is a fuzzy clique with~$|X| > 2k + c + 1$. 
 	Then, remove an arbitrary vertex~$v \in X$ from~$G$.  
 \end{rrule}
 
 \begin{figure}[t]
 	\begin{center}
 		\begin{tikzpicture}[xscale=0.5,yscale=0.5]
 			\tikzstyle{real} = [-]
 			\tikzstyle{fuzzy} = [dashed,line width=0.2mm]
 			\tikzstyle{non} = [dotted]
 			\node[](v1) at (0, -3) [shape = circle, draw, fill=black, scale=0.07ex]{};
			\node[](v2) at (2, -3) [shape = circle, draw, fill=black, scale=0.07ex]{};
			\node[](v3) at (1, -1) [shape = circle, draw, fill=black, scale=0.07ex]{};
 			
 			\node[](x1) at (-2, 1) [shape = circle, draw, fill=black, scale=0.07ex]{};
 			\node[](x2) at (1, 1) [shape = circle, scale=0.07ex]{};
 			\node[](x3) at (4, 1) [shape = cross out, draw, color=red, scale=0.25ex,label=left:$v$]{};
 			\node[](xx) at (1.1, 1) [shape = circle, scale=0.4ex]{$\dots$};
 			
 			\node[](y11) at (6, 4) [shape = circle, draw, fill=black, scale=0.07ex]{};
 			\node[](y12) at (8, 4) [shape = circle, scale=0.3ex]{$\dots$};
 			\node[](y13) at (10, 4) [shape = circle, draw, fill=black, scale=0.07ex]{};
 			\node[](y21) at (6, -2) [shape = circle, draw, fill=black, scale=0.07ex]{};
 			\node[](y22) at (8, -2) [shape = circle, scale=0.3ex]{$\dots$};
 			\node[](y23) at (10, -2) [shape = circle, draw, fill=black, scale=0.07ex]{};

 			\path[] (v1) edge (v2);
 			\path[] (v1) edge (v3);
 			\path[] (v2) edge (v3);
 			
 			\path[] (v1) edge (x1);
 			\path[] (v1) edge (x2);
 			\path[] (v1) edge (x3);
 			\path[] (v2) edge (x1);
 			\path[] (v2) edge (x2);
 			\path[] (v2) edge (x3);
 			\path[] (v3) edge (x1);
 			\path[] (v3) edge (x2);
 			\path[] (v3) edge (x3); 			
 			
 			\path[fuzzy] (y11) edge (x1);
 			\path[fuzzy] (y11) edge (x2);
 			\path[fuzzy] (y11) edge (x3);
 			\path[fuzzy] (y12) edge (x1);
 			\path[fuzzy] (y12) edge (x2);
 			\path[fuzzy] (y12) edge (x3);
 			\path[fuzzy] (y13) edge (x1);
 			\path[fuzzy] (y13) edge (x2);
 			\path[fuzzy] (y13) edge (x3);
 			
 			\path[non] (y21) edge (x1);
 			\path[non] (y21) edge (x2);
 			\path[non] (y21) edge (x3);
 			\path[non] (y22) edge (x1);
 			\path[non] (y22) edge (x2);
 			\path[non] (y22) edge (x3);
 			\path[non] (y23) edge (x1);
 			\path[non] (y23) edge (x2);
 			\path[non] (y23) edge (x3);

 			\node[draw,shape=ellipse,
			minimum height=20,
			minimum width=20,
			line width=0.8,
			color=blue,
			fit= (x1) (x2) (x3) 
			](e) at (1,1){};

			\node[draw,shape=ellipse,
			minimum height=20,
			minimum width=20,
			line width=0.8,
			color=blue,
			fit= (v1) (v2) (v3)  
			](e) at (1,-2){};

			\node[draw,shape=ellipse,
			minimum height=20,
			minimum width=80,
			line width=0.8,
			color=blue,
			fit= (y21) (y22) (y23)
			](e) at (8,-2){};
			
			\node[draw,shape=ellipse,
			minimum height=20,
			minimum width=80,
			line width=0.8,
			color=blue,
			fit= (y11) (y12) (y13)
			](e) at (8,4){};

 			\node[](K) at (-2,-2) {$K$};
 			\node[](F) at (-4.5,1) {$X$};
 			
 			\node[](V1) at (12,4) {$V_1$};
 			\node[](V2) at (12,-2) {$V_2$};
 			
 		\end{tikzpicture}
 	\end{center}
 	\caption{An illustration of \Cref{rrule:bound fuzzy clique in pure clusters}.
 		Fuzzy edges are dashed, nonedges are dotted. Only relevant edges are drawn. Vertices in~$V_1$ are fuzzy neighbors of all vertices in~$X$, vertices in~$V_2$ are not part of the same cluster as~$K \cup X$.
 		}
 	\label{fig:bounded fuzzy clique}
 \end{figure}
 
 \begin{lemma}
 	\Cref{rrule:bound fuzzy clique in pure clusters} is correct and can be exhaustively applied in~$\Oh(n^4) $ time.
 \end{lemma}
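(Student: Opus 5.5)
The plan is to show that $(G,k)$ and $(G-v,k)$ are equivalent, and then to bound the running time separately.

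\emph{Forward direction.} Deleting a vertex never increases the cost of a clustering. Restricting a solution for $G$ to $V\setminus\{v\}$ therefore gives a solution for $G-v$.

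\emph{Backward direction.} Take a solution $\mc'$ for $G-v$ with $\cost(\mc')\le k$. I want to insert $v$ somewhere without paying anything.

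First I fix the structure around $X' \coloneqq X\setminus\{v\}$, which has $|X'|\ge 2k+c+1$.
\begin{itemize}
\item Every $x\in X'$ satisfies $N^+(x)=K$, and $X'$ is a fuzzy clique. Hence any two vertices of $X'$ have the same real neighborhood $K$ and are joined by a fuzzy edge. Separating them gains nothing, so an exchange argument should give a solution where all of $X'$ lies in one cluster together with $K$.
\item The concrete exchange: pick the vertex $x\in X'$ whose placement is cheapest and move all of $X'$ into $\mc'(x)$. Moving vertices with identical real neighborhoods and pairwise fuzzy edges into one cluster costs at most the minimum over their individual costs, provided I account for the nonedges they have towards other vertices of that cluster.
\item So I assume that some cluster $C'$ contains $X'$.
\end{itemize}

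Now I put $v$ into $C'$ and show that this adds no cost.
\begin{itemize}
\item Every real edge of $v$ goes to $K$. I need $K\subseteq C'$, or else cut edges towards $v$ are charged. Each vertex of $K$ is adjacent to all of $X'$, so a vertex of $K$ outside $C'$ already costs at least $|X'|>k$ in $\mc'$. Hence $K\subseteq C'$.
\item The nonedges of $v$ inside $C'$ are the hard part. Let $y\in C'\setminus(K\cup X)$. Since $N^+(x)=K$ for all $x\in X'$, the pair $\{y,x\}$ is fuzzy or a nonedge for every $x\in X'$.
\item Split $V\setminus(K\cup X)$ as in \Cref{fig:bounded fuzzy clique}: $V_1$ contains the vertices $y$ that have at least $c$ fuzzy neighbors in $X'$, and $V_2$ contains the rest.
\item If $y\in V_2$, then $y$ has more than $|X'|-c\ge 2k+1$ non-neighbors in $X'$. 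Placing $y$ in $C'$ would cost more than $k$, so $y\notin C'$.
\item If $y\in V_1$, then $y$ and $v$ have at least $c$ common fuzzy neighbors in $X'$ (because $X$ is a fuzzy clique). Fuzzy $c$-closedness then makes $\{y,v\}$ a fuzzy edge, so it costs nothing.
\end{itemize}
Hence adding $v$ to $C'$ adds zero cost.

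\emph{Running time.} Apply the rule to critical cliques, or to groups of vertices with identical real neighborhood. Group vertices by $N^+(v)$, set $K$ to the common set, and check that $K$ is a clique. Inside each group, testing whether it contains a large fuzzy clique is hard in general. Two routes to try:
\begin{itemize}
\item Argue via $c$-closure that the group structure makes the relevant fuzzy clique easy to find.
\item Remove vertices greedily and iterate, which gives polynomially many checks and $\Oh(n^4)$ time overall.
\end{itemize}

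\emph{Main obstacle.} I expect the exchange argument that $X'$ may be assumed inside one cluster containing $K$ to be the main obstacle. The $K\subseteq C'$ step is clean, since an edge cut from $K$ to $X'$ costs too much. Placing $X'$ into a cluster also containing $K$ is forced in a similar way: each $x\in X'$ has $|K|$ real edges to $K$, and if $|K|$ is small we use the maximum over the $X'$ vertices. I would handle it by choosing one $x\in X'$ and showing that both $v$ and, if needed, the other vertices of $X'$ can copy $x$'s cluster at cost at most that of $x$, using the $V_1$/$V_2$ dichotomy above.
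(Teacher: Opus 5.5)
\textbf{There is a genuine gap.} Your forward direction is fine, and so is your insertion of $v$ via the $V_1$/$V_2$ split and fuzzy $c$-closedness, which is what the paper does. But that insertion rests on the claim that some cluster of $\mc'$ contains all of $K\cup X'$, and you do not prove it.

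The exchange you sketch does not work as stated. The principle ``moving vertices with identical real neighborhoods and pairwise fuzzy edges into one cluster costs at most the minimum of their individual costs'' is false. Vertices of $X'$ can have different nonneighborhoods, so some $x'\in X'$ may have nonedges into $\mc'(x)$ that $x$ does not have. Your proposed repair via the dichotomy is circular. The step ``$y\in V_2$ implies $y\notin C'$'' already assumes that (almost all of) $X'$ lies in $C'$. Likewise, you only obtain $K\subseteq C'$ after assuming $X'\subseteq C'$. The remark ``if $|K|$ is small we use the maximum over the $X'$ vertices'' does not supply an argument either.

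\textbf{The missing step is to anchor on $K$ rather than on $X'$, and then count.}
\begin{enumerate}
\item Any two vertices of $K$ have all of $X'$ as common real neighbors, and $|X'|>k$. So by \Cref{obs:many common neighbors}, $K$ lies in one cluster $C_K$ of $\mc'$.
\item Each $x\in X'\setminus C_K$ has all of its real edges to $K$ cut, and these edges are distinct for distinct $x$. Hence $|X'\setminus C_K|\le k$ and $|X'\cap C_K|>k+c$.
\item Each $y\in V_2$ has more than $2k$ nonneighbors in $X'$, hence more than $k$ in $X'\cap C_K$, so $y\notin C_K$. Each $y\in V_1$ is a fuzzy neighbor of all of $X$ by closure.
\item Now move some $x\in X'\setminus C_K$ into $C_K$. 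This creates no new nonedge cost, because $C_K\subseteq K\cup X'\cup V_1$. It cuts no new real edge, because $N^+(x)=K$. It saves the cut edges from $x$ to $K$, so the cost strictly decreases.
\end{enumerate}
After finitely many such moves, $K\cup X'\subseteq C_K$, and from there your argument for $v$ goes through.

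\textbf{Your running-time part rests on a misreading of the rule.} The rule does not ask you to find a large fuzzy clique inside a group. The set $X$ is determined by $K$, and $X$ itself must be a fuzzy clique. So for each vertex $u$ you set $K\coloneqq N^+(u)$ and compute $X=\{x\mid N^+(x)=K\}$. You then check that $K$ is a real clique, that $X$ is a fuzzy clique, and that $|X|>2k+c+1$, all in $\Oh(n^2)$ time. Each application deletes a vertex, so this gives $\Oh(n^4)$ overall.
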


\begin{proof}
	Let~$\mc$ be a solution, so~$\cost(\mc) \leq k$.
	Note that since~$|X| > k$ each pair of vertices in~$K$ has at least~$k+1$ common neighbors in~$X$, so all vertices of~$K$ are part of the same cluster~$C_K$ in~$\mc$ according to~\Cref{obs:many common neighbors}.
	Moreover, since $|X| > 2k+c $, there is a subset~$X_K$ of size at least~$k+c$ that also belongs to the same cluster~$C_K$.
	
	We first show that we can assume~$X_K = X$:
	Suppose that there is some vertex~$u\in X \setminus X_K$ that is not in the cluster~$C_K$, but in some other cluster~$C_u$.
	Consider the clustering~$\mc' \coloneqq (\mc \setminus \{C_K, C_u\}) \cup \{C_K\cup \{u\}, C_u \setminus \{u\} \} $ where~$u$ is assigned to~$C_K$ instead.
	Since~$u$ is adjacent to each vertex in~$K$, but has no real neighbors outside of~$K$, and~$X$ is a fuzzy clique we have~
	\begin{align*}
			\cost(\mc) - \cost(\mc') & =  \omega(E^+(u,C_K))+\omega(E^-(u,C_u\setminus \{u\})) - \omega(E^+(u,C_u\setminus \{u\})) -\omega(E^-(u,C_K) \\
			& \geq |K| + \omega(E^-(u,C_u\setminus \{u\})) - \omega(E^-(u,C_K\setminus \{K\cup X\})) \\
			& \geq |K| - \omega(E^-(u,C_K\setminus \{K\cup X\}))
	\end{align*}
	We show that~$E^-(u,C_K\setminus \{K\cup X\}) = \emptyset$, which implies~$\cost(\mc) - \cost(\mc') > 0$.
	Consider some vertex~$w\in V \setminus (K \cup X)$.
	Per definition of~$X$, there is no real edge between~$w$ and vertices in~$X$.
	Thus, we can distinguish between two cases:
	\begin{enumerate}
		\item If~$|N^0(w)\cap X| >  c$, then for each~$x\in X$ we have that~$w$ and~$x$ have more than~$c$ common fuzzy neighbors in~$X$, which implies $N^0(w)\cap X = X$.
		\item Otherwise, we have~$|N^-(w)\cap X| >  2k$ and thus~$|N^-(w)\cap X_K| > k$.
		This implies that~$w$ is not in the cluster~$C_K$.
	\end{enumerate}
	
	Let~$V_1$ be the set of vertices~$w \in V \setminus (K \cup X)$ with~$|N^0(w)\cap X| >  c$ and let~$V_2$ be the set of vertices~$w\in V \setminus (K \cup X)$ with~$|N^-(w)\cap X_K| >  k$.
	Since each vertex in~$V_1$ is a fuzzy neighbor of~$u$ and each vertex in~$V_2$ is not part of~$C_K$, we have~$E^-(u,C_K\setminus \{K\cup X\}) = \emptyset$ and thus~$\cost(\mc) - \cost(\mc') > 0$. Hence,~$\mc'$ is also a solution with~$\cost(\mc')<k$ and we can assume~$X_K = X$.
	
	Now, let~$G'$ be the fuzzy graph obtained from~$G$ after removing some arbitrary~$v\in X$ from~$G$ and let~$X' = X \setminus \{v\}$.
	We show that~$G$ has a clustering~$\mc$ with~$\cost(\mc) \leq k$ if and only if~$G'$ has a clustering~$\mc'$ with~$\cost(\mc') \leq k$.
	
	First, suppose $\mc$ is a clustering for~$G$ with~$\cost(\mc) \leq k$.
	Since~$G'$ is an induced subgraph of~$G$, we have~$\cost(\mc') \leq \cost(\mc) \leq k$.
	
	Second, suppose that $\mc'$ is a clustering for~$G'$ with~$\cost(\mc') \leq k$.
	With the same arguments as above, since $|X'|=|X|-1$, we can see that all the vertices~$K \cup X'$ are part of the same cluster, which we denote by~$C'_K$.
	Consider the clustering~$\mc$ for~$G$ that is identical to~$\mc'$ except that the cluster~$C'_K$ is replaced by~$C_K = C'_K \cup \{v\}$.
	We have~\begin{align*}
		\cost(\mc) - \cost(\mc') & =  \omega(E^-(v,C_K')) + \omega(E^+(v,V \setminus C_K')).
	\end{align*}
	Note that~$v$ has no real edges to vertices outside of~$K \cup X'$ per definition of~$X$, so~$E^+(v,V \setminus C_K') = \emptyset$.
	Moreover,~$v$ has no nonedges to vertices of~$K \cup X' \subseteq C'_K$. 
	For each vertex~$u \in V_1$ we have~$\{u,v\} \in E^0$.
	Furthermore, each vertex~$u \in V_2$ is not part of the cluster~$C'_K$ due to the same arguments as above.
	Therefore, we also have~$E^-(v,C_K')=\emptyset$ and thus~$\cost(\mc)=\cost(\mc') \leq k$.
	
	Concerning the running time, we consider each vertex~$v \in V$, determine~$K =N^+(v)$ and~$X = N_{G^{(+)}}^{\cap}(K)$ and then check if~$K$ and~$X$ satisfy the requirements of the rule in~$\Oh(n^3)$ time.
	Since each rule application removes a vertex, we can exhaustively apply the rule in~$\Oh(n^4)$ time.
\end{proof}

 We can now bound the size of a cluster in a solution. 

  \begin{lemma}
  	\label{lem:size of a cluster bound}
	Let~$(G,k)$ be a yes-instance that is reduced with respect to \Cref{rrule:remove isolated cliques} and Rules \ref{rrule:many common neighbors closure} to~\ref{rrule:bound fuzzy clique in pure clusters}.
	Let~$\mc$ be a solution for~$(G,k)$ and let~$C\in \mc$ be a cluster.
	Then,~$|C| \leq 3kc^2 + 3kc + 3k + 3c + 3$.
\end{lemma}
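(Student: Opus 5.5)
If $|C| \le 4kc+k+c+2$ there is nothing to show, because this quantity is at most the claimed bound. So I will assume $|C| > 4kc+k+c+2$. Then \Cref{lem:clique and fuzzy clique in pure cluster} applies, and it splits the cluster as $C = C_+ \cup C_0$ with $C_+ \coloneqq C\cap V_+$ a real clique and $C_0 \coloneqq C\cap V_0$ a fuzzy clique. I will bound the two parts separately.

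\textbf{Bounding $C_+$.} Since $C_+$ is a real clique and $G$ is reduced with respect to \Cref{rrule:bounded cliques c-closure}, we get $|C_+| \le 3kc+1$.

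\textbf{Bounding $C_0$: which vertices are touched by the solution.} This part is the main obstacle.
\begin{itemize}
\item Every vertex of $C_0$ has at most $2kc$ real neighbors.
\item At most $k$ real edges leave $C$, so at most $k$ vertices of $C_0$ have a real neighbor outside $C$.
\item Nonedges inside $C$ cost at least $1$ each, so at most $2k$ vertices of $C_0$ are incident to a nonedge inside $C$.
\end{itemize}
Call the remaining vertices of $C_0$ \emph{clean}. A clean vertex $x$ has $N^+(x) \subseteq C$. Inside $C_0$, every pair is a fuzzy edge, so $N^+(x) \subseteq C_+$. Moreover, $x$ has no nonneighbor in $C_+$, so $x$ is adjacent to all of $C_+$ apart from its fuzzy neighbors there. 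Since $x\in V_0$ may have many fuzzy neighbors, I cannot conclude $N^+(x)=C_+$ directly.

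\textbf{Grouping clean vertices.} I will instead group the clean vertices by their real neighborhood $K_x \coloneqq N^+(x)$. Each $K_x$ is a subset of the real clique $C_+$, hence itself a real clique. All clean vertices with the same $K_x = K$ lie in $N^{\cap}_{G^{(+)}}(K)\cap\{v \mid N^+(v)=K\}$, and this set is a fuzzy clique because it is contained in $C_0$. \Cref{rrule:bound fuzzy clique in pure clusters} then bounds each such group by $2k+c+1$. (I may need to check that the rule's set $X$ is exactly this group; vertices outside $C$ with the same $N^+$ would also have to be fuzzy-adjacent to the group, and if not, I would argue via the case analysis inside the rule's proof.)

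\textbf{Bounding the number of groups.} I need at most about $c+1$ distinct sets $K_x$, and here I would use closure. Suppose $x,y$ are clean vertices with $K_x\neq K_y$, and say $w\in K_x\setminus K_y$. Since $y$ is clean, $y$ has no nonneighbor in $C_+$, so $\{y,w\}$ is a fuzzy edge. I would try to show that a vertex $w\in C_+\subseteq V_+$ has fewer than $c$ fuzzy neighbors. Each clean vertex $x$ is real-adjacent to all of $C_+$ except its fuzzy neighbors there. By double counting, the total number of fuzzy pairs between $C_+$ and clean vertices is below $c\,|C_+|$. This limits how the neighborhoods $K_x$ can differ, which should leave at most about $c$ \enquote{deficient} groups plus one group with $K_x = C_+$.

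If that double-counting step is too weak, I would bound instead the number of clean vertices with $K_x\ne C_+$ directly by $c\cdot|C_+| \le c(3kc+1)$. This gives $|C_0| \le 3kc^2 + kc + 3k + 2k + c + 1$, which is of the claimed form.

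\textbf{Summing.} Adding the pieces gives
\[
|C| \le (3kc+1) + 3k + (2k+c+1) + (\text{deficient part}),
\]
and I would tune the counting of the deficient part so that the total is at most $3kc^2+3kc+3k+3c+3$.
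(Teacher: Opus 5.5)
Your fallback is essentially the paper's argument. You split $C$ into $C_0=C\cap V_0$ and $C_+=C\cap V_+$ and bound $C_+$ via Rule~\ref{rrule:bounded cliques c-closure}. You charge the vertices of $C_0$ with a fuzzy neighbor in $C_+$ to the fewer than $c$ fuzzy neighbors of each vertex of $C_+$. You discard the vertices touched by the solution and apply Rule~\ref{rrule:bound fuzzy clique in pure clusters} to what remains, all of which have real neighborhood exactly $C_+$.

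Drop the ``about $c+1$ groups'' idea. The double counting bounds how \emph{many} clean vertices have a fuzzy neighbor in $C_+$, not how many distinct sets $K_x$ they realize, and those sets may all differ. Only the single class $K_x=C_+$ needs the rule anyway.

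The actual gap is the final count, which as written does not reach the stated bound. Your pieces sum to
\[
(3kc+1)+3k+c(3kc+1)+(2k+c+1)=3kc^2+3kc+5k+2c+2,
\]
which exceeds $3kc^2+3kc+3k+3c+3$ by $2k-c-1$ (e.g.\ $k=2$, $c=1$). Your intermediate bound on $|C_0|$ also writes $kc$ where $c(3kc+1)$ only contributes $c$. The missing observation is that you overcount the non-clean vertices. Since $C_0$ is a fuzzy clique and $C_+$ a real clique, every nonedge inside $C$ has exactly one endpoint in $C_0$, and every real edge leaving $C$ has at most one. Each such pair costs at least~$1$, so at most $k$ vertices of $C_0$ are non-clean, and the total drops to $3kc^2+3kc+3k+2c+2$. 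Alternatively, for $c\ge 1$ it also suffices to use that vertices of $V_+$ have at most $c-1$ fuzzy neighbors. Note that the paper's proof simply asserts that $C$ contains no nonedge. Lemma~\ref{lem:clique and fuzzy clique in pure cluster} does not give this for pairs between $C_0$ and $C_+$, so your explicit accounting of nonedges is actually the more careful treatment.

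The applicability of Rule~\ref{rrule:bound fuzzy clique in pure clusters} that you flagged is a real issue, and the paper also passes over it. However, the case analysis in that rule's proof talks about clusters and cannot establish the rule's hypothesis, which concerns all of $G$. Here is how to close it. Let $Y$ be the clean vertices with $N^+(y)=C_+$, and suppose $|Y|>2k+c+1$. Any other vertex $w$ with $N^+(w)=C_+$ has no real edge to $Y$.
\begin{itemize}
\item If $|N^0(w)\cap Y|>c$, then $w$ and each $y\in Y$ have at least $c$ common fuzzy neighbors in the fuzzy clique $Y$, so $\{w,y\}$ is fuzzy by $c$-closure.
\item Otherwise $|N^-(w)\cap Y|>2k+1$. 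Since $Y\subseteq N^+(a)$ for every $a\in C_+$, Rule~\ref{rrule:many one sided neighbors} would already have turned the real edge $\{a,w\}$ into a nonedge.
\end{itemize}
Two such vertices $w,w'$ then share more than $c$ fuzzy neighbors in $Y$, so they are fuzzy-adjacent. Hence the rule's set is a fuzzy clique of size more than $2k+c+1$, contradicting reducedness. If $C_+=\emptyset$, clean vertices are isolated in $G^{(+)}$ and were already removed by Rule~\ref{rrule:remove isolated cliques}.
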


\begin{proof}
	Since~$G$ is reduced with respect to~\Cref{rrule:not many real and fuzzy edges} we can partition~$C$ into~$C_1 \coloneqq C\cap V_0$ and~$C_2 \coloneqq C\cap V_+$.
	Assume towards a contradiction that~$|C| > 3kc^2 + 3kc + 3k + 3c + 3$.
	Then $C_1$ is a fuzzy clique and~$C_2$ is a real clique according to~\Cref{lem:clique and fuzzy clique in pure cluster}.
	In particular,~$C$ does not contain a nonedge.
	Since~$G$ is reduced with respect to~\Cref{rrule:bounded cliques c-closure} we can assume~$|C_2| < 3kc+2$.
	Moreover, we can bound the number of vertices in~$C_1$ that have a fuzzy neighbor in~$C_2$ by~$|C_2| \cdot c < 3kc^2+2c$.
	Note that we can assume that~$C_2$ is not empty according to~\Cref{obs:cluster connected by real edges}.
	Now, let~$C'_1 \subseteq C_1$ be the vertices in~$C_1$ that have only real neighbors in~$C_2$. 
	Since~$\mc$ is a solution, at most~$k$ vertices of~$C'_1$ have real neighbors outside of~$C$.
	Finally, let~$X\subseteq C'_1$ be the vertices in~$C'_1$ that have only real neighbors in~$C_2$ and no real neighbors outside of~$C$.
	Note that since~$X$ is a fuzzy clique, if~$|X| > 2k + c +1$ we could apply \Cref{rrule:bound fuzzy clique in pure clusters} to~$X$.
	Since~$G$ is reduced with respect to~\Cref{rrule:bound fuzzy clique in pure clusters}, we thus have~$|X| \leq 2k+c +1$.
	Overall, we get
	\begin{align*}
		|C| & = |C_1| + |C_2| \\
		& \leq  (3kc^2 +2c) + k + (2k + c +1) + (3kc + 2) \\
		& = 3kc^2 + 3kc + 3k + 3c + 3.
	\end{align*}
\end{proof}
With the previous lemma we now can bound the size of a yes-instance.

 \begin{lemma}
	Let~$I \coloneqq (G,k)$ be an instance of \CC that is reduced with respect to \Cref{rrule:remove isolated cliques} and Rules \ref{rrule:many common neighbors closure} to~\ref{rrule:bound fuzzy clique in pure clusters}.
	If~$|V(G)| > 30k^2c^2$, then~$I$ is a no-instance.
\end{lemma}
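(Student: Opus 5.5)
We argue by contraposition, mirroring the proof of \Cref{lem:kernel size degeneracy}: we assume that $I$ is a yes-instance and show $|V(G)| \le 30k^2c^2$. Fix a solution $\mc$ with $\cost(\mc)\le k$.

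The first step bounds the number of clusters. Since $G$ is reduced with respect to \Cref{rrule:remove isolated cliques}, every connected component of $G^{(+)}$ contains a real edge or nonedge that $\mc$ must pay for. By \Cref{obs:cluster connected by real edges}, we may assume each cluster is connected in $G^{(+)}$. Then every cluster $C$ contains an endpoint of a mistake:
\begin{itemize}
\item If $C$ were not incident to any mistake, then no real edge leaves $C$.
\item Hence $C$ is a whole connected component of $G^{(+)}$.
\item Moreover, $C$ contains no nonedge, so every pair inside $C$ is a real or fuzzy edge.
\item This is exactly the situation that \Cref{rrule:remove isolated cliques} removes, a contradiction.
\end{itemize}
Each mistake has two endpoints and there are at most $k$ mistakes, since all weights of real edges and nonedges are positive integers. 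Therefore $\mc$ has at most $2k$ clusters.

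The second step applies \Cref{lem:size of a cluster bound}. Each cluster has at most $3kc^2+3kc+3k+3c+3$ vertices, so
\[
|V(G)| \le 2k\,(3kc^2+3kc+3k+3c+3).
\]
It remains to check the inequality $2k(3kc^2+3kc+3k+3c+3)\le 30k^2c^2$. This holds whenever $k,c\ge 1$, since each of the five terms is at most $3kc^2$, giving at most $2k\cdot 15kc^2 = 30k^2c^2$.

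The main obstacle is the degenerate values $k=0$ and $c=0$, where this bound is not immediate.
\begin{itemize}
\item For $k=0$: a yes-instance has no mistakes at all, so every component is removed by \Cref{rrule:remove isolated cliques}. Hence $|V|=0$ and the claim holds.
\item For $c=0$: the fuzzy graph being $0$-closed means every pair is a fuzzy edge, so again \Cref{rrule:remove isolated cliques} empties the graph. Alternatively, one may assume $c\ge 1$ without loss of generality, since a $c$-closed graph is also $(c+1)$-closed.
\end{itemize}
With these cases settled, every yes-instance has at most $30k^2c^2$ vertices, so $|V(G)|>30k^2c^2$ implies that $I$ is a no-instance.
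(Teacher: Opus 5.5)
Your proof is correct and follows the same route as the paper: you get at most $2k$ clusters because \Cref{rrule:remove isolated cliques} forces every cluster to contain an endpoint of a mistake, and you multiply by the per-cluster bound of \Cref{lem:size of a cluster bound}. Your version is more careful than the paper's in three places: you count endpoints of mistakes instead of saying ``each deletion creates at most two clusters'', you check the final inequality explicitly, and you treat $k=0$ and $c=0$ separately. The ``alternatively, replace $c$ by $c+1$'' remark is not justified, since the reduction rules themselves depend on $c$, but it is also not needed, because your direct argument for $c=0$ suffices.
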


\begin{proof}
	Let~$I$ be a yes-instance and let~$\mc$ be a solution. 
	Since~$I$ is reduced with respect to \Cref{rrule:remove isolated cliques}, each cluster in~$\mc$ contributes to~$\cost(\mc)$.
	Since each deletion of a real edge can create at most~$2$ clusters,~$\mc$ contains at most~$2k$ clusters.
	According to \Cref{lem:size of a cluster bound}, if~$I$ is a yes-instance, then each cluster has a size of at most $3kc^2 + 3kc + 3k + 3c + 3$.
	Hence, either~$I$ is a no-instance or~$G$ contains at most~$6k^2c^2 + 6k^2c + 6k^2 + 6kc + 6k \leq 30k^2c^2$ vertices.
\end{proof}

The previous lemma immediately gives us \Cref{thm:closure kernel}.

\closurekernel*

\section{Hardness for Restricted Real Edges and Nonedges}
In this section we assume that all real and nonedges have unit weight.

\subsection{Kernelization Hardness for Bounded Degeneracy and Closure}

We first adapt a reduction of Demaine et al.~\cite{DemaineEFI06} from \MC to \CC to show hardness for the case where the nonedges form a matching.
\prob{\MC}{A graph~$G=(V,E)$, a set of terminal pairs~$T=\{(s_1,t_1),\dots,(s_r,t_r)\}\subseteq {V \choose 2}$ and a nonnegative integer $k$.}{Is there a set of edges~$S\subseteq E$ such that~$|S| \leq k$ and no terminal pair~$(s_i,t_i)$ is connected in~$G - S$?}

\subparagraph{Construction.}
Let~$(G=(V,E),T,k)$ with~$T=\{(s_1,t_1),\dots,(s_r,t_r)\}$ be an instance of \MC with maximum degree~$\Delta$.
We construct an instance~$(G'=(V',E^+,E^0),k)$ of \CC as follows. 
We initialize~$G'$ with~$V' = V$ and~$E^+ = E, E^0 = {V \choose 2} \setminus E$.
For each terminal vertex~$v\in V(T)$ let~$T_v$ denote the set of terminals~$u\in V(T)$ that are in a terminal pair together with~$v$ and let~$t^1_v,\dots,t^{|T_v|}_v$ be an arbitrary fixed ordering of~$T_v$.
For each terminal vertex~$v\in V(T)$ we add the following gadget~$Q_v$ consisting of a chain of~$|T_v|$ cliques of size~$\Delta+1$:
For each~$i \in \{1,\dots,|T_v|\}$ we have a clique~$Q^i_v$ of size~$\Delta+1$ associated with the vertex~$t^i_v$, where each vertex of~$Q^1_v$ is a real neighbor of~$v$ and for each~$i \in \{1,\dots,|T_v|-1\}$ every vertex in~$Q^i_v$ is a real neighbor of every vertex in~$Q^{i+1}_v$.
Note that~$Q_v\cup \{v\}$ is~$\Delta+1$-edge-connected, that is, after deleting up to~$\Delta$ edges, every vertex in~$Q_v\cup \{v\}$ is still connected to every other vertex in~$Q_v\cup \{v\}$.
For each terminal pair~$(u,v)\in T$ each vertex in~$Q^i_v$ is a nonneighbor of exactly one vertex in~$Q^j_u$ and vice versa, where~$u = t^i_v$  and~$v= t^j_u$.
In other words,~$E^-(Q^i_v, Q^j_u)$ is a matching of size~$\Delta+1$ in~$G'^{(-)}$.
All other edges adjacent to some vertex in a clique~$Q^i_v$ are fuzzy edges.

\begin{lemma}
	\label{lem:reduction for bounded real degeneracy and closure }
	$(G,k)$ is a yes-instance of \MC if and only if $(G',k)$ is a yes-instance of \CC.
\end{lemma}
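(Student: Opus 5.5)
We prove the two directions separately. In both, the key fact is that each gadget $Q_v\cup\{v\}$ is $(\Delta+1)$-edge-connected, while at most $k$ real edges may be deleted. We assume $k\le\Delta$; otherwise we pad $\Delta$, for instance by attaching pendant vertices to raise the maximum degree, which does not affect the cut structure. Alternatively, we make each clique size $k+1$ instead of $\Delta+1$. Under this assumption, a solution of cost at most $k$ can never separate a gadget from its terminal.

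For the forward direction, let $S\subseteq E$ be a multicut with $|S|\le k$. We define the clustering $\mc$ on $G'$ by taking the connected components of $G-S$ and adding each gadget $Q_v$ to the cluster of $v$; vertices of $G$ in no terminal pair simply stay in their component. We then count the cost.
\begin{itemize}
\item Real edges between clusters are exactly the edges of $S$: gadget edges are all internal to a cluster, and edges of $G$ inside a component are not cut.
\item Nonedges occur only between $Q^i_v$ and $Q^j_u$ for a terminal pair $(u,v)\in T$. Since $S$ is a multicut, $u$ and $v$ lie in different components, so these nonedges lie between clusters.
\item All remaining pairs are fuzzy and cost nothing.
\end{itemize}
Hence $\cost(\mc)=|S|\le k$.

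For the backward direction, let $\mc$ be a clustering of $G'$ with $\cost(\mc)\le k$. By edge-connectivity, deleting at most $k\le\Delta$ real edges cannot disconnect $Q_v\cup\{v\}$, so every gadget lies entirely in the cluster of its terminal. Let $S$ be the set of edges of $E$ that go between clusters; then $|S|\le\cost(\mc)\le k$.

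Suppose some pair $(u,v)\in T$ is still connected in $G-S$. We do not argue that $u$ and $v$ share a cluster directly. Instead we use a charging argument. Define $S'$ as $S$ together with one edge of $G$ incident to $v$, for each terminal pair $(u,v)$ whose endpoints lie in a common cluster. Note that $u,v$ with $u,v$ in the same cluster means $Q^i_v$ and $Q^j_u$ are in that cluster too, so the whole matching $E^-(Q^i_v,Q^j_u)$ of $\Delta+1>k$ nonedges lies inside one cluster. That cost alone exceeds $k$, a contradiction.

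So $u$ and $v$ are in different clusters. Then any $u$–$v$ path in $G$ must contain an edge of $G$ between two clusters, and that edge is in $S$. Therefore $S$ is a multicut of size at most $k$.

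The main obstacle is justifying $k\le\Delta$, equivalently that the gadgets are big enough relative to $k$. I would handle it by observing that $k\ge\Delta$ trivially lets us handle the instance, or simply by choosing clique size $\max(\Delta,k)+1$ in the construction. Both the connectivity argument and the nonedge-matching argument then work with threshold $k+1$.
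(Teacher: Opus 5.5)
\paragraph{Gap: the assumption $k\le\Delta$.}
Your forward direction is the same as the paper's and is fine. The backward direction, however, depends on $k\le\Delta$, and the statement does not give you this. The lemma is about the specific $G'$ built with cliques of size $\Delta+1$, and \textsc{Edge Multicut} instances with $k>\Delta$ are perfectly legitimate. For example, the problem is NP-hard on binary trees, where $\Delta\le 3$ but $k$ is unbounded.

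When $k>\Delta$, both of your key steps fail:
\begin{itemize}
\item A clustering of cost at most $k$ can afford to cut a $(\Delta+1)$-edge-connected gadget.
\item It can also afford the $\Delta+1$ nonedges of a matching $E^-(Q^i_v,Q^j_u)$.
\end{itemize}
Your remedies do not close this gap. Using clique size $\max(\Delta,k)+1$, or padding $G$ with pendant vertices, changes the construction, so you would be proving a statement about a different $G'$. That would still suffice for the hardness theorem, but it is not this lemma. Your remark that $k\ge\Delta$ lets you handle the instance trivially is unfounded. The sentence defining $S'$ is never used and should be dropped.

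\paragraph{Missing idea: compare against $\Delta$, not $k$.}
The paper does this with exchange arguments on an \emph{optimal} clustering $\mc$ of cost at most $k$. Only $v$ has real neighbors outside $Q_v\cup\{v\}$, so separating $Q_v\cup\{v\}$ from everything else cuts at most $\deg_G(v)\le\Delta$ real edges.
\begin{itemize}
\item Suppose $Q_v\cup\{v\}$ is spread over several clusters. Make it its own cluster and remove it from the others. This adds no nonedge cost, because the gadget contains no nonedges and the other clusters only shrink. It cuts at most $\Delta$ real edges, whereas the split already cut at least $\Delta+1$ real edges inside the gadget. So $\mc$ was not optimal.
\item Suppose some cluster contains both $Q_v\cup\{v\}$ and $Q_u\cup\{u\}$ for $(u,v)\in T$. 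Splitting off $Q_v\cup\{v\}$ saves at least $\Delta+1$ nonedges and costs at most $\Delta$ real edges, again contradicting optimality.
\end{itemize}
Hence, for every value of $k$, each gadget lies in the cluster of its terminal and terminal partners are in different clusters. From there your final step goes through unchanged: every $u$--$v$ path in $G$ crosses an inter-cluster edge, and that edge lies in $S$.
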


\begin{proof}
	First, let $(G,k)$ be a yes-instance with solution set~$S$ and let~$C_1,\dots,C_{\ell}$ be the connected components of~$G-S$.
	Consider the clustering~$\mc = \{C'_1,\dots,C'_{\ell}\}$, where each~$C'_i$ consists of the vertices in~$C_i$ together with all vertices of~$Q_v$ for each terminal vertex~$v\in V(T)$ in~$C_i$.
	Note that the only real edges between clusters in~$\mc$ are the edges in~$S$. 
	Moreover, since~$S$ is a solution for \MC, no connected component in~$G-S$ contains both vertices of a terminal pair, so no cluster in~$\mc$ contains a nonedge.
	Thus,~$\cost(\mc) = |S| \leq k$ and~$(G',k)$ is a yes-instance.
	
	Now, let $(G',k)$ be a yes-instance with an optimal clustering~$\mc$ with~$\cost(\mc)\leq k$.
	We show the following claim:
	\begin{claim}
		For each~$v\in V(T)$ all vertices of~$Q_v \cup \{v\}$ are in the same cluster in~$\mc$ and no cluster in~$\mc$ contains vertices from gadgets~$Q_v \cup \{v\}$ and~$Q_u \cup \{u\}$ of a terminal pair~$(u,v) \in T$.
	\end{claim}
	
	\begin{claimproof}
		Assume towards a contradiction that for some~$v\in V(T)$ there are clusters~$C_1,\dots,C_\ell$ that each contain vertices of~$Q_v\cup \{v\}$.
		Consider the new clustering~$\mc' = (\mc \setminus \{C_1,\dots,C_\ell\}) \cup \{Q_v \cup \{v\}, C_1\setminus (Q_v \cup \{v\}), \dots, C_\ell \setminus (Q_v \cup \{v\})\}$, where~$Q_v \cup \{v\}$ becomes its own cluster.
		Since~$Q_v \cup \{v\}$ is~$\Delta+1$-edge-connected, at least~$\Delta+1$ edges in~$E^+(Q_v \cup \{v\})$ have to be deleted to obtain~$\mc$, whereas disconnecting~$Q_v\cup \{v\}$ from the other clusters in~$\mc'$ needs~$\deg_G(v) \leq \Delta$ edge deletions.
		Thus,~$\cost(\mc') < \cost(\mc)$, a contradiction to the optimality of~$\mc$.
		
		Now, assume towards a contradiction that some cluster~$C\in \mc$ contains the gadgets~$Q_v \cup \{v\}$ and~$Q_u \cup \{u\}$ for a terminal pair~$(u,v) \in T$.
		Consider the new clustering~$\mc' = (\mc \setminus \{C\}) \cup  \{Q_v\cup \{v\}, C\setminus  (Q_v\cup \{v\} )\}$, where~$Q_v \cup \{v\}$ becomes its own cluster.
		We have
		\begin{align*}
			\cost(\mc) - \cost(\mc')  & =  E^-(Q_v\cup \{v\}, C \setminus (Q_v\cup \{v\})) - E^+(Q_v\cup \{v\}, C \setminus (Q_v\cup \{v\})) \\
			& \geq  E^-(Q_v\cup \{v\}, Q_u\cup \{u\}) - \deg_G(v)\\
			& \geq  (\Delta+1) -\Delta > 0
		\end{align*}
		and thus a contradiction to the optimality of~$\mc$. 
	\end{claimproof}
	By the above claim, only real edges  between clusters in~$\mc$ contribute to~$\cost(\mc)$, since per construction nonedges are only present between vertices of gadgets~$Q_u$ and $Q_v$ for some terminal pair $(u,v)$.
	Thus, if we consider the set~$S = \{e \in E^+(C_1,C_2) \mid C_1,C_2 \in \mc,C_1 \neq C_2\}$ of real edges between clusters in~$\mc$, we have~$|S| \leq k$.
	Since for each terminal pair~$(u,v)$ the gadgets~$Q_u \cup \{u\}$ and~$Q_v \cup \{v\}$ are in different clusters of~$\mc$, after deleting the edges of~$S$ in~$G$, no connected component in~$G-S$ contains both vertices of a terminal pair and thus $S$ is a solution for $(G,k)$.
\end{proof}

We now describe a reduction from~\CC where the nonedges induce a matching to~\CC with the additional constraint that the closure and the degeneracy of~$G^{(+)}$ are at most 2.

\subparagraph{Construction.}
Let $(G=(V,E^+,E^0),k)$ be an instance of \CC where the nonedges induce a matching.
We construct a new instance~$(G'=(V',\widetilde{E}^+,\widetilde{E}^0),k)$ of \CC as follows.
We subdivide each real edge~$\{u,v\}$ by adding a new vertex~$x_{uv}$ and making~$\{u,x_{uv}\}$ as well as~$\{v,x_{uv}\}$ real edges. Moreover, all other edges incident with~$x_{uv}$ as well as the edge $\{u,v\}$ become fuzzy edges.
Note that the set of nonedges did not change and thus the nonedges in~$G'$ also induce a matching.
Furthermore, since no vertex pair in~$G'$ has more than one common real neighbor and after deleting each new vertex~$x_{uv}$ with real degree 2 all other vertices have no real neighbors anymore,~$G'^{(+)}$ has closure 2 and degeneracy 2.

\begin{lemma}
	\label{lem:reduction subdivide real edges}
	$(G,k)$ is a yes-instance of \CC if and only if $(G',k)$ is a yes-instance of \CC.
\end{lemma}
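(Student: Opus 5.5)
We prove both directions by translating clusterings between $G$ and $G'$. The invariant is that a real edge $\{u,v\}$ of $G$ costs $\omega(u,v)=1$ exactly when its two subdivision edges $\{u,x_{uv}\}$ and $\{x_{uv},v\}$ contribute at least $1$ in $G'$.

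\emph{Forward direction.} Let $\mc$ be a clustering of $G$ with $\cost(\mc)\le k$. We extend it to a clustering $\mc'$ of $G'$ by placing each subdivision vertex $x_{uv}$ into $\mc(u)$; the choice of $\mc(v)$ would work equally well. Every pair involving $x_{uv}$ other than $\{u,x_{uv}\}$ and $\{v,x_{uv}\}$ is fuzzy, so $x_{uv}$ adds no negative mistakes.
\begin{itemize}
\item If $u$ and $v$ share a cluster in $\mc$, then both subdivision edges lie inside a cluster and cost nothing.
\item If they lie in different clusters, then exactly $\{x_{uv},v\}$ is cut. This costs $1$, matching the cost of the positive mistake $\{u,v\}$ in $\mc$.
\end{itemize}
The pair $\{u,v\}$ itself is fuzzy in $G'$, so it costs nothing there. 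The nonedges of $G$ and $G'$ coincide with the same weights, so negative mistakes are unchanged. Hence $\cost(\mc')=\cost(\mc)\le k$.

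\emph{Backward direction.} Let $\mc'$ be a clustering of $G'$ with $\cost(\mc')\le k$. We take $\mc$ to be its restriction to $V$. Nonedges are identical in both graphs, so negative mistakes of $\mc$ inside $V$ are exactly the negative mistakes of $\mc'$ among original vertices. For each real edge $\{u,v\}$ of $G$ that $\mc$ separates, $x_{uv}$ cannot share a cluster with both $u$ and $v$. So at least one of $\{u,x_{uv}\}$ and $\{x_{uv},v\}$ is a positive mistake of $\mc'$. These edge pairs are disjoint for different original edges, so we may charge each positive mistake of $\mc$ to a distinct positive mistake of $\mc'$, giving $\cost(\mc)\le\cost(\mc')\le k$.

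The only point needing care is weights: the new real edges must receive weight $\omega(u,v)$ rather than the default $1$. This causes no trouble here, since this section assumes unit weights throughout. I therefore expect no real obstacle; the argument is a direct charging argument.
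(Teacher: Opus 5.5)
Your proof is correct and follows essentially the same route as the paper. You extend a clustering of $G$ by attaching each $x_{uv}$ to the cluster of one endpoint, and you restrict a clustering of $G'$ to $V$, charging each separated edge $\{u,v\}$ to one of its two subdivision edges. Your backward direction is also slightly leaner: separating $u$ and $v$ already forces a cut subdivision edge, so the paper's preliminary step of moving $x_{uv}$ into the cluster of $u$ or $v$ is unnecessary.
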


\begin{proof}
	Let $(G,k)$ be a yes-instance with a clustering~$\mc$ of cost at most~$k$.
	Consider the clustering~$\mc'$ for~$G'$ where for each cluster~$C \in \mc$ we have a cluster~$C'\in \mc'$ such that~$C\subseteq C'$ and for each new vertex~$x_{uv}$ we have~$x_{uv} \in \mc'(u)$ or~$x_{uv} \in \mc'(v)$ arbitrarily.
	Note that the contribution of nonedges to~$\cost(\mc')$ is the same as the contribution of nonedges to~$\cost(\mc)$.
	Now, if some real edge~$\{u,x_{uv}\}$ contributes to~$\cost(\mc')$, then this implies that~$x_{uv} \in \mc'(v)$ and~$\mc'(u) \neq \mc'(v)$, so in particular~$\mc(u) \neq \mc(v)$ and~$\{u,v\}$ contributes to~$\cost(\mc)$.
	Thus,~$\cost(\mc') \leq \cost(\mc) = k $.
	
	Conversely, let $(G',k)$ be a yes-instance with a clustering~$\mc'$ of cost at most~$k$.
	Note that we can assume that each vertex~$x_{uv}$ is in the cluster~$\mc'(u)$ or~$\mc'(v)$.
	Otherwise, since~$x_{uv}$ has only fuzzy neighbors except for~$u$ and~$v$, we could put~$x_{uv}$ in the cluster~$\mc'(u)$ instead and thus reduce the cost of the clustering by not deleting~$\{u,x_{uv}\}$.
	Consider the clustering $\mc$ for~$G$ where for each cluster~$C'\in \mc'$ we have a cluster~$C \in \mc$ with~$C = C' \cap V$.
	Again, the contribution of nonedges to~$\cost(\mc)$ is the same as the contribution of nonedges to~$\cost(\mc')$.
	Moreover, if some real edge~$\{u,v\}$ contributes to~$\cost(\mc)$, then this implies~$\mc(u) \neq \mc(v)$ and thus~$\mc'(u) \neq \mc'(v)$. 
	In particular, we have either~$x_{uv} \notin \mc'(u)$ or~$x_{uv} \notin \mc'(v)$ and thus either~$\{u,x_{uv}\}$ or~$\{v,x_{uv}\}$ contributes to~$\cost(\mc')$.
	Hence,~$\cost(\mc) \leq \cost(\mc') = k $.
\end{proof}

Since \MC is \NP-hard and does not admit a polynomial kernel for the parameter~$k$, the reductions above which preserve the parameter value imply the following. 

\kernelhardness*

\subsection{Hardness on Binary Trees and Stars}
\MC has been shown to be NP-hard on binary trees by reduction from \TSAT~\cite{CalinescuFR03}. We slightly adapt this reduction and combine it with the reduction presented of Demaine et al. \cite{DemaineEFI06}  from \MC to \CC to show that \CC is NP-hard even when~$G^{(+)}$ is a binary tree.

First we restate the reduction from \TSAT to \MC on binary trees together with a slight modification to ensure that each terminal vertex is a leaf.

\prob{\TSAT}{A CNF-formula $\Phi$ where each clause contains exactly three literals.}{Is there a satisfying assignment for $\Phi$?}

\subparagraph{Construction.} 
Let~$\Phi$ be an instance of \TSAT with variables~$x_1,\dots,x_n$ and clauses~$C_1,\dots C_m$.
We construct an instance~$(G,T,k)$ of \MC as follows:

For each variable~$x_i$ we create a variable gadget~$H_i$ that is a binary tree consisting of two leaf vertices labeled with~$x_i$ and~$\bar{x}_i$, respectively, and a common parent vertex.
For each clause~$C_j=(x \vee y \vee z)$ we create a clause gadget~$F_j$ that is a binary tree consisting of four leaf vertices, three of which are labeled with~$x$,~$y$ and~$z$, and three internal vertices, as seen in figure \Cref{fig:construction 1}.
All the vertex and clause gadgets are then arbitrarily combined into a binary tree. 
The set of terminal pairs~$T$ contains for each variable~$x_i$ the pair of vertices labeled with~$x_i$ and $\bar{x}_i)$ in the variable gadget of~$x_i$.
Moreover, for each clause~$C_j = (x \vee y \vee z)$ we add a terminal pair~$(x,y)$ and a terminal pair~$(z,w)$, where without loss of generality~$x$ and~$y$ are the leaves that share the same parent vertex and~$w$ is the fourth leaf that does not correspond to a variable.
Finally, each vertex labeled with the literal~$\tilde{x}_i$ in a clause gadget is in a terminal pair with the vertex labeled~$\tilde{x}_i$ in the vertex gadget for~$x_i$, where~$\tilde{x}_i \in \{x_i,\bar{x}_i\}$.
We finish the construction by setting~$k \coloneqq n + 2m$.

\begin{lemma}
	\label{lem:reduction multicut on binary trees}
	$\Phi$ is satisfiable if and only if~$(G,T,k)$ is a yes-instance.
\end{lemma}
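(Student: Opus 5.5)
We prove both directions separately, following the standard argument for the reduction of Călinescu et al.; the budget $k = n+2m$ is split as one cut edge per variable gadget and two cut edges per clause gadget.

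\emph{Satisfiable $\Rightarrow$ multicut.} Fix a satisfying assignment. In each variable gadget $H_i$ we delete the edge to the leaf labeled with the \emph{false} literal of $x_i$; the true literal leaf stays attached to the rest of the tree. This separates the pair $(x_i,\bar x_i)$ using $n$ edges in total. In each clause gadget $F_j=(x\vee y\vee z)$ we must separate $(x,y)$ and $(z,w)$ with two edges. We also need every clause leaf labeled with a \emph{true} literal to be cut off from the main tree, since the corresponding true leaf in the variable gadget remains connected to the main tree. A clause leaf labeled with a false literal may stay attached, because its partner leaf in the variable gadget is already isolated.

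Since at least one literal of $C_j$ is true, we do a short case analysis on the shape of $F_j$. The shape is: a root, a child that is the parent of $x,y$, and a child that is the parent of $z,w$. In each case we exhibit two edges that separate $x$ from $y$ and $z$ from $w$ and isolate every true literal leaf from the root.
\begin{itemize}
\item If $z$ is true: cut the edge at $z$ and the edge at one of $x,y$, choosing a true one if one exists.
\item If $z$ is false: at least one of $x,y$ is true. Cut that leaf's edge and the edge at $w$. The other of $x,y$ is either false, so it may stay attached, or also true.
\end{itemize}
The delicate situation is when $x$ and $y$ are both true, since two cuts must then isolate both of them and also separate $z$ from $w$. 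There we cut the edge from the root to the parent of $x,y$; that parent must still separate $x$ from $y$, and we resolve this using the precise gadget of the figure. This is exactly the subtle combinatorics of the original reduction, and we adopt its gadget choice.

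\emph{Multicut $\Rightarrow$ satisfiable.} Take a multicut $S$ with $|S|\le n+2m$. The gadgets are edge-disjoint, so we count edges per gadget.
\begin{itemize}
\item Each $H_i$ needs at least one edge of $H_i$ (or the connecting path) to separate $x_i$ from $\bar x_i$. The unique tree path between them lies inside $H_i$.
\item Each $F_j$ needs at least two edges of $F_j$: the $x$–$y$ path and the $z$–$w$ path lie inside $F_j$ and are edge-disjoint.
\end{itemize}
Hence equality holds, and $S$ contains exactly one edge in each $H_i$ and exactly two edges in each $F_j$, with none on the connecting tree. Set $x_i$ true exactly if the leaf labeled $x_i$ stays connected to the main tree.

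Suppose some clause $C_j$ were unsatisfied. Then every literal leaf of $F_j$ has its variable-gadget partner connected to the main tree. Every such clause leaf must therefore be disconnected from the root of $F_j$, which is connected to the main tree since no connecting edge is cut. Disconnecting all three literal leaves $x,y,z$ from the root while also separating $z$ from $w$ requires three edges of $F_j$, contradicting the count of two.

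\emph{Main obstacle.} The main obstacle is the precise gadget shape: the case analysis in both directions depends on which pairs of leaves share a parent. The modification placing terminals only at leaves does not affect the counting.
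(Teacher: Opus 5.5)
\textbf{There is a genuine gap.} Your proof has two independent problems: the truth convention is inverted, and the clause-gadget step is missing.

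\textbf{The convention.} Your convention is inverted relative to the terminal pairs, and this breaks both directions. A clause leaf labeled $\ell$ is paired with the leaf labeled $\ell$ in the variable gadget. So the only constraint a clause gadget feels is that a literal leaf staying attached to the main tree needs its partner detached.

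Under your rule ($x_i$ true iff the leaf $x_i$ stays connected), an unsatisfied clause has all three partners \emph{detached}. Hence your sentence ``every literal leaf of $F_j$ has its variable-gadget partner connected to the main tree'' is false, and no contradiction arises. At best your counting excludes clauses whose three literals are all true.

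Likewise, your forward direction must detach every \emph{true} clause leaf. Your gadget cannot do this when $x$ and $y$ are both true, which is exactly the case you left open.

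The paper does the opposite. $S$ cuts the edge at the leaf of the true literal in each $H_i$, so clause leaves of false literals must be detached. A literal leaf still attached to the root of $F_j$ then certifies a true literal. In the backward direction one sets $x_i$ true iff the leaf $x_i$ is cut off.

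\textbf{The clause gadget.} Even with the convention fixed, the only nonroutine step is missing.

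In your shape, one child of the root is the parent of $x,y$ and the other is the parent of $z,w$. Every two-edge cut separating $(x,y)$ and $(z,w)$ leaves one of $x,y$ attached. That gives the backward direction, but it makes the forward direction impossible for a clause in which only $z$ is true.

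Deferring to ``the gadget of the figure'' does not close the gap. There the root $r$ is adjacent to $w$ and to $p$, $p$ is adjacent to $z$ and to $q$, and $q$ is the parent of $x,y$. The two edges $\{q,x\}$ and $\{p,r\}$ separate both internal pairs and detach all three literal leaves, since the component $\{y,q,p,z\}$ contains no variable-gadget vertex.

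So the step ``exactly two edges in $F_j$, hence some literal leaf stays connected to the root'' fails for the drawn gadget. You need this step, and the paper's proof also asserts it. Indeed, taking any one edge in each $H_i$ and these two edges in each $F_j$ shows that the drawn instance is a yes-instance for every formula.

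\textbf{A repair.} Swapping $z$ and $w$ fixes this: let $r$ be adjacent to $z$ and $p$, and let $p$ be adjacent to $w$ and $q$. A valid two-edge cut consists of one of $\{q,x\},\{q,y\}$ together with one of $\{r,z\},\{r,p\},\{p,w\}$. The literal leaves still attached to $r$ are, respectively, $\{x\}$ or $\{y\}$; $\{z\}$; or $\{x,z\}$ or $\{y,z\}$. This set is never empty, and each single literal can be the only attached one. With the paper's convention, this gives both directions.
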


\begin{proof}
	Let~$\alpha$ be a satisfying assignment for~$\Phi$.
	We define a solution set~$S$ for~$(G,T,k)$ as follows.
	For each variable~$x_i$ the set~$S$ contains one edge of~$H_i$: the edge incident to the vertex labeled~$x_i$, if~$\alpha(x_i) = \true$, or the edge incident to the vertex labeled~$\bar{x}_i$, if~$\alpha(x_i)=\false$.
	For a clause~$C_j=(x \vee y \vee z)$ we put two edges of~$F_j$ into~$S$, such that both terminal pairs are disconnected and for the one terminal vertex still connected to the root of~$F_j$, say~$x$, we have that~$x$ is true under~$\alpha$.
	Clearly, we have~$|S| = n+2m$ and each terminal pair inside a variable or vertex gadget is disconnected by~$S$.
	Moreover, since for each clause the only terminal vertex still connected to the root of the gadget is one, for which the literal evaluates to~$\true$ under~$\alpha$ and the edge incident to that literal in the respective variable gadget is in~$S$, all terminals in~$T$ are disconnected in~$G-S$ and thus~$(G,T,k)$ is a yes-instance.
	
	Now, let~$(G,T,k)$ be a yes-instance with solution set~$S$.
	Note that for each vertex gadget~$H_i$ one of the two edges must be in~$S$ and for each clause gadget~$F_j$ at least two of the edges must be in~$S$, otherwise there is some terminal pair that is still connected in~$G-S$.
	Since~$|S| \leq k = n+2m$ in fact $S$ contains exactly two edges for each clause gadget, which implies that in each clause gadget~$F_j$ there is one vertex that is still connected to the root of~$F_j$, which we denote with~$z_j$.
	Let~$F_j$ be an arbitrary clause gadget and let~$\tilde{x_i}$ for some~$\tilde{x}_i \in \{x_i,\bar{x}_i\}$ be the label of~$z_j$. 
	Since~$z_j$ is in a terminal pair with the vertex of~$H_i$ with label~$\tilde{x_i}$, the set~$S$ must contain the edge incident to the vertex in~$H_i$ with label~$\tilde{x_i}$.
	This implies that the assignment~$\alpha$, where for each variable~$x_i$ we have~$\alpha(x_i)=\true$ if $S$ contains the edge incident to the vertex in~$H_i$ with label~$x_i$ and~$\alpha(x_i)=\false$ otherwise, is a satisfying assignment for~$\Phi$.
\end{proof}

\begin{figure}[t]
	\begin{center}
		\begin{tikzpicture}[xscale=0.6,yscale=0.4]
			\tikzstyle{real} = [-]
			\tikzstyle{non} = [dotted,line width=0.75pt]
			\node[label=below:$x_1$](x1) at (0, 0) [shape = circle, draw, fill=black, scale=0.07ex]{};
			\node[label=below:$\bar{x}_1$](x2) at (1, 0) [shape = circle, draw, fill=black, scale=0.07ex]{};
			\node[](u1) at (0.5, 1.5) [shape = circle, draw, fill=black, scale=0.07ex]{};
			\node[label=below:$a)$](a) at (0.5, -1) []{};
			
			\path [real] (x1) edge (u1);
			\path [real] (x2) edge (u1);		
			
			\node[label=below:$x_1$](c1) at (3, 0) [shape = circle, draw, fill=black, scale=0.07ex]{};
			\node[label=below:$\bar{x}_2$](c2) at (4, 0) [shape = circle, draw, fill=black, scale=0.07ex]{};
			\node[](w1) at (3.5, 1.5) [shape = circle, draw, fill=black, scale=0.07ex]{};
			\node[](w2) at (4, 3) [shape = circle, draw, fill=black, scale=0.07ex]{};
			\node[](w3) at (5, 3) [shape = circle, draw, fill=black, scale=0.07ex]{};
			\node[](w4) at (4.5, 4.5) [shape = circle, draw, fill=black, scale=0.07ex]{};
			\node[label=below:$x_3$](c3) at (4.5, 1.5) [shape = circle, draw, fill=black, scale=0.07ex]{};
			\node[label=below:$b)$](b) at (3.5, -1) []{};			
			
			\path [real] (c1) edge (w1);
			\path [real] (c2) edge (w1);
			\path [real] (w1) edge (w2);
			\path [real] (w2) edge (c3);
			\path [real] (w2) edge (w4);
			\path [real] (w4) edge (w3);
			
			\path [non] (x1) edge (x2);
			\path [non] (c1) edge (c2);
			\path [non] (w3) edge (c3);
			
			\node[label=below:$x_1$](x11) at (6, 1.5) [shape = circle, draw, fill=black, scale=0.07ex]{};
			\node[label=below:$\bar{x}_1$](x12) at (7, 1.5) [shape = circle, draw, fill=black, scale=0.07ex]{};
			\node[](u11) at (6.5, 3) [shape = circle, draw, fill=black, scale=0.07ex]{};

			\node[label=below:$x_2$](x21) at (8, 1.5) [shape = circle, draw, fill=black, scale=0.07ex]{};
			\node[label=below:$\bar{x}_2$](x22) at (9, 1.5) [shape = circle, draw, fill=black, scale=0.07ex]{};
			\node[](u21) at (8.5, 3) [shape = circle, draw, fill=black, scale=0.07ex]{};

			\node[label=below:$x_3$](x31) at (10, 1.5) [shape = circle, draw, fill=black, scale=0.07ex]{};
			\node[label=below:$\bar{x}_3$](x32) at (11, 1.5) [shape = circle, draw, fill=black, scale=0.07ex]{};
			\node[](u31) at (10.5, 3) [shape = circle, draw, fill=black, scale=0.07ex]{};

			\node[label=below:$x_1$](c11) at (12, 0) [shape = circle, draw, fill=black, scale=0.07ex]{};
			\node[label=below:$\bar{x}_2$](c12) at (13, 0) [shape = circle, draw, fill=black, scale=0.07ex]{};
			\node[](w11) at (12.5, 1.5) [shape = circle, draw, fill=black, scale=0.07ex]{};
			\node[](w12) at (13, 3) [shape = circle, draw, fill=black, scale=0.07ex]{};
			\node[](w13) at (14, 3) [shape = circle, draw, fill=black, scale=0.07ex]{};
			\node[](w14) at (13.5, 4.5) [shape = circle, draw, fill=black, scale=0.07ex]{};
			\node[label=below:$x_3$](c13) at (13.5, 1.5) [shape = circle, draw, fill=black, scale=0.07ex]{};
			\node[label=below:$c)$](c) at (10.5, -1) []{};
			
			\node[label=below:$\bar{x}_1$](c21) at (15, 0) [shape = circle, draw, fill=black, scale=0.07ex]{};
			\node[label=below:$x_2$](c22) at (16, 0) [shape = circle, draw, fill=black, scale=0.07ex]{};
			\node[](w21) at (15.5, 1.5) [shape = circle, draw, fill=black, scale=0.07ex]{};
			\node[](w22) at (16, 3) [shape = circle, draw, fill=black, scale=0.07ex]{};
			\node[](w23) at (17, 3) [shape = circle, draw, fill=black, scale=0.07ex]{};
			\node[](w24) at (16.5, 4.5) [shape = circle, draw, fill=black, scale=0.07ex]{};
			\node[label=below:$x_3$](c23) at (16.5, 1.5) [shape = circle, draw, fill=black, scale=0.07ex]{};
			
			\node[](u12) at (7.5, 4.5) [shape = circle, draw, fill=black, scale=0.07ex]{};
			\node[](u123) at (9, 6) [shape = circle, draw, fill=black, scale=0.07ex]{};
			\node[](uc1c2) at (14.5, 6) [shape = circle, draw, fill=black, scale=0.07ex]{};
			\node[](r) at (11.5, 7.5) [shape = circle, draw, fill=black, scale=0.07ex]{};
			
			\path [real] (x11) edge (u11);
			\path [real] (x12) edge (u11);
			\path [real] (x21) edge (u21);
			\path [real] (x22) edge (u21);	
			\path [real] (x31) edge (u31);
			\path [real] (x32) edge (u31);
			\path [real] (u11) edge (u12);
			\path [real] (u21) edge (u12);		
			\path [real] (u12) edge (u123);
			\path [real] (u31) edge (u123);
			\path [real] (w14) edge (uc1c2);
			\path [real] (w24) edge (uc1c2);
			\path [real] (u123) edge (r);
			\path [real] (uc1c2) edge (r);
			
			\path [real] (c11) edge (w11);
			\path [real] (c12) edge (w11);
			\path [real] (w11) edge (w12);
			\path [real] (w12) edge (c13);
			\path [real] (w12) edge (w14);
			\path [real] (w14) edge (w13);
			
			\path [real] (c21) edge (w21);
			\path [real] (c22) edge (w21);
			\path [real] (w21) edge (w22);
			\path [real] (w22) edge (c23);
			\path [real] (w22) edge (w24);
			\path [real] (w24) edge (w23);
			
			\path [non] (x11) edge (x12);
			\path [non] (x21) edge (x22);
			\path [non] (x31) edge (x32);
			\path [non] (c11) edge (c12);
			\path [non] (w13) edge (c13);
			\path [non] (c21) edge (c22);
			\path [non] (w23) edge (c23);			
			
			\path [bend right=50,real,line width=0.3mm, non] (x11) edge (c11);
			\path [bend right=50,real,line width=0.3mm, non] (x12) edge (c21);
			\path [bend right=50,real,line width=0.3mm, non] (x21) edge (c22);
			\path [bend right=50,real,line width=0.3mm, non] (x22) edge (c12);
			\path [bend left=60,real,line width=0.3mm, non] (x31) edge (c13);
			\path [bend left=60,real,line width=0.3mm, non] (x31) edge (c23);
		\end{tikzpicture}
	\end{center}
	\caption{Example of a) a variable gadget~$H_1$ for a variable~$x_1$, b) a clause gadget~$F_1$ for a clause~$C_1 = (x_1 \vee \bar{x}_2 \vee x_3)$ and a graph~$G$ for the formula~$\Phi = (x_1 \vee \bar{x}_2 \vee x_3)\wedge (\bar{x}_1 \vee x_2 \vee x_3)$. Terminal pairs are marked by dotted lines.}
	\label{fig:construction 1}
\end{figure}

We now present a reduction from \MC on binary trees, where each terminal is a leaf, to \CC that is similar to the reduction by Demaine et al. \cite{DemaineEFI06}: 

\subparagraph{Construction.}
Let~$(G=(V,E),T,k)$ with~$T=\{(s_1,t_1),\dots,(s_r,t_r)\}$ be an instance of \MC, where~$G$ is a binary tree and every terminal vertex is a leaf.
Let~$V(T)$ denote the set of vertices that are in a terminal pair of~$T$.
We construct an instance~$(G'=(V',E^+,E^0),k')$ of \CC as follows. 

We initialize~$G'$ with~$V' = V$ and~$E^+ = E, E^0 = {V \choose 2} \setminus E$.
For each terminal vertex~$v\in V(T)$ we add two new vertices~$v_1,v_2$ to~$V'$ as well as edges~$\{v,v_1\}$ and~$\{v,v_2\}$ to~$E^+$.
For each terminal pair~$(u,v)\in T$ we make~$\{u_1,v_1\}$ and~$\{u_2,v_2\}$ nonedges and all other edges incident to newly added vertices are fuzzy edges.
Note that~$G'^{(+)}$ is still a binary tree.

\begin{lemma}
	\label{lem:reduction binary trees}
	$(G,k)$ is a yes-instance of \MC if and only if $(G',k)$ is a yes-instance of \CC.
\end{lemma}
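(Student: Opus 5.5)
We prove both directions separately, following the pattern of \Cref{lem:reduction for bounded real degeneracy and closure } and \Cref{lem:reduction subdivide real edges}. Throughout we use that every real edge and nonedge has unit weight. The only nonedges of $G'$ are the pairs $\{u_1,v_1\}$ and $\{u_2,v_2\}$ for terminal pairs $(u,v)\in T$.

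For the forward direction, let $S$ be a multicut of size at most $k$ and let $C_1,\dots,C_\ell$ be the connected components of $G-S$. We define clusters $C'_i$ consisting of $C_i$ together with $v_1,v_2$ for every terminal $v\in C_i$. The real edges between clusters are exactly the edges of $S$, because the pendant edges $\{v,v_1\},\{v,v_2\}$ stay inside their cluster. A nonedge $\{u_j,v_j\}$ lies inside a cluster only if $u$ and $v$ are in the same component of $G-S$, which is impossible. Hence the cost equals $|S|\le k$.

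For the backward direction, take an optimal clustering $\mc$ of cost at most $k$. First we normalize it so that $v_1,v_2\in\mc(v)$ for every terminal $v$. Suppose some $v_j$ is not in $\mc(v)$. Then its real edge $\{v,v_j\}$ is cut and costs $1$. Moving $v_j$ into $\mc(v)$ saves this $1$ and costs at most $1$ extra, since $v_j$ has exactly one nonedge $\{u_j,v_j\}$, which matters only if $u_j$ ends up in $\mc(v)$. So this move never increases the cost, and we may assume the normalization.

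After normalization, if some terminal pair $(u,v)$ has $\mc(u)=\mc(v)$, then both nonedges $\{u_1,v_1\}$ and $\{u_2,v_2\}$ lie inside one cluster and together cost $2$. The duplication of pendant vertices is designed for exactly this case: separating $u$ costs only $1$, because $u$ is a leaf of $G$ with a single tree edge. We therefore move $u,u_1,u_2$ into a new singleton cluster. This move cuts at most one real edge, namely $u$'s tree edge, and it removes both nonedge mistakes as well as any other nonedge mistakes involving $u_1,u_2$. The cost strictly decreases, contradicting optimality.

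So in $\mc$ every terminal pair is split. Let $S$ be the set of real edges of $G$ whose endpoints lie in different clusters; $S$ contains no pendant edges. Then $|S|\le\cost(\mc)\le k$. Every component of $G-S$ lies within a single cluster, so $u$ and $v$ are disconnected in $G-S$ for every pair $(u,v)\in T$, and $S$ is a multicut.

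The main obstacle is the cost accounting in this last exchange step. It needs care when several terminal pairs share the terminal $u$, since $u_1,u_2$ then carry several nonedges. Moving the whole block $\{u,u_1,u_2\}$ to a singleton still cuts only $u$'s single tree edge and can only remove nonedge mistakes. The cost therefore drops by at least $2-1=1$, and the argument goes through.
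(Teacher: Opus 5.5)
Your forward direction and your final exchange step (splitting off $\{u,u_1,u_2\}$ when a terminal pair shares a cluster) match the paper. The normalization step, however, has a genuine gap. You justify moving $v_j$ into $\mc(v)$ by saying that $v_j$ has exactly one nonedge $\{u_j,v_j\}$. That is false in general: $v_j$ has one nonedge for every terminal pair containing $v$. Terminals do occur in several pairs; in the 3-SAT reduction, the leaf $x_i$ of $H_i$ is paired with $\bar x_i$ and with every clause leaf labeled $x_i$. You notice this yourself in your last paragraph, but only for the final exchange. Concretely, suppose $v$ is paired with $u$ and $u'$, $u_1,u'_1\in\mc(v)$, and $v_1$ lies in another cluster. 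Then moving $v_1$ into $\mc(v)$ saves $1$ but creates two nonedge mistakes. The cost increases, so "this move never increases the cost" fails and the normalization is not established.

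The fix, which is what the paper does, is a different exchange. If $v_1$ or $v_2$ is not in $\mc(v)$, make $\{v,v_1,v_2\}$ a cluster of its own. Since $v$ is a leaf of $G$ and $v_1,v_2$ are adjacent only to $v$, this cuts only $v$'s unique tree edge, at cost at most $1$. Before the move, at least one pendant edge $\{v,v_j\}$ was already cut. Moreover, $\{v,v_1,v_2\}$ contains no nonedge, and removing vertices from other clusters can only remove nonedge mistakes. So the cost does not increase. The move also leaves $w,w_1,w_2$ together for every other terminal $w$, so repeating it gives an optimal clustering in which the normalization holds. From there the rest of your argument goes through unchanged.
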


\begin{proof}
	First, let $(G,k)$ be a yes-instance with solution set~$S$ and let~$C_1,\dots,C_{\ell}$ be the connected components of~$G-S$.
	Consider the clustering~$\mc = \{C'_1,\dots,C'_{\ell}\}$, where each~$C'_i$ consists of the vertices in~$C_i$ together with the vertices~$v_1,v_2$ for each terminal vertex~$v\in V(T)$ in~$C_i$.
	Note that the only real edges between clusters in~$\mc$ are the edges in~$S$. 
	Moreover, since~$S$ is a solution for \MC, no connected component in~$G-S$ contains both vertices of a terminal pair, so no cluster in~$\mc$ contains a nonedge.
	Thus,~$\cost(\mc) = |S| \leq k$ and~$(G',k)$ is a yes-instance.
	
	Now, let $(G',k)$ be a yes-instance with an optimal clustering~$\mc$ with~$\cost(\mc)\leq k$.
	Note that we can assume that for each terminal vertex~$v\in V(T)$ both new real neighbors~$v_1$ and~$v_2$ are in the same cluster~$\mc(v)$ as~$v$, since otherwise at least one of the two real edges~$\{v,v_1\}$ or~$\ \{v,v_2\}$ has to be deleted and we could delete the edge between~$v$ and its single neighbor in~$G$ instead. 
	Moreover, if a cluster~$C\in \mc$ contains a terminal pair~$(u,v)$ and thus also~$u_1,u_2,v_1$ and~$v_2$, then we could get a better clustering~$\mc' = (\mc \setminus \{C\}) \cup \{\{u,u_1,u_2\}, (C \setminus \{u,u_1,u_2\})\}$ by deleting the real edge between~$u$ and its single neighbor in~$G$ instead of adding the non edges~$\{u_1,v_1\}$ and~$\{u_2,v_2\}$. 
	
	Hence, we know that the only edges that contribute to~$\cost(\mc)$ are real edges between clusters in~$\mc$, since per construction non edges are only present between vertices~$u_1,u_2$ and~$v_1,v_2$ for some terminal pair $(u,v)$.
	Thus, if we consider the set~$S = \{e \in E^+(C_1,C_2) \mid C_1,C_2 \in \mc,C_1 \neq C_2\}$ of real edges between clusters in~$\mc$, we have~$|S| \leq k$.
	Since for each terminal pair~$(u,v)$ the vertices~$u_1,u_2$ and~$v_1,v_2$ are in different clusters of~$\mc$, after deleting the edges of~$S$ in~$G$, no connected component in~$G-S$ contains both vertices of a terminal pair and thus $S$ is a solution for $(G,k)$.
\end{proof}

\begin{theorem}
	\CC is \NP-hard even if the graph consisting of the real edges is a binary tree.
\end{theorem}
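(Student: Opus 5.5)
The theorem follows by chaining the two reductions just given. We start from \TSAT, which is \NP-hard. Given a formula~$\Phi$, we apply the construction preceding \Cref{lem:reduction multicut on binary trees} to obtain an instance~$(G,T,k)$ of \MC with~$k=n+2m$. By that lemma, $\Phi$ is satisfiable if and only if~$(G,T,k)$ is a yes-instance. We then apply the construction preceding \Cref{lem:reduction binary trees} to obtain a \CC instance~$(G',k)$. By that lemma, $(G,T,k)$ is a yes-instance if and only if~$(G',k)$ is a yes-instance. Both constructions clearly run in polynomial time. The only remaining task is to check that the structural hypotheses match at the interface between them.

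\textbf{Step 1: $G$ is a binary tree whose terminals are all leaves.} The \MC instance must satisfy this for the second reduction to apply. Every terminal vertex is a labeled leaf of some variable gadget~$H_i$ or clause gadget~$F_j$, or it is the unlabeled fourth leaf~$w$ of a clause gadget. The gadgets are binary trees, and combining them into one binary tree attaches them only at their roots. Hence all terminals remain leaves of~$G$. This structural check is the step I expect to need the most care, namely that the arbitrary combination of gadget roots yields a binary tree (every vertex has at most two children) and leaves the leaves untouched. This holds if we combine the gadgets bottom-up, for instance by a balanced pairing of the gadget roots under new internal vertices.

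\textbf{Step 2: $G'^{(+)}$ is a binary tree.} In the second construction, each terminal leaf~$v$ receives exactly two new real neighbors~$v_1$ and~$v_2$, and every other added pair is fuzzy or a nonedge. So~$v$ becomes an internal vertex with two children, $G'^{(+)}$ stays acyclic and connected, and it is still a binary tree, as noted in the construction. Composing the two equivalences then gives a polynomial-time reduction from \TSAT to \CC restricted to instances in which~$G^{(+)}$ is a binary tree. This proves \NP-hardness.
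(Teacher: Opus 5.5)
Your argument is the paper's own. The theorem has no separate proof there; it is meant to follow by composing \Cref{lem:reduction multicut on binary trees} with \Cref{lem:reduction binary trees}. Your interface checks are the right ones: every terminal, including the unlabeled clause leaf~$w$, is a leaf; hanging $v_1,v_2$ below a terminal leaf keeps $G'^{(+)}$ a binary tree; and both maps are polynomial and keep~$k$. The trouble is upstream, in the first lemma. You reasonably take it as given, but it is false for the clause gadget as the paper specifies it, so the chain as written does not yet establish NP-hardness.

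In \Cref{fig:construction 1}, $x,y$ are children of~$p_1$, $p_1$ and~$z$ are children of~$p_2$, and $p_2$ and the unlabeled leaf~$w$ are children of the gadget root~$r$. Cutting $\{x,p_1\}$ and $\{p_2,r\}$ separates $(x,y)$ and $(z,w)$. It leaves $x$ isolated and puts $y,z$ in the component $\{y,p_1,p_2,z\}$, which contains no vertex outside~$F_j$, so all literal pairs involving~$F_j$ are separated as well. Taking an arbitrary edge of each~$H_i$ and these two cuts in every~$F_j$ therefore gives a multicut of size exactly $n+2m$ for \emph{every} formula. Hence the backward direction's claim that some literal leaf of~$F_j$ stays connected to its root fails.

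The repair is to swap $z$ and $w$: make $z$ a child of~$r$ and $w$ the sibling of~$p_1$. The $(z,w)$-path becomes $z,r,p_2,w$, and a solution uses exactly one cut among $\{x,p_1\},\{y,p_1\}$ and one on this path. Either $\{z,r\}$ is cut and the uncut one of $x,y$ remains attached to~$r$, or $z$ remains attached. Conversely, each single literal can be made the only attached leaf (e.g.\ $z$ via $\{x,p_1\},\{r,p_2\}$), so both directions of the lemma hold. All terminals are still leaves, so your Steps~1 and~2 carry over verbatim to the repaired construction.
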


\begin{corollary}
	\CC is \NP-hard even if the graph consisting of the real edges has degeneracy~$1$.
\end{corollary}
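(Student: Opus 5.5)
The statement is the corollary that \CC is \NP-hard even if $G^{(+)}$ has degeneracy~$1$. I would derive it directly from the preceding theorem, which shows \NP-hardness when the graph of real edges is a binary tree. Every tree, and more generally every forest, has degeneracy at most~$1$. Hence the instances produced by the reduction behind that theorem already lie in the restricted class, and no new construction is needed.

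The key steps, in order, are as follows.

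First, recall the chain of reductions. \Cref{lem:reduction multicut on binary trees} reduces \TSAT to \MC on binary trees in which every terminal is a leaf. \Cref{lem:reduction binary trees} then maps such an instance to a \CC instance $(G',k)$ in which $G'^{(+)}$ is still a binary tree. The reason is that each terminal leaf $v$ only receives two new pendant vertices $v_1,v_2$, which are attached by real edges; all other new pairs are fuzzy edges or nonedges.

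Second, observe that the degeneracy of any forest is at most~$1$. Every nonempty subgraph of a forest is again a forest, so it contains a vertex of degree at most~$1$: either a leaf or an isolated vertex. Equivalently, repeatedly removing leaves yields an ordering in which each vertex has at most one later neighbor. If the tree has at least one edge, the degeneracy is exactly~$1$. To meet a literal reading of ``degeneracy~$1$'', I would therefore note that $G'^{(+)}$ contains edges, which holds whenever the formula is nonempty.

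Third, compose the reductions. Both reductions run in polynomial time and \TSAT is \NP-hard, so \CC restricted to instances whose real-edge graph has degeneracy~$1$ is \NP-hard.

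There is no real obstacle here. The only point that needs a moment's care is confirming that the modified \MC instance keeps $G'^{(+)}$ acyclic. This holds because the added vertices $v_1,v_2$ have real degree~$1$, so attaching them cannot create a cycle.
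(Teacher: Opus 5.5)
Your proposal is correct and matches the paper's implicit argument: the corollary is read off from the binary-tree theorem, since the real-edge graph produced there is a tree and any tree with at least one edge has degeneracy exactly~$1$. One caveat on the dependency you call unproblematic: the clause gadget as drawn in the paper has $z$ as a sibling of the common parent of $x$ and $y$, and $w$ as a child of the gadget root. With that shape, two cuts (the edge above $z$'s parent plus one leaf edge at $x$ or $y$) separate both internal pairs and detach all three literal leaves from the rest of the tree, so the \textsc{3-SAT} chain needs $z$ and $w$ swapped to be sound; the star construction from \textsc{Clique} already gives $G'^{(+)}$ degeneracy~$1$ and is the more robust route to this corollary.
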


We now present a reduction from \CL to \CC to show that \CC remains hard even if~$G^{(+)}$ is a star.

\subparagraph{Construction:}
Let~$(G=(V,E),k)$ be an instance of \CL with~$|V|=n$.
We construct from~$(G,k)$ an instance~$(G'=(V',E^+,E^0),k')$ of \CC as follows.
We introduce a new vertex~$v^*$ and set~$V' = V \cup \{v^*\}$.
For each edge~$\{u,v\}\in E$ we add the pair~$\{u,v\}$ as a fuzzy edge to~$E^0$ and we set each pair~$\{u,v\} \in {V \choose 2}\setminus E$ as a nonedge in~$G'$.
Moreover, we add an edge~$\{u,v^*\}$ for each~$u\in V$ to~$E^+$.
We finish the construction by setting~$k' \coloneqq n-k$.
Note that~$G'^{(+)}$ is a star with center vertex~$v^*$, so~$G'^{(+)}$ has degeneracy~$d=1$ and~$h$-index~$h=1$. 

\begin{lemma}
	\label{lem:reduction for degeneracy two}
	$(G,k)$ is a yes-instance of \CL if and only if~$(G',k')$ is a yes-instance of \CC.
\end{lemma}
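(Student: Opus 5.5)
The plan is to prove both directions directly, using one structural fact about $G'$. Every real edge of $G'$ is incident to $v^*$, so each $u\in V$ has $v^*$ as its only real neighbor. Inside $V$, the pairs are fuzzy edges exactly when they are edges of $G$ and nonedges otherwise. Hence, for a clustering $\mc$ of $V'$, let $C^*$ denote the cluster containing $v^*$. Then
\[\cost(\mc)=|V\setminus C^*| + |E^-(C^*)| + \sum_{C\in\mc, C\neq C^*}|E^-(C)|,\]
where the first term counts the deleted star edges and the other terms count nonedges of $G$ inside clusters. All weights are $1$.

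For the forward direction, let $K$ be a clique of size at least $k$ in $G$; we may take $|K|=k$ exactly. We take the clustering with $C^*=K\cup\{v^*\}$ and all remaining vertices as singletons. Inside $C^*$, every pair of $K$ is a fuzzy edge and every pair involving $v^*$ is a real edge, so no nonedge is inside a cluster. The only cost is the $n-k$ star edges from $v^*$ to $V\setminus K$. Thus $\cost=n-k=k'$.

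For the backward direction, take a clustering $\mc$ with $\cost(\mc)\le n-k$. By \Cref{obs:cluster connected by real edges}, or directly, we may assume all vertices outside $C^*$ are singletons, since splitting them up never increases the cost. Next we repair $C^*$. Whenever $C^*\setminus\{v^*\}$ contains a nonedge $\{u,w\}$ of $G'$ (i.e., a nonedge of $G$), we move $u$ out into a singleton. This adds $1$ for the star edge $\{u,v^*\}$ and removes at least $1$ for the nonedge $\{u,w\}$, so the cost does not increase. We repeat until $C^*\setminus\{v^*\}$ is a fuzzy clique, i.e., a clique $K$ in $G$. Then the cost equals $n-|K|\le n-k$, so $|K|\ge k$, and $G$ has a clique of size $k$.

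The only delicate point is this exchange argument in the backward direction: removing a vertex must not increase the cost. It works because each removed vertex pays exactly one real edge and destroys at least one internal nonedge. The rest is bookkeeping.
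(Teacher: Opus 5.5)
Your proposal is correct and follows essentially the same route as the paper. The forward direction uses the same clustering ($K\cup\{v^*\}$ plus singletons), and the backward direction uses the same exchange argument: moving an endpoint of an internal nonedge into a singleton costs one star edge but saves at least one nonedge. Your extra normalization that all clusters other than $C^*$ are singletons just makes the final cost $n-|K|$ explicit, whereas the paper bounds $|C^*\cap V|\ge k$ directly from the deleted star edges.
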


\begin{proof}
	Let~$(G,k)$ be a yes-instance of \CL and let~$S\subseteq V$ be a clique of size~$k$ in~$G$.
	Consider the clustering~$\mc$ consisting of the cluster~$C^* \coloneqq S\cup \{v^*\}$ and clusters~$C_u \coloneqq \{u\}$ for each~$u\in V' \setminus (S\cup \{v^*\})$.
	Since~$S$ is a clique in~$G$, each edge between two vertices of~$S$ in~$G'$ is a fuzzy edge and thus no edge has to be inserted for the cluster~$C^*$. 
	For each cluster~$C_u$, the edge~$\{u,v^*\}$ has to be deleted, so~$\cost(\mc) = n-k = k'$ and thus~$(G',k')$ is a yes-instance of \CC.
	
	Conversely, let~$(G',k')$ be a yes-instance of \CC and let~$\mc$ be a clustering for~$G'$ with~$\cost(\mc)\leq k' = n-k$.
	Let~$C^*\in \mc$ be the cluster that contains the vertex~$v^*$.
	Each real edge between~$v^*$ and~$V\setminus C^*$ has to be deleted.
	Thus,~$C^*$ must contain at least~$k$ vertices of~$V$, since~$\cost(\mc) \leq n-k$.
	Now suppose that~$\{u,w\}$ is a nonedge in~$G'[C^*\setminus\{v^*\}]$. 
	Consider the clustering~$\widetilde{\mc} = \mc \setminus \{C^*\}\cup \{C^* \setminus \{w\}, \{w\}\}$.
	To obtain~$\widetilde{\mc}$ from~$C^*$, the real edge~$\{v^*,w\}$ has to be deleted, but the nonedge~$\{u,w\}$ must not be added.
	Thus, we have~$\cost(\widetilde{\mc}) \leq \cost(\mc)$, which implies that we can assume that $G'[C^*\setminus\{v^*\}]$ does not contain a nonedge.
	In other words, each edge between vertices of~$C^*\setminus \{v^*\}$ is a fuzzy edge in~$G'$ and thus per construction a (real) edge in~$G$.
	Therefore,~$C^*\setminus \{v^*\}$ is a clique of size at least~$k$ in~$G$ and~$(G,k)$ is a yes-instance of \CL.
\end{proof}

Since \CL is \NP-hard, \Cref{lem:reduction for degeneracy two} implies the following.

\begin{theorem}
	\label{thm:hardness h-index}
	\CC is \NP-hard even if the graph consisting of the real edges is a star.
\end{theorem}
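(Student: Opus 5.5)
The plan is to obtain \Cref{thm:hardness h-index} directly from \Cref{lem:reduction for degeneracy two}, which reduces \CL to \CC. Given an instance $(G,k)$ of \CL, the construction preceding that lemma builds $(G',k')$ as follows. It adds a single center vertex $v^*$. It makes every edge of $G$ a fuzzy edge, every nonedge of $G$ a nonedge of $G'$, and every pair $\{u,v^*\}$ a real edge. Finally it sets $k'=n-k$.

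The key steps, in order, are these.

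\begin{enumerate}
\item \emph{Size and running time.} The instance $(G',k')$ has $n+1$ vertices and $\Oh(n^2)$ vertex pairs, each of unit or zero weight. Hence it is computable in polynomial time.
\item \emph{Structure of the real edges.} The real edges are exactly the pairs $\{u,v^*\}$ with $u\in V$. Therefore $G'^{(+)}$ is a star with center $v^*$.
\item \emph{Equivalence.} By \Cref{lem:reduction for degeneracy two}, $(G,k)$ is a yes-instance of \CL if and only if $(G',k')$ is a yes-instance of \CC.
\item \emph{Conclusion.} Since \CL is \NP-hard, \CC is \NP-hard even when restricted to instances whose real-edge graph is a star.
\end{enumerate}

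Step~3 carries the real content, and it is already handled by the lemma. Its two directions run as follows. A clique $S$ yields the clustering with cluster $S\cup\{v^*\}$ and singletons otherwise. This costs exactly $n-k$, because only the star edges to vertices outside $S$ are cut. Conversely, an optimal clustering keeps at least $k$ vertices of $V$ with $v^*$. Any nonedge inside that cluster can be removed by splitting off one endpoint, which does not increase the cost. So the vertices of the cluster form a clique of $G$.

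There is no substantial obstacle beyond checking that the weights stay in $\Z$ and that $k'$ is nonnegative. We may assume $k\le n$, since otherwise the \CL instance is trivially a no-instance. So the theorem follows from the lemma in a few lines.
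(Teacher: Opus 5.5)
Your proposal is correct and follows essentially the paper's own route. The paper uses the same \CL reduction (center vertex $v^*$, edges of $G$ turned fuzzy, nonedges kept, $k'=n-k$) and the equivalence of \Cref{lem:reduction for degeneracy two}, whose proof splits off an endpoint of any nonedge in the cluster of $v^*$ exactly as you sketch.
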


\begin{corollary}
	\label{corr:hardness h-index}
	\CC is \NP-hard even if the graph consisting of the real edges has $h$-index~$1$.
\end{corollary}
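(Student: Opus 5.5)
The statement is Corollary~\ref{corr:hardness h-index}, and it follows from the star construction preceding Lemma~\ref{lem:reduction for degeneracy two}. The plan is to check that the real edge graph produced there has $h$-index~$1$ and then invoke Theorem~\ref{thm:hardness h-index}.

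Recall that the $h$-index of a graph is the largest $h$ such that the graph has at least $h$ vertices of degree at least $h$. In the construction, $G'^{(+)}$ is a star with center $v^*$ and leaf set $V$. Its only vertex of degree at least~$2$ is $v^*$, so fewer than two vertices have degree at least~$2$, and the $h$-index is at most~$1$. We may assume $n\ge 1$; indeed, instances of \CL with $n<2$ or $k\le 1$ are trivially decidable and can be mapped to fixed equivalent instances. Under this assumption $v^*$ has degree at least~$1$, so the $h$-index is exactly~$1$.

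Lemma~\ref{lem:reduction for degeneracy two} shows that the construction is a correct reduction, and it is clearly computable in polynomial time. Since \CL is \NP-hard, \CC is \NP-hard on instances whose real edge graph has $h$-index~$1$. Nothing here is difficult; the only point needing care is the degenerate small cases, which are handled by the trivial-instance remark above.
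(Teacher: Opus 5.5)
Your proposal is correct and follows the paper's own route: the paper also observes that the real-edge graph produced by the reduction from \textsc{Clique} is a star centered at $v^*$, hence has $h$-index~$1$, and obtains the corollary directly from \Cref{lem:reduction for degeneracy two} and \Cref{thm:hardness h-index}. Your handling of the degenerate small cases (so that the $h$-index is exactly~$1$ rather than~$0$) is a harmless refinement that the paper leaves implicit.
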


\section{Conclusion}
\label{sec:conclusion}
Motivated by the fact that \CC~presumably does not admit a polynomial problem kernel for the solution cost parameter~$k$, we continued the investigation of the influence of structural parameters of the fuzzy edge graph on the problem complexity.
We showed that combining~$k$ with the degeneracy~$d$ or closure~$c$ of the fuzzy edge graph leads to polynomial kernels, while cases where the real edges and nonedges have a rather restrictive local structure remain hard.
A natural direction for further research would be to improve the sizes of the kernels and to find FPT-algorithms with good running times for the parameters~$k+d$ and~$k+c$.
While our results do not rule out polynomial kernels for~$k$ together with the maximum degree~$\Delta$, the presented reduction from \MC to \CC with bounded degeneracy and closure increases the maximum degree only by a constant factor.
Thus, showing that \MC does not admit a polynomial kernel for~$k+\Delta$ would also transfer the result to \CC. We leave the (non)existence of such a kernel as an open problem.

\newpage
\bibliographystyle{plainurl}

\end{document}